\def\diagram#1#2#3{
\foreach \ifrom/\ito/\jfrom/\jto/\c in #3
  \foreach \i in {\ifrom,...,\ito}
    \foreach \j in {\jfrom, ...,\jto}{
      \ifthenelse{\equal{\c}{red}}
      {\draw[draw=none, fill=red] (\i,\j) rectangle (\i+1,\j+1);}
      {\draw[draw=none, fill=none,pattern=crosshatch] (\i,\j) rectangle (\i+1,\j+1);}
      }
\draw (0,0) grid (#1,#2);
}
\newcommand{\Poly}[2]{\mathcal{P}_{#1}(#2)}
\newcommand{\h}{\mathrm{h}}
\newcommand{\ent}{\mathrm{H}_\infty}
\newcommand{\ppoly}{\mathsf{P}/{\mathsf{poly}}}
\title{
Hilbert Functions and Low-Degree Randomness Extractors
}
 \author{
 Alexander Golovnev\thanks{Georgetown University. Email: \texttt{alexgolovnev@gmail.com}. Supported by the NSF CAREER award (grant CCF-2338730). 
}
 \and
 Zeyu Guo\thanks{The Ohio State University. Email: \texttt{zguotcs@gmail.com}.}
 \and
 Pooya Hatami\thanks{The Ohio State University. Email: \texttt{pooyahat@gmail.com}. Supported by NSF grant CCF-1947546.} 
 \and
 Satyajeet Nagargoje\thanks{Georgetown University. Email: \texttt{satyajeetn2012@gmail.com}. Supported by the NSF CAREER award (grant CCF-2338730).}
 \and
 Chao Yan\thanks{Georgetown University. Email: \texttt{cy399@georgetown.edu}.}
 }
\date{}
\begin{document}
\maketitle

\begin{abstract}
%For $S\subseteq \F^n$, consider the linear space of restrictions of degree-$d$ polynomials to~$S$. The~Hilbert function of $S$, denoted~$\h_S(d,\F)$, is the dimension of this space. We prove a tight lower bound on the Hilbert function of subsets $S$ of arbitrary finite grids in $\F^n$ in terms of~$|S|$. 
For $S\subseteq \F^n$, consider the linear space of restrictions of degree-$d$ polynomials to~$S$. The~Hilbert function of $S$, denoted~$\h_S(d,\F)$, is the dimension of this space. We obtain a tight lower bound on the smallest value of the Hilbert function of subsets $S$ of arbitrary finite grids in $\F^n$ with a fixed size $|S|$. We achieve this by proving that this value coincides with a combinatorial quantity, namely the smallest number of low Hamming weight points in a down-closed set of size $|S|$. 

Understanding the smallest values of Hilbert functions is closely related to the study of degree-$d$
closure of sets, a notion introduced by Nie and Wang (Journal of Combinatorial Theory, Series A, 2015). We use bounds on the Hilbert function to obtain a tight bound on the size of degree-$d$ closures of subsets of $\F_q^n$, which answers a question posed by Doron, Ta-Shma, and Tell (Computational Complexity, 2022).

We use the bounds on the Hilbert function and degree-$d$ closure of sets to prove that a random low-degree polynomial is an extractor for samplable randomness sources. Most notably, we prove the existence of low-degree extractors and dispersers for sources generated by constant-degree polynomials and polynomial-size circuits. Until recently, even the existence of arbitrary deterministic extractors for such sources was not known.
\end{abstract}
\pagenumbering{roman}
\thispagestyle{empty}
\newpage
\setcounter{tocdepth}{3}
\tableofcontents
\newpage
\pagenumbering{arabic}

\section{Introduction}

\paragraph{Hilbert Functions.}
Low-degree polynomials are fundamental objects in theoretical computer science, and their properties are extensively studied due to their important role in areas such as error correcting codes and  circuit lower bounds. Let $d\geq 0$ be an integer, $\F$ be a field, and $S\subseteq \F^n$ be a set. Each degree-$d$ $n$-variate polynomial $p$ over $\F$ can  be naturally viewed as a map $p:\F^n\rightarrow \F$, and hence also defines a map $p\vert_S: S\rightarrow \F$. Considering the linear space of all such maps in $\F^S$, which is a subspace of the space of all maps from $S$ to $\F$, allows one to tap into a wide array of algebraic techniques in order to prove useful facts about the set $S$. This approach was for example utilized in complexity theory famously in the work of Smolensky~\cite{Smolensky87}, where proving bounds on the dimension of the aforementioned subspace was used to obtain lower-bounds for $\AC^0[\oplus]$ circuits computing the indicator function of the set $S$, for various $S\subseteq \{0,1\}^n$. The dimension of the space consisting of $p|_S$ for all degree-$d$ polynomials $p$ is indeed a well-studied and classical concept in algebraic geometry known as the (affine) Hilbert function of $S$, denoted by $\h_S(d, \F)$. 
Hilbert functions encode important geometric and algebraic information, such as the dimension, degree, and regularity of varieties, in a more general context.
%They are also vital in the construction of Hilbert schemes, a space that parameterizes geometric objects, allowing the study of these objects in a continuous fashion.
% remove this sentence as I'm not sure if it is accurate - zeyu

Hilbert functions have previously been studied in complexity theory due to their applications in circuit lower bounds, in particular for $\AC^0[\oplus]$ circuits, that were established by Smolensky~\cite{Smolensky87} and Razborov~\cite{Razborov87}. Such applications require finding sets $S\subseteq \{0,1\}^n$ where the Hilbert function takes a very large value. However, it is also interesting to prove general lower bounds or find lower-bounding methods for arbitrary sets $S$. An example of such a result is the work of Moran and Rashtchian~\cite{moran2016shattered}, who showed upper and lower bounds on $\h_S(d,\F)$ for $S\subseteq \{0,1\}^n\subseteq \F^n$ via various concepts in VC theory. \cite{moran2016shattered} treated the Hilbert function as a complexity measure of the set~$S$ and compared it to measures that arise naturally in learning theory, including ``shattering'' and ``strong shattering'' values. 

Suppose $r>0$ is an integer. It is natural to wonder, what the extreme values of $\h_S(d,\F)$ are, among all sets $S$ of size $|S|=r$. It is not hard to show that the maximum value is equal to $\min\left( r, \h_{\F^n}(d,\F)\right)$ when $S\subseteq \F^n$. For example, the maximum value of the Hilbert function of a set $S\subseteq \F_2^n$ of size $r$ is $\min(r, \binom{n}{\leq d})$.  

On the other hand, finding the true smallest value of $\h_S(d,\F)$ is a natural and intriguing question even when $S$ is restricted to subsets of some finite and structured set in~$\F^n$.
\begin{question}\label{question:hilbert}
Let $0\leq d \leq n$ be integers, $\F$ be a field, and $A=A_1\times \cdots \times A_n\subseteq \F^n$ where $A_i\subseteq \F$ are finite sets. For any $r\leq |A|$, what is the smallest value of $\h_S(d, \F)$ among all subsets $S\subseteq A^n$ of cardinality $|S|=r$?
\end{question}
This question has been answered in the case of $\F=\F_2$ and $A=\F_2^n$ by Keevash and Sudakov~\cite{keevash2005set} and Ben-Eliezer, Hod, and Lovett~\cite{ben2012random}, and later generalized to $\F=\F_p$ and $A=\F_p^n$ by Beame, Oveis Gharan, and Yan~\cite{beame2020bias}. For simplicity, let $r=p^k$ for some $k\geq 0$. \cite{beame2020bias} proved that the smallest value of $\h_S(d,\F_p)$ with $|S|=r$ is equal to the number of degree-$\leq d$ monomials on $k$ variables, for example when $p=2$, this is equal to simply $\binom{k}{\leq d}=\binom{\log_2 r}{\leq d}$. 

We prove a more general result that answers \cref{question:hilbert}  for arbitrary finite grids $A\subseteq \F^n$ in arbitrary fields $\F$. We show that the smallest values of Hilbert functions are exactly determined by an extremal combinatorial question about the number of low-Hamming-weight elements in down-closed sets, which we solve by building on the work of Beelen and Datta~\cite{BD18}.

The prior works discussed above were motivated by applications in bounding the list-size of the Reed-Muller codes and obtaining certain extensions of Frankl–Ray-Chaudhuri–Wilson theorems on cross-intersecting sets. In contrast, in this paper, we are interested in \cref{question:hilbert} due to its applications in pseudorandomness, particularly in randomness extraction.

%Clearly, when $d=n$ the answer is always $r$. Otherwise, one may suspect that when $r$ is a power of $2$, then the smallest Hilbert function is achieved by $S$ that is an affine subspace of dimension $\log_2 r$. Note that for such sets $S$, we have $\h_S(d,\F_2)= \binom{\log_2 r}{\leq d}$. One of our main contributions is a proof of this bound. In fact, we prove a more general result that answers the above question for subsets of finite grids in arbitrary fields. We prove this bound on Hilbert functions by reducing the problem to an extremal combinatorial question about the number of low-hamming weight elements in down-closed sets. 

Understanding the smallest values of Hilbert functions is closely related to the study of \emph{degree-$d$ closure} of sets, a notion introduced by Nie and Wang \cite{nie2015hilbert}.

\begin{definition}\label{def:Span}
The degree-$d$ closure of a set $T\subseteq \F^n$ is defined as 
\[
\cl_d(T)\coloneqq \{a \in \F^n\ | \ \text{ for every degree-$d$ polynomial $f$, }  f |_T\equiv 0 \Rightarrow  f(a)=0\} \;.
\]
\end{definition}
Equivalently, $\cl_d(T)$ is the set of all points $a\in \F^n$ such that the values $f(a)$ of a degree-$d$ polynomial~$f$ are determined by $f|_T$.

%Formally, the degree-$d$ closure $\cl_d(T)$ of a set $T\subseteq \F^n$ over a finite field $\F$ is the set of all points $a\in\F^n$ such that any degree-$d$ polynomial vanishing on $T$ also vanishes at $a$. 
The existence of a small set with a large degree-$d$ closure has application to hitting-set generators for polynomials \cite{DTT22}. As an application of our answer to \cref{question:hilbert}, we obtain an upper bound on the size of $\cl_d(T)$ in terms of $|T|$. Our bound in fact yields an optimal way of creating a small set with a large degree-$d$ closure. 
%\zeyu{The optimal bound is in closure.tex but perhaps not constructive due to the way the proofs are written. It can be made constructive though by modifying the proofs.}

%See \cref{clm:span-size} for details.\footnote{Our bound in fact yields an optimal way of creating a small set with a large degree-$d$ closure. This will appear in the full version of our paper.\snote{we need to do this!} \zeyu{The optimal bound is in closure.tex but perhaps not constructive due to the way the proofs are written. It can be made constructive though by modifying the proofs.}}

%\zeyu{Bounds on Hilbert functions can also be used to bound the size of the degree-$d$ closure of a set $S$ (see \cref{clm:span-size}). (The notion of the degree-$d$ closure \cite{nie2015hilbert} is used in our extractor result, but not necessarily the disperser result.)
%This notion has application in the study of hitting-set generators \cite{DTT22}.
%(This paper only mentions degree-$d$ closure, but I find it closely related to Hilbert functions.)}

Futhermore, \cref{question:hilbert} has direct implications to the theory of randomness extractors, which we discuss next.

\paragraph{Randomness Extractors.}
The theory of randomness extractors is an active research area that was initiated in \cite{SANTHA198675, blum1986independent} with the motivation of simulating randomized algorithms with access to ``weak'' randomness sources. 
The main objective of this theory is to design \emph{extractors} that are capable of purifying imperfect randomness sources into high-quality random bits or bit sequences. Extractors and related objects such as \emph{dispersers, samplers, and condensers} have since found numerous applications in constructing other pseudorandom objects such as pseudorandom generators~\cite{NISAN199643} and expander graphs~\cite{WZ99}, as well as applications in other areas of theoretical computer science and mathematics including cryptography~\cite{dodis2004possibility}, combinatorics~\cite{li2023two}, hardness of approximation~\cite{v003a006}, error correcting codes~\cite{ta2001extractor}, and metric embeddings~\cite{indyk2007uncertainty}. 

A deterministic extractor for a family $\mathcal{X}$ of distributions over $\{0,1\}^n$ is a map $f:\{0,1\}^n\rightarrow \{0,1\}^m$ such that for any $\mathbf{X}\in \mathcal{X}$, $f(\mathbf{X})$ is close to the uniform distribution in statistical distance. It is common to measure the amount of randomness in a random variable $\mathbf{X}$ by its min-entropy, defined $\ent(\mathbf{X})\coloneqq -\log_2 \max_{x\in \{0,1\}^n} \Pr[\textbf{X}=x]$. It is easy to show that no deterministic extractor can extract from general $n$-bit randomness sources of min-entropy as high as $n-1$~\cite{chor1988unbiased}. As a result, researchers in the area have explored two directions. Much of the focus in the area has been given to the more powerful \emph{seeded extractors} that have access to an additional short purely random seed. This article contributes to another line of work that has extensively investigated the extra assumptions on the randomness sources that allow for explicit deterministic extractors and dispersers to exist. A widely studied class of sources in this latter direction, introduced in \cite{MR1931802} is ``samplable sources'', where the sources of randomness are distributions sampled by applying a low-complexity map (e.g., a decision forest, local map, $\NC^0$ circuit, $\AC^0$ circuit, an affine or a low-degree map) to the uniform distribution.
Unfortunately, constructing explicit extractors even for sources samplable by really low-complexity maps has been quite challenging, and for example all the known constructions of extractors for local sources require quite high min-entropy of $\Omega(\sqrt{n})$~\cite{viola2014extractors,DW11localextractors}. Due to the difficulty of constructing good explicit extractors and motivated by applications in complexity theory such as circuit lower bounds~\cite{li20221} and lower bounds for distribution-sampling~\cite{viola2012complexity}, researchers have considered the seemingly easier task of proving the existence of low-complexity extractors~\cite{viola2005complexity,goldreich2015randomness,cheng2018randomness,vadhan2004constructing,bogdanov2013sparse,dodis2021doubly,huang2022affine,cohen2015two,alrabiah2022low}. 

The state of affairs is much worse when it comes to randomness extraction from sources sampled by more powerful maps such as $\AC^0[\oplus]$ or low-degree $\F_2$-polynomial maps. In this case obtaining nontrivial explicit constructions and even non-explicit low-complexity extractors remains open. In fact, the same problems are open even in the case of dispersers. Here a map $f:\{0,1\}^n \rightarrow \{0,1\}$ is a \emph{disperser} for $\mathcal{X}$ if for every source $\mathbf{X}\in \mathcal{X}$, the support of $f(\mathbf{X})$ is $\{0,1\}$. On the positive side, Chattopadhyay, Goodman, and Gurumukhani~\cite{chattopadhyay2024extractors}, recently proved the existence of deterministic (not necessarily low-complexity) extractors for low-degree $\F_2$-polynomial sources with logarithmic min-entropy.

\subsection{Our Results on Hilbert Functions} 
We obtain our answer to \cref{question:hilbert} by first reducing it to a purely combinatorial problem. In particular, via an algebraic geometric argument, we prove the following theorem which states that the minimum value of Hilbert functions over subsets of a grid is exactly captured by a combinatorial quantity related to down-closed sets. (A set $T\subseteq \N^n$ is said to be down-closed if $T$ is closed under decreasing any coordinates of its elements.)

\begin{theorem}[See \cref{cor:characterization}]\label{prop:hilbert}
Let $\F$ be a field, and $A_1,\dots, A_n\subseteq \F$ be finite sets of size $|A_i|=r_i$. Define $A=A_1\times \cdots \times A_n$. For every $k\leq |A|$, 
\[
\min_{S\subseteq A: |S|=k} \h_S(d,\F)= \min_{\textup{down-closed } T\subseteq F: |T|=k} \left|T_{\leq d}\right|\;,
\]
where $F=\prod_i \{0,\dots, r_i-1\}$ and $T_{\leq d}=\{x\in T: \sum_i x_i \leq d\}$. 
\end{theorem}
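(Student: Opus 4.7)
The plan is to prove the two inequalities separately, relating subsets $S\subseteq A$ to down-closed ``staircase'' sets $T\subseteq F$.

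\emph{Direction $\ge$.} Fix a degree-compatible monomial order $\prec$ on $\F[x_1,\dots,x_n]$ (for concreteness, graded reverse lex). Given $S\subseteq A$ with $|S|=k$, let $T\subseteq \mathbb{N}^n$ be the set of exponent vectors of the standard monomials of $I(S)$ under $\prec$. Standard Gröbner-basis facts yield three things: $T$ is down-closed (as the complement of the upward-closed monomial set $\mathrm{in}_\prec(I(S))$), $|T|=\dim_\F \F[\mathbf{x}]/I(S)=|S|=k$, and because $\prec$ is degree-compatible, $\h_S(d,\F)=|T_{\le d}|$. The containment $I(A)\subseteq I(S)$ forces $x_i^{r_i}\in\mathrm{in}_\prec(I(S))$ for each $i$ (since $\prod_{a\in A_i}(x_i-a)\in I(A)$ has leading monomial $x_i^{r_i}$), so every standard monomial has $i$-th exponent at most $r_i-1$, i.e.\ $T\subseteq F$. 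This exhibits a legal witness $T$ for the right-hand minimum.

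\emph{Direction $\le$.} Enumerate $A_i=\{a_{i,0},\dots,a_{i,r_i-1}\}$, and for any down-closed $T\subseteq F$ put $S_T:=\{(a_{1,\alpha_1},\dots,a_{n,\alpha_n}):\alpha\in T\}$, so $|S_T|=|T|$. Introduce the Newton polynomials $p_\alpha(\mathbf{x}):=\prod_i\prod_{j=0}^{\alpha_i-1}(x_i-a_{i,j})$, of total degree exactly $|\alpha|$, for $\alpha\in F$. The observation $p_\alpha(a_{1,\beta_1},\dots,a_{n,\beta_n})\ne 0 \iff \alpha\le\beta$ componentwise (immediate from the product form) has two consequences: $\{p_\alpha:\alpha\in F\}$ is a degree-compatible basis of $\F[\mathbf{x}]/I(A)$ (its evaluation matrix on $A$ is triangular with nonzero diagonal), and $p_\alpha|_{S_T}\equiv 0$ whenever $\alpha\notin T$ (no $\beta\in T$ can dominate a vector outside the down-closed set $T$). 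Every $f$ of degree $\le d$ then reduces modulo $I(A)$, without increasing degree, to a combination $\sum_{\alpha\in F,\,|\alpha|\le d}c_\alpha p_\alpha$; restriction to $S_T$ kills all terms with $\alpha\notin T$, giving $f|_{S_T}\in\mathrm{span}\{p_\alpha|_{S_T}:\alpha\in T_{\le d}\}$ and hence $\h_{S_T}(d,\F)\le|T_{\le d}|$.

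The main technical point, common to both directions, is the ``degree-preserving reduction'' claim: any polynomial of degree $\le d$ admits a representation modulo the relevant ideal using only standard monomials (respectively, $p_\alpha$'s) of degree $\le d$. I would handle it by induction on total degree, repeatedly applying $x_i^{r_i}\equiv x_i^{r_i}-\prod_{a\in A_i}(x_i-a)\pmod{I(A)}$, which strictly lowers the total degree of any monomial containing the factor $x_i^{r_i}$; in the Gröbner version, the same conclusion follows from the fact that in a graded order the leading term of every reducer has the full degree of the polynomial being reduced. Once this degree-compatibility is pinned down, the rest of the argument reduces to triangular-matrix book-keeping on the $\{p_\alpha\}$ basis.
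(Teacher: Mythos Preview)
Your argument is correct. The $\ge$ direction is exactly what the paper does: pass from $S$ to the set of standard monomials under a degree-compatible order and invoke the standard facts (the paper's \cref{lem:standard-cardinality}) together with $T\subseteq F$.

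For the $\le$ direction you take a somewhat different, more elementary route than the paper. The paper (\cref{lem:realizable}) takes the same set $S_T=\sigma_A(T)$ but proves the \emph{stronger} statement $\mathrm{SM}(S_T)=T$, by writing down the ideal generated by the Newton polynomials $p_\alpha$ with $\alpha\notin T$ together with the grid polynomials $\prod_j(X_i-a_{i,j})$, computing its zero set and its standard monomials, and matching cardinalities. Your approach bypasses this: you use the very same Newton polynomials $p_\alpha$, but only need the vanishing $p_\alpha|_{S_T}\equiv 0$ for $\alpha\notin T$ together with the unitriangular change of basis between $\{x^\alpha:\alpha\in F\}$ and $\{p_\alpha:\alpha\in F\}$ to directly bound $\h_{S_T}(d,\F)\le |T_{\le d}|$. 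This is sufficient for the inequality at hand and avoids the variety/Gr\"obner calculation. The paper's version, on the other hand, yields the exact identification of the standard monomials of $S_T$, which it reuses later (e.g., in \cref{lem:A-to-F} and the tightness result \cref{thm:closure-tightness}); your shortcut would not immediately give that.

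One small remark on your ``main technical point'': the reduction modulo $I(A)$ is degree-preserving for the trivial reason that the reducers $\prod_{a\in A_i}(x_i-a)$ each have leading monomial equal to their full degree, so replacing $x_i^{r_i}$ strictly lowers degree; and the subsequent passage from the monomial basis $\{x^\alpha:\alpha\in F,\,|\alpha|\le d\}$ to $\{p_\alpha:\alpha\in F,\,|\alpha|\le d\}$ is an honest equality of spans of \emph{polynomials} (unitriangular with respect to $\le_P$, since $p_\alpha=x^\alpha+\sum_{\gamma<_P\alpha}c_\gamma x^\gamma$ with all $\gamma\in F$), so no further reduction is needed there.
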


Let $I$ be the ideal of $\F[X_1,\ldots,X_n]$ associated with a set $S\subseteq A$, that is, the set of all polynomials vanishing on~$S$.
%\snote{somewhere here our discussion seems to switch to algebraically closed fields which is not clear from the context.} \zeyu{Good point. But there are versions of Hilbert's Nullstellensatz that apply to varieties over non-algebraically-closed fields, which have to be defined more carefully. Anyway, I am just describing the ideas informally so don't want to get into details.}\snote{yes, but even considering $S\subseteq F_p^n$ as a zero-dimensional variety might be confusing at this point, I think. I'd suggest rephrasing this. For example, instead of the last three sentences in this paragraph we could say something along the lines of ``The classical results of algebraic geometry (such as Hilbert's Nullstellensatz) establish close connections between the structure of~$S$ and the structure of~$I$, which allows us to focus on studying~$I$.''}\zeyu{Sounds good. I changed the text.}
%We consider $S$ as a zero-dimensional variety, which is simply a finite collection of points.
%By Hilbert's Nullstellensatz, understanding the structure of $S$ is equivalent to understanding the structure of the ideal $I$. However, the structure of $I$ is generally complex.
Classical results in algebraic geometry (such as Hilbert's Nullstellensatz) establish close connections between the structure of~$S$ and the structure of~$I$, which allows us to focus on studying~$I$.

The proof of \cref{prop:hilbert} is based on the idea that the ideal $I$ can be reasonably approximated by another ideal, \emph{the ideal of leading terms of $I$}. This approximation preserves important information about $I$, and consequently, about $S$ as well.
In particular, when the ideal of leading terms of $I$ is defined with respect to a specific total order of monomials compatible with the total degree, it can be shown that such an approximation preserves the value of the Hilbert function. One advantage of working with the ideal of leading terms is that it is a \emph{monomial ideal}, that is, an ideal generated by monomials, whose relatively simple structure can be analyzed using combinatorial tools.

We remark that the concept of transforming a general ideal into a monomial ideal is closely related to the theory of Gr\"obner bases, which serves as a basis of computational algebraic geometry. For a detailed discussion, see, e.g., \cite{BM93}. This concept is also used in Smolensky’s algebraic method for proving circuit lower bounds \cite{Smolensky93}.

\cref{prop:hilbert} allows us to reduce the problem of determining the smallest value of Hilbert function of a set of size $k$ to understanding the smallest number of low-Hamming-weight points in down-closed sets of the same size. We then solve this combinatorial problem by proving that the minimum is obtained by the down-closed set $\mathrm{M}_F(k)$ which is defined as the set of $k$ lexicographically first elements of $F$. 
\begin{theorem}[See \cref{thm:hilbertbound}]
Let $1\leq r_1\leq \cdots\leq r_n$ be integers and let $F=\prod_{i=1}^n \{0,\ldots, r_i-1\}$.  Then 
\[
\min_{\textup{down-closed } T\subseteq F} \left|T_{\leq d}\right| = |\mathrm{M}_F(k)_{\leq d}|\;. 
\]
\end{theorem}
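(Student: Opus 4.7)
The plan is to prove this by induction on the dimension $n$. The base case $n=1$ is immediate, since the only down-closed subset of $\{0,\ldots,r_1-1\}$ of size $k$ is $\{0,1,\ldots,k-1\} = \mathrm{M}_F(k)$. For the inductive step, I would slice the grid by the first coordinate: write $F = F_1 \times F'$ with $F_1 = \{0,\ldots,r_1-1\}$ and $F' = \prod_{i \ge 2}\{0,\ldots,r_i-1\}$, and for a down-closed $T \subseteq F$ of size $k$, set $T_a := \{y \in F' : (a,y) \in T\}$. Down-closedness of $T$ forces each $T_a$ to be down-closed in $F'$ and yields $T_0 \supseteq T_1 \supseteq \cdots \supseteq T_{r_1-1}$; in particular the slice sizes $k_a := |T_a|$ form a non-increasing sequence in $[0,|F'|]$ summing to $k$. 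Since the weight of $(a,y)$ in $F$ is $a+|y|$, we have $|T_{\le d}| = \sum_a |(T_a)_{\le d-a}|$, and applying the inductive hypothesis to each $T_a$ gives
\[
|T_{\le d}| \;\ge\; \sum_{a=0}^{r_1-1} \bigl|\mathrm{M}_{F'}(k_a)_{\le d-a}\bigr|.
\]

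The key remaining step is to show that for every non-increasing sequence $(k_a)$ with $k_a \le |F'|$ and $\sum_a k_a = k$,
\[
\sum_{a} \bigl|\mathrm{M}_{F'}(k_a)_{\le d-a}\bigr| \;\ge\; \bigl|\mathrm{M}_F(k)_{\le d}\bigr|.
\]
Writing $k = q|F'|+s$ with $0 \le s < |F'|$, the right-hand side is realized by the slice distribution $k_a^* = |F'|$ for $a < q$, $k_q^* = s$, $k_a^* = 0$ for $a > q$, and $(k_a^*)$ is the ``most front-loaded'' non-increasing sequence satisfying the constraints. Exchanging the order of summation and introducing the conjugate $m_j := |\{a : k_a \ge j\}|$ together with the weights $w_j$ of the lex-ordered elements of $F'$ converts this into the inequality $\sum_j \min(m_j, e_j) \ge \sum_j \min(m_j^*, e_j)$, where $e_j := \max(0, d-w_j+1)$, and where $(m_j)$ majorizes $(m_j^*)$ (because $(k_a^*)$ majorizes every admissible $(k_a)$ and conjugation reverses majorization).

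The main obstacle is that the cap sequence $(e_j)$ is not monotone in $j$: the weights $w_j$ drop abruptly whenever the lex order on $F'$ transitions from one ``block'' to the next (e.g., from $(a,r_n-1)$ to $(a+1,0)$), so naive cell-by-cell swap arguments fail and a symmetric Schur-concavity argument would give the wrong direction. To handle this I would apply a combinatorial compression argument inspired by the work of Beelen and Datta~\cite{BD18}, rearranging the $(m_j)$-mass between constant-weight layers of $F'$ via coordinated block-moves. Within each constant-weight layer the cap $e_j$ is constant, and across layers the moves are designed to exploit the structured interaction between down-closedness and the lex ordering, so that the overall sum $\sum_j \min(m_j, e_j)$ never decreases under concentration of mass onto small-$j$ layers---precisely the inequality required to complete the induction.
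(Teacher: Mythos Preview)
Your inductive framework (slicing by the first coordinate and applying the hypothesis to each slice $T_a$) is sound, and the dual reformulation
\[
\sum_a |\mathrm{M}_{F'}(k_a)_{\le d-a}| \;=\; \sum_j \min(m_j, e_j)
\]
is correct. The difficulty is that the step you flag as ``the main obstacle'' is precisely where all the content lies, and you do not actually carry it out. After the inductive hypothesis, the inequality you still need,
\[
\sum_{a} \bigl|\mathrm{M}_{F'}(k_a)_{\le d-a}\bigr| \;\ge\; \bigl|\mathrm{M}_F(k)_{\le d}\bigr|
\quad\text{for all non-increasing }(k_a),
\]
is itself the statement of the theorem in dimension $n$ restricted to the ``staircase'' down-closed sets $T=\bigcup_a \{a\}\times \mathrm{M}_{F'}(k_a)$ (every admissible $(k_a)$ arises this way). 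So the slicing has not reduced the problem---it has re-expressed it. As you yourself note, the objective is not monotone along front-loading moves (for example, in $\{0,1\}^3$ with $d=1$ and $k=4$ one gets values $3,4,3$ along $(2,2)\to(3,1)\to(4,0)$), so a swap or Schur-type argument does not work, and ``a compression inspired by Beelen--Datta'' is not a proof but a label for the missing argument.

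For comparison, the paper does \emph{not} induct on $n$. In the general case it invokes Beelen--Datta's theorem on up-closures, $|\nabla(\mathrm{L}_{\le d}(r))|\le|\nabla(S)|$, as a black box and couples it with a short duality: $\Delta(S)=F\setminus\nabla(S)$ is the maximal down-closed set with prescribed $d$-layer, and $\Delta(\mathrm{L}_{\le d}(r))$ is always a lex-initial segment $\mathrm{M}(k')$. A monotonicity contradiction on $f(k)=|\mathrm{M}(k)_{\le d}|$ then finishes the argument. (The separate $q=2$ proof inducts on $(k,d)$ and relies on two lemmas comparing lex-initial and \emph{contiguous} $k$-sets, which are what replace your missing compression.) If you want to salvage the slice-by-$n$ approach, the substantive lemma you would need is exactly of that flavor: a comparison between $\mathrm{M}_{F'}(k_a)_{\le d-a}$ and a shifted lex block of $\mathrm{M}_F(k)$, and that lemma is not lighter than what the paper proves.
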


In the case of $r_1=\cdots=r_n=2$, we prove the above theorem via an elementary combinatorial argument, that via a series of operations turns any set of $k$ elements into $M_F(k)$ without increasing the number of elements of Hamming weight $\leq d$. We prove the general case by building on a recent result of Beelen and Datta~\cite{BD18}. This result generalizes the work of Wei \cite{Wei91} and Heijnen--Pellikaan \cite{HP98, Hei99} in studying the generalized Hamming weights of certain linear codes.

We record the following corollary of our results specialized to finite fields which generalizes the bounds due to \cite{keevash2005set,ben2012random,beame2020bias}.
\begin{corollary}[See \cref{cor:boundsOnHilbert}]\label{thm:intro-main-hilbert-1}
For every prime power $q$, and $n,k,d\in \N$ where $k\leq q^n$, we have 
    \[
    \min_{S\subseteq \F_q^n: |S|=k} \h_S(d,\F_q)= |\mathrm{M}^n_q(k)_{\leq d}|\;.
    \]
In particular, setting $q=2$, for every $n,k,d\in\N$ where $k\leq2^n$, and every $S\subseteq\F_2^n$ of size $|S|=k$,
\[
    \h_S(d,\F_2)\geq \binom{\floor{\log(k)}}{\leq d}\;.
\]
\end{corollary}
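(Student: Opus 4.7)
The plan is to obtain the corollary by specializing Theorem~\ref{prop:hilbert} and Theorem~\ref{thm:hilbertbound} to a uniform grid. Taking $A_1 = \cdots = A_n = \F_q$, we have $r_1 = \cdots = r_n = q$ and $A = \F_q^n$. Theorem~\ref{prop:hilbert} gives
\[
\min_{S \subseteq \F_q^n,\, |S|=k} \h_S(d,\F_q) \;=\; \min_{\text{down-closed } T \subseteq F,\, |T|=k} |T_{\leq d}|,
\]
where $F = \{0,\ldots,q-1\}^n$. Applying Theorem~\ref{thm:hilbertbound}, whose hypothesis $r_1 \leq \cdots \leq r_n$ holds trivially, identifies this combinatorial minimum with $|\mathrm{M}_F(k)_{\leq d}| = |\mathrm{M}^n_q(k)_{\leq d}|$. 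This yields the first equality. Essentially all of the mathematical content is already in the two invoked theorems; at this level the task is purely bookkeeping.

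For the $q=2$ special case, I would set $m = \lfloor \log k \rfloor$, so that $2^m \le k$. Since $\mathrm{M}^n_2(k)$ consists of the $k$ lexicographically first vectors of $\{0,1\}^n$, enlarging $k$ only adds elements and in particular does not remove any low-weight point, so
\[
|\mathrm{M}^n_2(k)_{\leq d}| \;\geq\; |\mathrm{M}^n_2(2^m)_{\leq d}|.
\]
It then remains to recognize $\mathrm{M}^n_2(2^m)$ explicitly: under the lex order implicit in the definition of $\mathrm{M}_F$, the first $2^m$ binary vectors form a down-closed $m$-dimensional subcube of $\{0,1\}^n$ passing through the origin (concretely, the set of all vectors supported on a fixed block of $m$ coordinates). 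The number of Hamming-weight-$\le d$ points inside such a subcube is exactly $\binom{m}{\le d}$. Combining this with the first equality gives
\[
\h_S(d,\F_2) \;\geq\; \min_{|S|=k}\h_S(d,\F_2) \;=\; |\mathrm{M}^n_2(k)_{\leq d}| \;\geq\; \binom{\lfloor \log k\rfloor}{\leq d},
\]
as claimed.

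The only point that needs a quick sanity check is that the ``lex-first $k$ vectors'' are indeed down-closed and that $\mathrm{M}^n_2(2^m)$ really is the claimed subcube, which is an immediate consequence of the conventions fixed in Theorem~\ref{thm:hilbertbound}. Accordingly, there is no substantive obstacle at this step: the heavy lifting lives entirely in the proofs of \cref{prop:hilbert} and \cref{thm:hilbertbound}, and the corollary is a direct consequence.
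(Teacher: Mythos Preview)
Your proposal is correct and follows essentially the same approach as the paper: the paper derives the corollary by combining \cref{cor:characterization} (your Theorem~\ref{prop:hilbert}) with \cref{thm:hilbertbound}, and for the $q=2$ inequality it relies on the inclusion $\mathcal{D}_2^n(\lfloor\log k\rfloor)\subseteq \mathrm{M}_2^n(k)$, which is exactly your observation that $\mathrm{M}_2^n(2^m)$ is an $m$-dimensional coordinate subcube containing $\binom{m}{\le d}$ low-weight points.
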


%\zeyu{Earlier work \cite{Wei91,HP98,Hei99} should be mentioned. 
%The underlying combinatorial tool is \cite{CL69}, which generalizes a classical theorem of Macaulay \cite{Mac27} and a theorem commonly known as the Kruskal--Katona theorem \cite{Kru63, Kat68}. Should these papers be mentioned?
%Also, according to Beelen and Datta, the $q$-ary case was already solved by Heijnen and Pellikaan \cite{HP98, Hei99} but I haven't tried to find a specific theorem statement of them that suffices for us.}
\subsubsection{\texorpdfstring{Degree-$d$}{Degree-d} Closure of Sets}
Motivated by its applications to combinatorial geometry, the notion of degree-$d$ closures of subsets of~$\mathbb{F}_q^n$ was introduced in \cite{nie2015hilbert}. This concept has since found further applications and connections to complexity theory \cite{SS18,OSS19,Sri23} and pseudorandomness \cite{DTT22}.

Recall that the degree-$d$ closure $\cl_d(T)$ of a set $T\subseteq \F^n$ over a finite field $\F$ is the set of all points $a\in\F^n$ such that any degree-$d$ polynomial vanishing on $T$ also vanishes at $a$. 
Nie and Wang \cite{nie2015hilbert} proved the following result.
\begin{theorem}[{\cite[Theorem~5.6]{nie2015hilbert}}]\label{thm:Nie-Wang}
Let $n,d\in\N$ and $T\subseteq\F_q^n$. Then
\[
|\cl_d(T)| \leq \frac{q^n}{\h_{\F_q^n}(d, \F_q)} \cdot |T|. %= \frac{q^n}{N(n,d,q-1)} \cdot |T| \;.
\]
\end{theorem}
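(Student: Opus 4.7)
The plan is to reduce the Nie--Wang inequality to a Shearer-style lower bound on the Hilbert function of an arbitrary subset of $\F_q^n$, then prove that lower bound by an averaging argument that exploits translation invariance. Throughout, write $V$ for the space of degree-$\leq d$ polynomials in $\F_q[X_1,\dots,X_n]$, viewed as functions on $\F_q^n$, and set $H := \h_{\F_q^n}(d,\F_q) = \dim V$.

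First, I would observe that $\h_{\cl_d(T)}(d,\F_q) = \h_T(d,\F_q)$. Indeed, by the definition of $\cl_d$, the degree-$\leq d$ polynomials vanishing on $T$ are precisely those vanishing on $\cl_d(T)$, so the kernels of the restriction maps $V \to \F_q^T$ and $V \to \F_q^{\cl_d(T)}$ coincide and the two images have the same dimension. Since $\h_T(d,\F_q) \leq |T|$ trivially, it suffices to establish the key lemma
\[
|S|\cdot H \;\leq\; q^n\cdot \h_S(d,\F_q)\quad\text{for every } S\subseteq \F_q^n,
\]
and then specialize to $S = \cl_d(T)$.

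To prove the key lemma I would record three elementary properties of the set function $S \mapsto \h_S(d,\F_q)$. It is monotone with $\h_\emptyset = 0$. It is submodular, $\h_{A\cup B} + \h_{A\cap B} \leq \h_A + \h_B$, by the standard dimension count $K_A + K_B \subseteq K_{A\cap B}$ and $K_A \cap K_B = K_{A\cup B}$, where $K_U := \{p \in V : p|_U = 0\}$. And it is translation invariant, $\h_{S+y}(d,\F_q) = \h_S(d,\F_q)$ for every $y \in \F_q^n$, because the substitution $X \mapsto X-y$ is a linear bijection of $V$ that intertwines the restriction to $S+y$ with the restriction to $S$.

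To finish, I would deploy the fractional-subadditivity (Shearer) inequality for monotone submodular set functions: whenever a family $\mathcal{F}$ of subsets of the ground set covers every element with multiplicity at least $\tau$, one has $\tau \cdot f(\text{ground set}) \leq \sum_{A \in \mathcal{F}} f(A)$. Applied to $f = \h_{(\cdot)}(d,\F_q)$ with the family $\mathcal{F} = \{S+y : y \in \F_q^n\}$, whose multiplicity is exactly $|S|$ (each $x \in \F_q^n$ lies in the translates indexed by $y \in x - S$), this gives
\[
|S|\cdot H \;\leq\; \sum_{y\in\F_q^n}\h_{S+y}(d,\F_q)\;=\;q^n\cdot\h_S(d,\F_q),
\]
where the last equality uses translation invariance; specializing to $S = \cl_d(T)$ then yields $|\cl_d(T)|\cdot H \leq q^n\cdot |T|$, the Nie--Wang bound. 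The single delicate point is invoking the Shearer-type step; this follows either from Lov\'asz's convex-closure theorem for monotone submodular functions, or equivalently by applying the classical entropy Shearer inequality to a uniformly random $p \in V$.
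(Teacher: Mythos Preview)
Your argument is correct. The reduction via $\h_{\cl_d(T)}(d,\F_q)=\h_T(d,\F_q)$ is standard, and the key lemma $|S|\cdot H\le q^n\cdot\h_S(d,\F_q)$ is established cleanly: the Hilbert function $S\mapsto\h_S(d,\F_q)$ is indeed monotone, submodular (via $K_{A\cup B}=K_A\cap K_B$ and $K_A+K_B\subseteq K_{A\cap B}$), translation invariant, and vanishes at $\emptyset$, so the submodular Shearer inequality applied to the uniform $|S|$-cover $\{S+y:y\in\F_q^n\}$ gives exactly what you want. Your entropic alternative is also valid, since for $\mathbf{p}$ uniform on $V$ one has $H(\mathbf{p}|_S)=\h_S(d,\F_q)\cdot\log q$, so $\h_S$ is literally an entropy function of the coordinate variables $(\mathbf{p}(x))_{x\in\F_q^n}$.

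As for the comparison: the paper does not supply its own proof of this statement---it is quoted as \cite[Theorem~5.6]{nie2015hilbert} and used as a black box. So there is no in-paper argument to compare against. Your translation-averaging proof is self-contained and avoids the combinatorial machinery (Kruskal--Katona type arguments) that Nie and Wang deploy; it is worth noting that it gives the slightly stronger intermediate statement $\h_S(d,\F_q)\ge |S|\cdot\h_{\F_q^n}(d,\F_q)/q^n$ for \emph{every} set $S$, from which the closure bound follows by specializing to $S=\cl_d(T)$.
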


Building on our results on Hilbert functions, we obtain an improvement of \cref{thm:Nie-Wang} by obtaining a tight upper bound on the size of degree-$d$ closures of sets. 
\begin{theorem}[See \cref{thm:tight-bound-cl} and \cref{thm:closure-tightness}]\label{introthm:tight-bound-cl}
Let $n,d,m\in\N$. Let $T\subseteq\F_q^n$ be a set of size~$m$.
Then
\begin{equation}\label{eq:closure-size-intro}
|\cl_d(T)|\leq \max_{0\leq k\leq q^n: |\mathrm{M}^n_q(k)_{\leq d}|\leq m} k = \begin{cases}
\max_{0\leq k\leq q^n: |\mathrm{M}^n_q(k)_{\leq d}|=m} k & \text{if } m\leq \h_{\F_q^n}(d, \F_q),\\
q^n & \text{otherwise.}
\end{cases}
\end{equation}
Moreover, this bound is tight in the sense that for any $0\leq m\leq q^n$, there exists $T\subseteq\F_q^n$ of size $m$ for which \eqref{eq:closure-size-intro} holds with equality.
\end{theorem}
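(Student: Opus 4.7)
The proof splits into an upper bound and a matching tightness construction. The workhorse for both is the tautological identity
\[
I_{\cl_d(T)} \;=\; I_T,
\]
where $I_S$ denotes the space of degree-$\leq d$ polynomials in $\F_q[X_1,\dots,X_n]$ vanishing on $S\subseteq \F_q^n$. The inclusion $I_T\subseteq I_{\cl_d(T)}$ is immediate from the definition of $\cl_d(T)$, while $I_{\cl_d(T)}\subseteq I_T$ follows from $T\subseteq \cl_d(T)$. Taking codimensions in the ambient space of degree-$\leq d$ polynomials yields $\h_{\cl_d(T)}(d,\F_q)=\h_T(d,\F_q)\leq |T|=m$. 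Applying \cref{thm:intro-main-hilbert-1} to the set $\cl_d(T)$ then gives $|\mathrm{M}^n_q(|\cl_d(T)|)_{\leq d}|\leq m$, placing $|\cl_d(T)|$ in the feasible region of the maximum on the right of \eqref{eq:closure-size-intro}.

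\textbf{Simplifying the maximum.} The function $k\mapsto |\mathrm{M}^n_q(k)_{\leq d}|$ is non-decreasing, starts at $0$ and ends at $\h_{\F_q^n}(d,\F_q)$, and changes by at most one at each step because $\mathrm{M}^n_q(k+1)$ is obtained from $\mathrm{M}^n_q(k)$ by adjoining a single lex-next point. Hence it attains every integer in $\{0,\dots,\h_{\F_q^n}(d,\F_q)\}$. When $m>\h_{\F_q^n}(d,\F_q)$ the constraint is vacuous and the maximum equals $q^n$; otherwise the largest $k$ satisfying $|\mathrm{M}^n_q(k)_{\leq d}|\leq m$ also satisfies $|\mathrm{M}^n_q(k)_{\leq d}|=m$, which yields the case distinction in the statement.

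\textbf{Tightness.} If $m>\h_{\F_q^n}(d,\F_q)$, a standard dimension argument produces $T_0\subseteq \F_q^n$ of size $\h_{\F_q^n}(d,\F_q)$ on which the restriction of degree-$\leq d$ polynomials is injective; then $\cl_d(T_0)=\F_q^n$, and any $T\supseteq T_0$ of size $m$ still has $\cl_d(T)=\F_q^n$. If $m\leq \h_{\F_q^n}(d,\F_q)$, let $k^*$ be the right-hand side of \eqref{eq:closure-size-intro}. By \cref{thm:intro-main-hilbert-1} there exists $S\subseteq \F_q^n$ with $|S|=k^*$ achieving the minimum Hilbert function $\h_S(d,\F_q)=|\mathrm{M}^n_q(k^*)_{\leq d}|=m$. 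The evaluation functionals at the points of $S$ jointly separate the $m$-dimensional quotient of degree-$\leq d$ polynomials by $I_S$, so we may extract a subset $T\subseteq S$ of size $m$ whose evaluation functionals are linearly independent on that quotient. A dimension count then forces $I_T=I_S$, hence $\cl_d(T)=\cl_d(S)\supseteq S$ and $|\cl_d(T)|\geq k^*$; combined with the upper bound this gives equality.

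\textbf{Main obstacle.} No step is truly difficult once \cref{thm:intro-main-hilbert-1} is available: the upper bound is immediate from the tautology $I_{\cl_d(T)}=I_T$, and the tightness construction is a linear-algebraic extraction of an independent evaluation set from the extremal $S$. The only real care is the bookkeeping that separates the two cases $m\lessgtr \h_{\F_q^n}(d,\F_q)$ and the verification that the extremal $S$ provided by \cref{thm:intro-main-hilbert-1} indeed contains $m$ points with linearly independent evaluation functionals on $V_d/I_S$---a fact that holds because those functionals collectively annihilate only the zero class.
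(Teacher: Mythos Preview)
Your argument is correct. The upper bound is essentially the paper's: both use $\h_{\cl_d(T)}(d,\F_q)=\h_T(d,\F_q)$ together with \cref{thm:intro-main-hilbert-1}, and the rewriting of the maximum via the one-step increment property of $k\mapsto|\mathrm{M}^n_q(k)_{\leq d}|$ is identical.

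For tightness the two routes genuinely diverge. The paper (\cref{thm:closure-tightness}) builds $T$ explicitly: it takes the extremal set $S=(\sigma_A\circ\phi^{-1})(\mathrm{M}^n_q(k^*))$ coming from \cref{lem:realizable}/\cref{lem:A-to-F}, and for $T$ it uses the concrete subset $(\sigma_A\circ\phi^{-1})(\mathrm{M}^n_q(k^*)_{\leq d})$, verifying $\h_{T}(d,\F_q)=\h_S(d,\F_q)$ directly from the standard-monomial description. Your argument is instead a soft linear-algebra extraction: take any extremal $S$ guaranteed by \cref{thm:intro-main-hilbert-1}, then pick $T\subseteq S$ whose evaluation functionals form a basis of $(V_d/I_S)^*$, forcing $I_T=I_S$ and hence $S\subseteq\cl_d(T)$. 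This is shorter and avoids the bijections $\sigma_A,\phi$ and \cref{lem:A-to-F}, but it is nonconstructive; the paper's version has the advantage of producing an explicit $T$ (as advertised in the statement of \cref{thm:closure-tightness}), which matters for the application to the question of Doron--Ta-Shma--Tell.
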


In fact, the set $T$ of size $m$ that attains the bound in the above theorem can be constructed explicitly; see \cref{thm:closure-tightness} for details.

For convenience, we state the following corollary which is used later in the paper. For $n,d,\delta\in\N$, denote by $N(n,d,\delta)$ the number of monomials $X_1^{e_1}\cdots X_n^{e_n}$ with $e_1,\dots,e_n\leq \delta$ and $e_1+\dots+e_n\leq d$.

\begin{corollary}\label{cor:bound-span}
Let $n,d,\ell\in\N$. If $T\subseteq\F_q^n$ is a set of size less than $N(\ell,d,q-1)$, then $|\cl_d(T)| < q^\ell$.
In particular, if $q=2$ and $T\subseteq\F_2^n$ is a set of size less than $\binom{\ell}{\leq d}$, then $|\cl_d(T)| < 2^\ell$. 
\end{corollary}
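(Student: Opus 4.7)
The plan is to apply \cref{introthm:tight-bound-cl} directly, with the only real work being to identify the quantity $|\mathrm{M}^n_q(q^\ell)_{\leq d}|$ with $N(\ell,d,q-1)$. Write $m=|T|$. By \cref{introthm:tight-bound-cl}, we have
\[
|\cl_d(T)|\leq \max\{k : 0\leq k\leq q^n,\ |\mathrm{M}^n_q(k)_{\leq d}|\leq m\}.
\]
So it suffices to show that $|\mathrm{M}^n_q(k)_{\leq d}| > m$ for every $k\geq q^\ell$.

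The first step is to compute $|\mathrm{M}^n_q(q^\ell)_{\leq d}|$ exactly. The set $\mathrm{M}^n_q(q^\ell)$ consists of the $q^\ell$ lexicographically smallest elements of $\{0,\dots,q-1\}^n$; under the standard lex order these are the points $(0,\dots,0,a_{n-\ell+1},\dots,a_n)$ with each $a_i\in\{0,\dots,q-1\}$ (and in any case the Hamming weight does not depend on which $\ell$ coordinates are free). Intersecting with the Hamming-weight-$\leq d$ condition $\sum a_i\leq d$ yields exactly the monomial count $N(\ell,d,q-1)$, so $|\mathrm{M}^n_q(q^\ell)_{\leq d}|=N(\ell,d,q-1)$.

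The second step is monotonicity: since $\mathrm{M}^n_q(k)$ is defined as an initial segment of the lex order, $\mathrm{M}^n_q(k_1)\subseteq \mathrm{M}^n_q(k_2)$ whenever $k_1\leq k_2$, and therefore $|\mathrm{M}^n_q(k)_{\leq d}|$ is nondecreasing in $k$. Combined with the first step, for every $k\geq q^\ell$,
\[
|\mathrm{M}^n_q(k)_{\leq d}|\geq |\mathrm{M}^n_q(q^\ell)_{\leq d}|=N(\ell,d,q-1)>m.
\]
Hence the maximum in \cref{introthm:tight-bound-cl} is attained at some $k<q^\ell$, giving $|\cl_d(T)|<q^\ell$. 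The binary case is the specialization $q=2$, for which $N(\ell,d,1)=\binom{\ell}{\leq d}$ by definition.

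There is no substantive obstacle here; the statement is essentially a convenient repackaging of \cref{introthm:tight-bound-cl}. The only point requiring care is the identification $|\mathrm{M}^n_q(q^\ell)_{\leq d}|=N(\ell,d,q-1)$, which amounts to observing that the lex-initial segment of size $q^\ell$ is an $\ell$-dimensional coordinate subcube of $\{0,\dots,q-1\}^n$.
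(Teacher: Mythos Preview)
Your proof is correct and takes essentially the same approach as the paper: both rely on the identification $|\mathrm{M}^n_q(q^\ell)_{\leq d}|=N(\ell,d,q-1)$ and then invoke \cref{introthm:tight-bound-cl}. The paper's version is a terse two-sentence proof, while you have spelled out the identification of the lex-initial $q^\ell$ segment with a coordinate subcube and the monotonicity of $k\mapsto|\mathrm{M}^n_q(k)_{\leq d}|$ explicitly, but there is no substantive difference in method.
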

\begin{proof}
Observe that $|\mathrm{M}^n_q(q^\ell)_{\leq d}|=N(\ell,d,q-1)$. Then apply \cref{introthm:tight-bound-cl}.
\end{proof}

Let us compare our bound with the bound of Nie and Wang in some specific settings.

\begin{example}
Suppose $\ell\leq n$. Let $T\subseteq\F_2^n$ be a set of size $\binom{\ell}{\leq d}-1$.
Then by \cref{cor:bound-span}, we have the bound $|\cl_d(T)|\leq 2^\ell-1$.
On the other hand, the bound of Nie and Wang (\cref{thm:Nie-Wang}) gives
\[
|\cl_d(T)|\leq \frac{2^n}{\binom{n}{\leq d}} \cdot |T| \;,
\]
which is exponential in $n$, rather than in $\ell$, at least when $d\leq \left(\frac{1}{2}-c\right) n$ for some constant $c>0$.
\end{example}

\begin{example}
Suppose $\ell\leq n$ and $d<q$. Let $T\subseteq\F_q^n$ be a set of size $N(\ell,d,q-1)-1=\binom{\ell+d}{d}-1$.
Then by \cref{cor:bound-span}, we have the bound $|\cl_d(T)|\leq q^\ell-1$, which is exponential in $\ell \log q$.
On the other hand, the bound of Nie and Wang (\cref{thm:Nie-Wang}) gives
\[
|\cl_d(T)|\leq \frac{q^n}{\binom{n+d}{d}} \cdot |T|\;,
\]
which is exponential in $n\log q$, rather than in $\ell \log q$, at least when $n+d\leq q^{1-c}$ for some constant $c>0$.
\end{example}

In \cite{DTT22}, Doron, Ta-Shma, and Tell explicitly asked if there exists a small set $T
\subseteq \F_q^n$ whose degree-$d$ closure is very large. Our \cref{introthm:tight-bound-cl} gives an upper bound on the size of the degree-$d$ closure of $T$ in terms of the size of $T$, which is tight in the sense that there exist sets $T$ that exactly meet this bound for every cardinality of $T$. Moreover, such sets $T$ can be constructed explicitly (see \cref{thm:closure-tightness}). Thus, we completely resolve the question posed by Doron, Ta-Shma, and Tell.

%\zeyu{The following can be removed.} This bound and its tightness can be used to give constructions of hitting sets (and lower bounds for their sizes) for polynomials that vanish rarely. See \cite[Theorem~8.2]{DTT22} for details. \zeyu{We can probably plug in our bound into \cite[Theorem~8.2]{DTT22} to get some results, but I'm too lazy to do it right now... This is not very important in my opinion but feel free to add anything if you find it interesting. Also, our tightness result is stated in a nonconstructive way, although the next paragraph sketches how to make it constructive.}\pooya{I agree with just keeping it as is. I added the proof sketch of making the argument constructive as a remark in \cref{sec:closure}}\pooya{TODO: Given the reviews, it seems we might want to expand on these applications}

\subsection{Our Results on Randomness Extractors}
Continuing the line of work on low-complexity extractors, in this paper we investigate the power of low-degree polynomials in randomness extraction. 
\begin{question}\label{q:lowdegextraction}
For which families $\mathcal{X}$ of sources does there exist a low-degree disperser? Similarly, for which families $\mathcal{X}$ of sources does there exist a low-degree extractor? 
\end{question}
Let us first discuss the easier task of obtaining low-degree dispersers before moving on to our main application of low-degree extractors. For simplicity, we will focus on the most important case of extracting randomness over $\F_2$, but all our results easily generalize to $\F_q$. Non-explicit constructions of low-degree dispersers can be obtained via understanding the probability that a random low-degree polynomial is a disperser for a family $\mathcal{X}$ of distributions over $\{0,1\}^n$ which we identify with $\F_2^n$ in the natural way. Our starting point is the observation that the notion of Hilbert functions can be used to exactly describe the probability that a random degree-$d$ polynomial $f:\{0,1\}^n \rightarrow \{0,1\}$ is a disperser for a fixed source $\mathbf{X}\in \mathcal{X}$. Indeed, this probability is exactly equal to $1-2^{1-\h_S(d, \F_2)}$, where $S=\mathrm{support}(\mathbf{X})$. Thus, in particular, \cref{thm:intro-main-hilbert-1} can be used to bound the probability that a random degree-$d$ polynomial is not a disperser for a fixed source.

\subsubsection{Low-Degree Dispersers} Let $\mathcal{X}$ be a family of sources of min-entropy $\geq k$. Observing that the support of any distribution $\mathbf{X}\in \mathcal{X}$ is of size $\geq 2^k$, one gets as an immediate corollary of \cref{thm:intro-main-hilbert-1}, the existence of low-degree dispersers $\mathcal{X}$ as long as $|\mathcal{X}|$ is small. 
\begin{theorem}[Informal, see \cref{cor:disp}]\label{thm:cor-disp-small}
Let $n,d,k\geq1$. Let $\mathcal{X}$ be a family of distributions of min-entropy $\geq k$.
Then a random degree-$d$ polynomial over $\F_2$ is a disperser for $\mathcal{X}$ with probability at least 
\[
1-|\mathcal{X}|\cdot  2^{1-{\binom{k}{\leq d}}} \;. 
\]
\end{theorem}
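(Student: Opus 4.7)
The plan is to bound the failure probability for a single source and then union bound over the family. Fix $\mathbf{X}\in\mathcal{X}$ and let $S=\mathrm{support}(\mathbf{X})\subseteq\F_2^n$. Since $\mathbf{X}$ has min-entropy at least $k$, we have $|S|\geq 2^k$. A degree-$d$ polynomial $f:\F_2^n\to\F_2$ fails to be a disperser for $\mathbf{X}$ precisely when $f|_S$ is constant (either identically $0$ or identically $1$), since the support of $f(\mathbf{X})$ equals the image of $S$ under $f$.

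Next I would compute the probability of this event for a uniformly random degree-$d$ polynomial. The restriction map $f\mapsto f|_S$ is a linear map from $\Poly{d}{\F_2}$, the space of degree-$d$ polynomials, into $\F_2^S$, whose image has dimension $\h_S(d,\F_2)$ by definition. The constant functions on $S$ form a $1$-dimensional subspace of $\F_2^S$, which certainly lies in the image (constants are degree $0$). Hence the fraction of $f$ for which $f|_S$ is constant equals $2/2^{\h_S(d,\F_2)}=2^{1-\h_S(d,\F_2)}$.

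Now I would apply \cref{thm:intro-main-hilbert-1} in the case $q=2$: since $|S|\geq 2^k$ we have $\lfloor\log|S|\rfloor\geq k$, so
\[
\h_S(d,\F_2)\geq \binom{\lfloor\log|S|\rfloor}{\leq d}\geq \binom{k}{\leq d}\;.
\]
Therefore the probability that a random degree-$d$ polynomial fails to be a disperser for the fixed source $\mathbf{X}$ is at most $2^{1-\binom{k}{\leq d}}$. A union bound over the $|\mathcal{X}|$ sources yields the claimed probability $1-|\mathcal{X}|\cdot 2^{1-\binom{k}{\leq d}}$ that a random degree-$d$ polynomial is simultaneously a disperser for every source in $\mathcal{X}$.

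There is no real obstacle here: the Hilbert function lower bound from \cref{thm:intro-main-hilbert-1} does all the heavy lifting, and the rest is the standard ``constant restriction is a codimension-$(\h_S-1)$ event plus a union bound'' argument. The only thing to be careful about is the factor of $2$, which comes from counting \emph{two} constant functions on $S$ (not one), giving the $2^{1-\h_S(d,\F_2)}$ rather than $2^{-\h_S(d,\F_2)}$ in the per-source estimate.
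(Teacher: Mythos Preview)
Your proposal is correct and follows essentially the same argument as the paper: bound the probability that a random degree-$d$ polynomial is constant on the support of a fixed source by $2^{1-\h_S(d,\F_2)}$ via the linear restriction map, apply the Hilbert function lower bound from \cref{thm:intro-main-hilbert-1} to get $2^{1-\binom{k}{\leq d}}$, and then union bound over $\mathcal{X}$. The paper packages the first step as \cref{thm:pr-nonconst} before invoking it in \cref{cor:disp}, but the content is identical.
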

This theorem itself implies the existence of low-degree dispersers for several interesting families of samplable sources such as sources sampled by local maps, bounded-depth decision forests, and polynomial-sized bounded-fan-in circuits, to name a few.

A map $f:\{0,1\}^m\rightarrow \{0,1\}^n$ is called $\ell$-local if each of its output bits depends on at most~$\ell$ input bits. A depth-$\ell$ decision forest is a map $f$ where each output bit can be computed as a depth-$\ell$ decision tree. It is easy to obtain an upper bound exponential in $\poly(n)$ on the number of local or decision forest sources. Hence we get the following as a corollary of \cref{thm:cor-disp-small}.
\begin{corollary}[Informal, see \cref{cor:localdisperser}]
Let $1\leq \ell\leq d\leq n$ be integers. There exists a degree-$d$ disperser 
\begin{itemize}
    \item for the family of $\ell$-local sources on $\{0,1\}^n$ with min-entropy $k> d(2^\ell n + 2\ell n \log n)^{1/d}$. 
    \item for the family of depth-$\ell$ decision forest sources on $\{0,1\}^n$ with min-entropy $k> d((\ell+\log n) 2^{\ell+1} n)^{1/d}$. 
\end{itemize}
\end{corollary}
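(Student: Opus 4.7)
The plan is to instantiate \cref{thm:cor-disp-small} for each family $\mathcal{X}$ separately. That theorem guarantees a degree-$d$ disperser whenever
\[
|\mathcal{X}|\cdot 2^{1-\binom{k}{\leq d}} < 1, \qquad \text{equivalently} \qquad \binom{k}{\leq d} > 1+\log_2|\mathcal{X}|.
\]
Using the standard lower bound $\binom{k}{\leq d}\geq (k/d)^d$ valid for $k\geq d$, it suffices to ensure $k > d\bigl(1+\log_2|\mathcal{X}|\bigr)^{1/d}$. Thus the whole argument reduces to producing a clean upper bound on $|\mathcal{X}|$ for each of the two families and plugging it into this inequality.

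For the first bullet, an $\ell$-local source over $\{0,1\}^n$ is the pushforward of the uniform distribution under some map $f:\{0,1\}^m\to\{0,1\}^n$ in which every output coordinate depends on at most $\ell$ input bits. Since each output bit is determined by the identities of at most $\ell$ inputs and by a Boolean function on $\ell$ variables, and since we may assume without loss of generality that only the $m\leq \ell n$ inputs touched by some output are present, each output bit is specified by at most $\binom{\ell n}{\ell}\cdot 2^{2^\ell}$ choices. Multiplying over the $n$ outputs and taking $\log_2$ yields
\[
\log_2|\mathcal{X}| \leq n\left(\ell\log(\ell n) + 2^\ell\right) \leq 2\ell n\log n + 2^\ell n,
\]
where the last inequality uses $\ell\leq n$. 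Substituting into $k > d(1+\log_2|\mathcal{X}|)^{1/d}$ yields the stated bound $k > d(2^\ell n + 2\ell n\log n)^{1/d}$.

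For the second bullet, a depth-$\ell$ decision tree on $n$ variables is determined by assigning a variable to each of the at most $2^\ell-1$ internal nodes of the complete depth-$\ell$ binary tree and a bit to each of the $2^\ell$ leaves, giving at most $n^{2^\ell}\cdot 2^{2^\ell}=(2n)^{2^\ell}$ trees. A depth-$\ell$ decision forest with $n$ outputs is therefore one of at most $(2n)^{n\cdot 2^\ell}$ maps, so
\[
\log_2|\mathcal{X}| \leq n\cdot 2^\ell\cdot (1+\log n) \leq (\ell+\log n)\,2^{\ell+1}\,n.
\]
Plugging this bound into $k > d(1+\log_2|\mathcal{X}|)^{1/d}$ yields the second stated threshold.

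The proof contains no real obstacle: once \cref{thm:cor-disp-small} is in hand, everything reduces to counting how many maps describe each class of sources, and the only minor technical point to verify carefully is the bound $\binom{k}{\leq d}\geq (k/d)^d$ together with the WLOG reduction $m\leq \ell n$ for local sources so that the counting bound does not involve an unbounded seed length~$m$.
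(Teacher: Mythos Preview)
Your approach is the same as the paper's: invoke \cref{thm:cor-disp-small} (i.e.\ \cref{cor:disp}), bound $|\mathcal{X}|$ by elementary counting, and use $\binom{k}{\leq d}\geq (k/d)^d$. The local-source count and the overall logic match the paper exactly.

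There is one genuine slip in the decision-forest case. A depth-$\ell$ decision forest source is $f(\mathbf{U}_m)$ with $f:\{0,1\}^m\to\{0,1\}^n$, so each output tree queries the $m$ \emph{input} variables, not the $n$ output coordinates; your count $(2n)^{2^\ell}$ per tree tacitly takes the variable alphabet to have size $n$, but $m$ is a priori unbounded and the count as written is not justified. The fix is exactly the WLOG reduction you already used for local sources: each depth-$\ell$ tree touches at most $2^\ell$ inputs, so at most $2^\ell n$ inputs are relevant and we may assume $m\leq 2^\ell n$. With that, the per-tree count becomes $(2m)^{2^\ell}\leq (2^{\ell+1}n)^{2^\ell}$, giving $\log_2|\mathcal{X}|\leq n\cdot 2^\ell(1+\ell+\log n)\leq (\ell+\log n)2^{\ell+1}n$, and the rest of your argument goes through. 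This is precisely what the paper does in \cref{prop:countingSources}.
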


As mentioned above, since in addition to the min-entropy requirement, the only requirement in \cref{thm:cor-disp-small} about the family $\mathcal{X}$ is a bound on $|\mathcal{X}|$, it can be used to immediately obtain low-degree dispersers for various other families of sources as well. For example, since for any $c$, the number of Boolean circuits with $\leq n^c$ bounded fan-in gates is at most $2^{O(n^{c+1})}$, one can also use \cref{thm:cor-disp-small} to obtain a degree-$O(c)$ disperser for such families of circuits. However, we will not do an exhaustive search for all such applications, and instead our main disperser applications will focus on two powerful families of sources, namely sources sampled by low-degree polynomials over $\F_2$ and $\AC[\oplus]$ circuits which we define as the family of unbounded-depth polynomial-size Boolean circuits with AND, OR, XOR, NOT gates of unbounded fan-in, where the input gates are not counted towards the size. 

Note that low-degree polynomial maps $f:\{0,1\}^m \rightarrow \{0,1\}^n$, even affine ones, can depend on the entire input for any $m \gg n$ and thus one cannot simply bound $|\mathcal{X}|$ when $\mathcal{X}$ is the family of sources sampled by low-degree polynomials. This property holds for $\AC[\oplus]$ circuits, as we allow them to non-trivially depend on an arbitrary number of input gates (since the circuit gates have unbounded fan-in). Nevertheless, utilizing an ``input-reduction'' trick of \cite{chattopadhyay2024extractors} which applies to both the foregoing families of sources, it can be shown that for our disperser purposes we may assume the input of both families of sources to be of length $O(n)$. This allows us to apply \cref{thm:cor-disp-small} to obtain low-degree dispersers for both of these families. 

\begin{theorem}[Informal, see \cref{thm:dispPolys,thm:dispCkts}]
For every $1\leq \ell< d\leq n$, there exists a degree-$d$ disperser 
\begin{itemize}
    \item for the family of degree-$\ell$ sources on $\{0,1\}^n$ with min-entropy $k\geq(12^\ell\cdot d^d\cdot n)^{\frac{1}{d-\ell}}+1$. 
    \item for the family of $n^\ell$-size $\AC[\oplus]$ circuit sources on $\{0,1\}^n$ with min-entropy ${k\geq (30^2\cdot d^d\cdot n^{2\ell})^{\frac{1}{d-2}}+1}$.
\end{itemize}
\end{theorem}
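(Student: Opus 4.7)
My plan is a probabilistic argument: I would sample a uniformly random degree-$d$ polynomial $f:\F_2^n\to\F_2$ and show, via a union bound over $\mathcal{X}$, that with positive probability $f$ is a disperser for every source in the family. For a fixed source $\mathbf{X}\in\mathcal{X}$ of min-entropy at least $k$, let $S=\mathrm{supp}(\mathbf{X})$, so $|S|\geq 2^k$. \cref{thm:intro-main-hilbert-1} then gives $\h_S(d,\F_2)\geq\binom{k}{\leq d}$. Since the restriction map $p\mapsto p|_S$ from degree-$d$ polynomials on $\F_2^n$ into $\F_2^S$ is linear with image of dimension $\h_S(d,\F_2)$, and the two constant functions on $S$ both lie in this image, the probability that $f|_S$ is constant (which is the only way $f$ can fail to be a disperser for $\mathbf{X}$) is exactly $2^{1-\h_S(d,\F_2)}\leq 2^{1-\binom{k}{\leq d}}$.

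To turn this into a union bound I need $|\mathcal{X}|$ to be finite and well-controlled, and the obstacle is that \emph{a priori} both families contain sources $g(U_m)$ with arbitrary input length $m$. Here I would invoke the ``input-reduction'' technique of Chattopadhyay, Goodman, and Gurumukhani~\cite{chattopadhyay2024extractors}, already flagged in the introduction: every degree-$\ell$ source on $\F_2^n$ is equivalent in distribution to one with input length $m=O(k)$, and every $n^\ell$-size $\AC[\oplus]$ source admits a similar reduction to input length polynomial in $n$ and $k$ while keeping the circuit size at most $n^\ell$. Then I would count: the number of degree-$\ell$ maps $\F_2^m\to\F_2^n$ is at most $2^{n\binom{m}{\leq\ell}}\leq 2^{nm^\ell}$, which under $m=O(k)$ becomes $2^{O(nk^\ell)}$; and the number of $\AC[\oplus]$ circuits of size $s$ with $m$ inputs and $n$ outputs is at most $2^{O(s(s+m))}$, which with $s=n^\ell$ and the reduced input length becomes $2^{O(n^{2\ell}k^2)}$.

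Finally, setting $|\mathcal{X}|\cdot 2^{1-\binom{k}{\leq d}}<1$ and using the crude lower bound $\binom{k}{\leq d}\geq (k/d)^d$ reduces the requirement to $k^{d-\ell}\geq C_1\cdot d^d\cdot n$ for degree-$\ell$ sources and $k^{d-2}\geq C_2\cdot d^d\cdot n^{2\ell}$ for $\AC[\oplus]$ sources, which after bookkeeping of the constants emerging from the input-reduction lemmas and the circuit count yields the thresholds $12^\ell$ and $30^2$ claimed in the statement. The main obstacle is the input-reduction step: the Hilbert function lower bound and the union bound are essentially routine once the families are made finite, but the reductions require simultaneous control of the distribution, the polynomial degree, and (in the $\AC[\oplus]$ case) the circuit size, and this is where the bulk of the technical effort lies. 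Once those reductions are in hand the rest is direct calculation.
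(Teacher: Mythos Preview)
Your overall strategy matches the paper's exactly: Hilbert-function lower bound on the support, union bound over the family, and input reduction via~\cite{chattopadhyay2024extractors} to make the families finite. Two points need correction.

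First, the input-reduced source is \emph{not} ``equivalent in distribution'' to the original. The lemma of~\cite{chattopadhyay2024extractors} only gives an affine restriction $L$ so that $f\circ L(\mathbf{U}_{11k})$ has min-entropy $\geq k-1$; its support is merely a subset of $\mathrm{supp}(f(\mathbf{U}_m))$. For dispersers this is enough (non-constant on a subset implies non-constant on the whole support), but you should state it this way, and note that your Hilbert bound must be applied with $k-1$ in place of $k$.

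Second, and more substantively, your claim that the $\AC[\oplus]$ input reduction ``keep[s] the circuit size at most $n^\ell$'' is unjustified and in fact false as stated. The composed map $C\circ L$ has $r=11k$ inputs, but naively implementing the affine map $L:\F_2^r\to\F_2^m$ inside the circuit costs $m$ gates, and $m$ is unbounded. The paper resolves this by rewriting each gate of $C$ that reads inputs: an $\XOR$ of several $L_{i_j}(y)$'s is again a single affine form in $y$ (no new gates), while an $\AND$ or $\OR$ of affine forms in $r$ variables can be replaced by the same gate applied to at most $r$ linearly independent such forms, costing at most $r$ fresh $\XOR$ gates per original gate. This yields a circuit of size $r\cdot n^\ell=11k\cdot n^\ell$, not $n^\ell$. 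Plugging $s=O(kn^\ell)$ and $m=O(k)$ into your circuit count $2^{O(s(s+m))}$ then gives $2^{O(k^2 n^{2\ell})}$, which is the bound you wrote down---but your stated parameters $s=n^\ell$, $m=O(k)$ do not actually produce that exponent. Once you account for the size blow-up, the threshold $k^{d-2}\gtrsim d^d n^{2\ell}$ and the constant $30^2$ follow as you indicated.
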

In particular, for every $\ell\in\N$, there is a degree-$(\ell+2)$ disperser for degree-$\ell$ sources on $\{0,1\}^n$ with min-entropy $\Omega\left(\sqrt{n}\right)$.

We note that both of these source families are very powerful, and to the best of our knowledge, no nontrivial low-complexity dispersers for either of these families of sources was known prior to this work (except in the easier case of degree-$1$ sources which corresponds to affine sources for which explicit extractors for logarithmic entropy was recently proved~\cite{li2023two}). Let us also point out that the two foregoing classes have incomparable power, and that it is straightforward to use our proof technique to conclude the same result for a class of sources that generalizes both $\ACp$ and constant-degree polynomials. Indeed, the input-reduction and counting idea works for the ``hybrid'' class of polynomial-size circuits which extends $\ACp$ by allowing additional unbounded fan-in gates computing arbitrary polynomials of fixed constant degree. However, for ease of exposition, we have chosen to present only the results for $\ACp$ and low-degree sources separately.

\subsubsection{Low-Degree Extractors}
Next, we move on to another application concerning the existence of low-degree extractors for samplable sources. Can we prove the existence of low-degree extractors for all the families for which we proved the existence of low-degree dispersers? We prove this by showing an analogue of \cref{thm:cor-disp-small} for extractors.
\begin{theorem}[Informal, see \cref{thm:extr} for the more general statement]\label{thm:intro-extractor} 
Let $\mathcal{X}$ be a family of distributions of min-entropy $k\geq5\log{n}$ over $\{0,1\}^n$ for large enough~$n$. 
Then for every $d\geq 6$, a uniformly random degree-$d$ polynomial is an $\eps$-extractor for $\mathcal{X}$ with probability at least 
\[
1-|\mathcal{X}|\cdot e^{3n-O(k^{d/2})/n^2}\;
\]
for $\eps=(2d)^d\cdot k^{-d/4}$.
\end{theorem}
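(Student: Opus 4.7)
The plan is to fix a single source $\mathbf X \in \mathcal X$ and bound $\Pr_f[f\text{ fails to be an }\eps\text{-extractor for }\mathbf X]$; the theorem then follows by a union bound over $\mathcal X$. Writing $p(x):=\Pr[\mathbf X=x]$, a function $f:\{0,1\}^n\to\{0,1\}$ is an $\eps$-extractor for $\mathbf X$ exactly when $|\mathrm{bias}(f,\mathbf X)|\leq 2\eps$, where $\mathrm{bias}(f,\mathbf X):=\sum_x p(x)(-1)^{f(x)}$. So the task reduces to a tail bound on $|\mathrm{bias}(f,\mathbf X)|$ when $f$ is a uniformly random polynomial of degree $\leq d$.

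I will obtain the tail bound via the $2t$-th moment method for an even integer $2t$ to be chosen later. Expanding and averaging over $f$ first,
\[
\E_f\bigl[\mathrm{bias}(f,\mathbf X)^{2t}\bigr]
= \sum_{x_1,\dots,x_{2t}} \prod_{i=1}^{2t} p(x_i)\cdot \E_f\bigl[(-1)^{\sum_i f(x_i)}\bigr]
= \Pr_{\mathbf X_1,\dots,\mathbf X_{2t}\sim \mathbf X}\bigl[\{\mathbf X_i\}\text{ is degree-}d\text{ balanced}\bigr],
\]
because for any fixed tuple the inner expectation over the uniformly random coefficients of $f$ equals $1$ precisely when $\sum_i g(x_i)=0$ in $\F_2$ for every polynomial $g$ of degree $\leq d$, and is $0$ otherwise. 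Thus the $2t$-th moment is purely combinatorial: it is the probability that $2t$ independent samples from $\mathbf X$ form a degree-$d$ balanced multiset.

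The core step is to bound this probability using the Hilbert-function machinery of the paper. I group $2t$-tuples by the set-partition of $[2t]$ induced by coordinate equality. The ``perfect pairing'' partitions (each distinct value appearing exactly twice) contribute at most $(2t-1)!!\cdot \|p\|_2^{2t}\leq (2t)^t\cdot 2^{-tk}$, using $\|p\|_2^2\leq\max_x p(x)\leq 2^{-k}$ from min-entropy. For every other partition, the distinct values in odd-multiplicity blocks must themselves form a degree-$d$ balanced subset of $S:=\mathrm{supp}(\mathbf X)$; equivalently, their characteristic vector in $\F_2^S$ is a codeword of the dual of the restricted Reed--Muller code $P_S:=\{g|_S:\deg g\leq d\}$. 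Now I invoke \cref{thm:intro-main-hilbert-1}: since $|S|\geq 2^k$,
\[
\h_S(d,\F_2)=\dim P_S \geq \binom{k}{\leq d},
\]
so the dual code has codimension at least $\binom{k}{\leq d}$. Combining this with the fact that the evaluation map $x\mapsto(m(x))_{|m|\leq d}$ is injective on $S$ (forced by the degree-$1$ monomials), I count balanced subsets of each odd size by a dimension-count argument and show that the non-pairing contributions are dominated by the pairing term, up to factors absorbable into $(Cn)^t$ for some absolute constant $C$.

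Putting the estimates together gives $\E_f[\mathrm{bias}^{2t}]\leq (Cn)^t\cdot 2^{-tk}$. Applying Markov with $\eps=(2d)^d k^{-d/4}$ and optimizing $t\asymp k^{d/2}/n^{O(1)}$ yields the single-source failure probability $e^{3n-\Omega(k^{d/2}/n^2)}$, where the $e^{3n}$ factor absorbs polynomial slack accumulated in the non-pairing estimates and from rounding $t$ to an integer. A union bound over $\mathcal X$ then gives the theorem. The main obstacle is the third step: converting the Hilbert-function lower bound $\h_S(d,\F_2)\geq \binom{k}{\leq d}$ into a tight upper bound on the number of low-weight codewords in the dual of $P_S$, since a naive use of only dimension or only min-entropy loses the exponent needed for the final rate $k^{d/2}/n^2$.
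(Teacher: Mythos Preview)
Your proposal has a genuine gap, and you identify it yourself: the third step. Converting the dimension bound $\h_S(d,\F_2)\geq\binom{k}{\leq d}$ into an upper bound on the number of low-weight codewords of the dual of $P_S$ is not something that follows from the Hilbert-function lower bound alone. Knowing only the codimension of a linear code says nothing about its weight distribution, and you give no mechanism for controlling the non-pairing terms. Without this, the moment bound is not established, and the proof does not close.

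There is also an internal inconsistency in the parts you do claim to carry out. You assert $\E_f[\mathrm{bias}^{2t}]\leq (Cn)^t 2^{-tk}$ and then choose $t\asymp k^{d/2}/n^{O(1)}$. But the perfect-pairing contribution alone is $(2t-1)!!\,\|p\|_2^{2t}$, and $(2t-1)!!\sim\sqrt{2}(2t/e)^t$; once $t\gg n$ (which is the regime you need for the exponent $k^{d/2}/n^2$ to beat $3n$), this already exceeds $(Cn)^t 2^{-tk}$. So the stated moment bound cannot hold at the value of $t$ you plug in, and the final parameter calculation does not follow from the ingredients you list.

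The paper's argument is entirely different and sidesteps both problems. Rather than analyzing moments or dual codewords, it partitions the support by probability level sets $S_i$ and, for each large $S_i$, builds an almost-uniform covering by random subsets $T_1,\ldots,T_t\subseteq S_i$ of size $\binom{\ell}{\leq d}$ with $\ell\approx k/2$. The degree-$d$ closure bound (\cref{cor:bound-span}) is used to show that each such random $T_j$ has full Hilbert dimension with high probability; on any such $T_j$ a random degree-$d$ polynomial is literally a uniformly random function $T_j\to\F_2$, so Hoeffding gives concentration of $\sum_{x\in T_j}p_x f(x)$ directly. Averaging over the covering transfers this to $S_i$. This route never needs to count balanced subsets or understand the dual weight distribution.
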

A similar statement (see \cref{thm:extr}) holds for families of sources that are close to convex combinations of another small family of sources. Combined with the input-reduction trick, we obtain as a corollary, the existence of low-degree extractors for various families of sources, notably, lower-degree sources and $\ACp$ circuits. 
\begin{theorem}[Informal, see \cref{thm:extrForSources}]\label{thm:intro-extrForSources}
For all $\ell,d\geq1$, and all large enough~$n$, and $k\geq 5\log n$. There exists a degree-$d$ $\F_2$-polynomial that is an $\eps$-extractor for the following families of sources over $\{0,1\}^n$ for $\eps=(2d)^d\cdot k^{-d/4}$:
\begin{itemize}
\item $\ell$-local sources for $k\geq (2^\ell n^3\log{n})^{2/d}$.
\item depth-$\ell$ decision forest sources for $k\geq (2^\ell n^3(\log{n}+\ell))^{2/d}$.
\item degree-$\ell$ sources for $k\geq(3^\ell n)^{\frac{6}{d-2\ell}}$.
\item $n^\ell$-size $\ACp$ circuit sources (with unbounded number of input gates) for $k\geq 3n^{\frac{4(\ell+1)}{d-4}}$.
\end{itemize}
\end{theorem}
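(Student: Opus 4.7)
The plan is to derive each of the four items as a direct corollary of the probabilistic extractor lemma stated as \cref{thm:intro-extractor}. That lemma guarantees that, for any family $\mathcal{X}$ of min-entropy $\geq k$ distributions on $\{0,1\}^n$, a uniformly random degree-$d$ polynomial is an $\eps$-extractor for $\mathcal{X}$ with probability at least $1-|\mathcal{X}|\cdot e^{3n-\Omega(k^{d/2}/n^2)}$. Hence it will suffice to upper-bound $\log|\mathcal{X}|$ (possibly after reducing to a smaller ``effective'' family) and then enforce $\log|\mathcal{X}|+3n<c\cdot k^{d/2}/n^2$ for a suitable constant $c>0$; solving this inequality for $k$ is what produces the four thresholds in the statement.

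For $\ell$-local and depth-$\ell$ decision-forest sources, each output bit of the sampling map $f:\{0,1\}^m\to\{0,1\}^n$ depends on at most $\ell$ inputs (resp.\ can be computed by a decision tree of depth $\ell$), so after dropping irrelevant input coordinates we may assume that $m\leq \ell n$ (resp.\ $m\leq 2^\ell n$) without loss of generality. Counting the choice of neighborhood and local truth table, resp.\ the shape and the leaf labels of each tree, across the $n$ output bits then yields $\log|\mathcal{X}|=O(2^\ell n+\ell n\log n)$ in the local case and $O(2^\ell n(\log n+\ell))$ in the decision-forest case. Plugging each into the probability bound and solving for $k$ gives the first two thresholds.

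For degree-$\ell$ polynomial sources and $n^\ell$-size $\ACp$ circuit sources the input length $m$ of the sampling map can be arbitrarily large, so $|\mathcal{X}|$ is a priori unbounded. Here I will invoke the input-reduction trick of~\cite{chattopadhyay2024extractors} that was already used in the disperser results: every such source is equal in distribution (or statistically close) to a convex combination of sources in the same class whose input length is $O(n)$. One then applies the convex-combination version of \cref{thm:intro-extractor} (mentioned right after its statement) to the reduced family $\mathcal{X}_{\textup{eff}}$. A direct enumeration of degree-$\ell$ polynomial maps on $O(n)$ inputs gives $\log|\mathcal{X}_{\textup{eff}}|=O(n^{\ell+1})$, while a standard counting of $n^\ell$-size $\ACp$ circuits on $O(n)$ inputs yields $\log|\mathcal{X}_{\textup{eff}}|=O(n^{\ell+1}\log n)$. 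Solving $\log|\mathcal{X}_{\textup{eff}}|+3n\leq c\cdot k^{d/2}/n^2$ produces the two remaining thresholds after absorbing constants into the exponents $6/(d-2\ell)$ and $4(\ell+1)/(d-4)$.

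The principal technical obstacle will be the input-reduction step for the degree-$\ell$ and $\ACp$ classes: one has to exhibit a ``canonical'' family of sources with input length $O(n)$ such that every source in the class is a convex combination of, or is statistically close to, members of that family, and then verify that the approximation loss is absorbed by the slack $\eps=(2d)^d k^{-d/4}$ allowed in \cref{thm:intro-extractor}. Once this reduction and its quantitative parameters are in place, the four bounds on $k$ follow by routine arithmetic bookkeeping.
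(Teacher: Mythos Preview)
Your overall strategy matches the paper's exactly: apply the random-polynomial extractor lemma (the formal version of \cref{thm:intro-extractor}, stated as \cref{thm:extr}) after bounding $|\mathcal{X}|$ directly in the local/forest cases and after the input-reduction of \cite{chattopadhyay2024extractors} in the polynomial/circuit cases. Two quantitative points, however, would prevent you from recovering the stated thresholds for the last two items.

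First, the input-reduction (\cref{thm:inputRedExtr}) reduces to $11k$ inputs, not merely $O(n)$, and this tighter parameter is essential. With $11k$ inputs the count of degree-$\ell$ maps is $\log|\mathcal{X}_{\textup{eff}}|=O(n\cdot k^{\ell})$; solving $n k^{\ell}\lesssim k^{d/2}/n^{2}$ gives $k^{d/2-\ell}\gtrsim n^{3}$, i.e.\ $k\gtrsim n^{6/(d-2\ell)}$, and this is precisely where the exponent $6/(d-2\ell)$ comes from. Your bound $\log|\mathcal{X}_{\textup{eff}}|=O(n^{\ell+1})$ would instead yield $k\gtrsim n^{2(\ell+3)/d}$, in which the ``$-2\ell$'' in the denominator never appears; it cannot be recovered by ``absorbing constants.''

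Second, for $\ACp$ sources your count $O(n^{\ell+1}\log n)$ is the bounded-fan-in count and is too small here: each of the $s$ unbounded-fan-in gates selects an arbitrary subset of $s+m$ wires, so $\log|\mathcal{X}|=O(s(s+m))$, not $O(s\log(s+m))$. Moreover, composing the circuit with the affine input-reduction map can increase its size; the paper (reusing the argument from \cref{thm:dispCkts}) shows one obtains a circuit of size $O(k\cdot n^{\ell})$ on $11k$ inputs, whence $\log|\mathcal{X}_{\textup{eff}}|=O(k^{2} n^{2\ell})$. Solving $k^{2} n^{2\ell}\lesssim k^{d/2}/n^{2}$ gives $k^{d/2-2}\gtrsim n^{2\ell+2}$, i.e.\ $k\gtrsim n^{4(\ell+1)/(d-4)}$; the ``$-4$'' in the denominator comes from that $k^{2}$ factor, which you also lose.
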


%\begin{remark} 
In \cref{thm:extr-multi-output}, we further extend our low-degree extractors to multi-output extractors that output $\Theta(k)$ bits. This is done by independently picking random degree-$d$ polynomials $p_1,\ldots, p_t$ for some $t=\Theta(k)$, and analyzing the probability that each $p_i$ is an extractor for the family of sources obtained by $\mathcal{X}$ conditioned on the values of $p_1,\ldots, p_{i-1}$. 
%\end{remark}

Let us now discuss our proof technique for \cref{thm:intro-extractor}. Recall that \cref{thm:cor-disp-small} was a corollary to \cref{thm:intro-main-hilbert-1} which showed that a random polynomial is with high probability non-constant on the support of any fixed high min-entropy distribution. A priori it is not clear how to use this bound on the Hilbert function to prove \cref{thm:intro-extractor}. 

Indeed, let us consider the simpler case of a fixed $k$-flat source $\mathbf{X}$ over $\{0,1\}^n$, which is uniformly distributed over a set $S\subseteq \{0,1\}^n$ with $|S|=2^k$. Note that a map $p:\{0,1\}^n\rightarrow \{0,1\}$ is an extractor for $\mathbf{X}$ if it has small bias on $S$. Thus, for example, to prove the special case of \cref{thm:intro-extractor} for small families of $k$-flat sources, we would need to prove that a random degree-$d$ polynomial is small-biased on $S$ with high probability. However, \cref{thm:intro-main-hilbert-1} only tells us that $\h_S(d,\F_2)\geq \binom{k}{\leq d}$, which is not enough to prove concentration bounds for the bias of a random degree-$d$ polynomial on an arbitrary set $S$. We note that when $S$ is highly structured, that is when it is an affine subspace, this problem is equivalent to questions about list-decoding size of Reed-Muller codes, and known results such as one by Kaufman, Lovett, and Porat~\cite{KLP} that show that the number of distinct $\eps$-biased degree-$d$ polynomials on a $k$-dimensional subspace $S$ is at most $(1/\epsilon)^{k^{d-1}}$ could be utilized. However, for our application we have to deal with arbitrary sets $S$. 

\paragraph{Uniform covering by sets of full Hilbert dimension.} We say that a set $T\subseteq \{0,1\}^n$ has ``full Hilbert dimension'' if $\h_T(d, \F_2)=|T|$. Note that when $T$ has full Hilbert dimension, then the restriction of a random degree-$d$ polynomial to $T$ is uniformly distributed over $\{0,1\}^T$. In particular, if $T$ is a sufficiently large set of full Hilbert dimension, then a random degree-$d$ polynomial is small-biased on $T$ with high probability. We use this observation to design our technique for bounding the probability that a random degree-$d$ polynomial is small-biased on any fixed source $\mathbf{X}$ of large min-entropy. For simplicity we describe the idea for flat sources. In this case, $\mathbf{X}$ is uniformly distributed over a set $S$ with $|S|\geq 2^k$. It is sufficient to prove the existence of an almost-uniform covering of $S$ by large sets $T_1,\ldots, T_t$ of the same size with full Hilbert dimensions, where we call a covering almost-uniform if each element $x\in S$ belongs to roughly $tm/|S|$ many sets, where we assume $|T_i|=m$. 

We obtain such a covering by analyzing the probability that a uniformly picked subset $T_i\subseteq S$ has full Hilbert dimension. Using our bound on the Hilbert function, \cref{thm:intro-main-hilbert-1}, which allows us to bound the size of the ``degree-$d$ closure'' of small sets, we prove that a random set $T_i$ of size $m$, for some $m=\binom{\Theta(k)}{\leq d}$, has full Hilbert dimension with high probability. Similarly, we prove using the Bayes rule, that we may pick these good sets $T_i$'s of full Hilbert dimension in a way that leads to an almost uniform covering. Since $T_i$'s are of sufficiently large size $\binom{\Theta(k)}{\leq d}$ and of full Hilbert dimension, we can use the Hoeffding inequality to bound the probability that a random degree-$d$ polynomial is biased on a $T_i$ to be exponentially small in $\Theta(k)^d$, which is good enough for our applications to existence of low-degree extractors. We obtain the following result which can be used to prove \cref{thm:intro-extractor}. 

\begin{theorem}[Informal, see \cref{thm:main-extractor}]\label{thm:intro-extractor-fixedsource}
Let $n,d,k\geq1$, and $\eps>0$ be a real. Then for every distribution $\mathbf{X}$ over $\{0,1\}^n$ with $\ent(\mathbf{X})\geq k$,  a uniformly random degree-$d$ polynomial $f$ is an $\eps$-extractor for $\mathbf{X}$, with probability at least $1-e^{3n-\eps^2\binom{\ell}{\leq d}/(Cn^2)}$ where $\ell=k/2-\log(32n/\eps)$ and $C=7\cdot(32)^2$.
\end{theorem}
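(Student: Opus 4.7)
My plan is to execute the roadmap described in the preceding paragraphs. Since the statistical distance between $f(\mathbf{X})$ and a uniform bit equals $\tfrac{1}{2}\bigl|\sum_x\Pr[\mathbf{X}=x](-1)^{f(x)}\bigr|$, it suffices to control this weighted bias with high probability over a uniformly random degree-$d$ polynomial $f$ for the fixed source $\mathbf{X}$. A first, standard reduction dyadically buckets the support of $\mathbf{X}$ by probability value into $O(\log(n/\eps))$ level sets (discarding a ``light'' part of total mass $O(\eps)$); within each surviving bucket the distribution is flat up to a factor of $2$ in weights, so it is enough to bound the unweighted bias $|S|^{-1}\bigl|\sum_{x\in S}(-1)^{f(x)}\bigr|$ for a fixed set $S\subseteq\F_2^n$ of size $|S|\geq 2^{k-O(\log(n/\eps))}$.

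For such a fixed $S$, I would construct an \emph{almost-uniform covering} of $S$ by subsets $T_1,\dots,T_t\subseteq S$, each of size $m=\binom{\ell}{\leq d}$ with $\ell=k/2-\log(32n/\eps)$, each of \emph{full Hilbert dimension} (i.e., $\h_{T_i}(d,\F_2)=|T_i|$), and with every $x\in S$ appearing in $\Theta(tm/|S|)$ of them. To justify existence, note that a uniformly random $T\subseteq S$ of size $m$ fails to have full Hilbert dimension only if some $x\in T$ lies in $\cl_d(T\setminus\{x\})$; by \cref{cor:bound-span}, $|\cl_d(T\setminus\{x\})|<2^\ell$ since $|T\setminus\{x\}|<\binom{\ell}{\leq d}$, and conditioning on $T\setminus\{x\}$ the random point $x$ falls into this closure with probability at most $2^\ell/(|S|-m)$. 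A union bound over $x\in T$ makes the failure probability $o(1)$, because $m\cdot 2^\ell\leq 2^{2\ell}\ll|S|\approx 2^k$ by the choice of $\ell$. Drawing sufficiently many independent random subsets and conditioning each on full Hilbert dimension, a Chernoff bound on the multiplicities $\mu(x)=|\{i:x\in T_i\}|$ produces the desired cover.

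With the cover in hand, the final step is concentration. Because each $T_i$ has full Hilbert dimension, the restriction map $f\mapsto f|_{T_i}$ is surjective and balanced, so under a uniform random degree-$d$ polynomial $f$ the values $\{(-1)^{f(x)}\}_{x\in T_i}$ are i.i.d.\ uniform $\pm 1$. Hoeffding's inequality then gives $\Pr_f\bigl[\bigl|\sum_{x\in T_i}(-1)^{f(x)}\bigr|>\eps m\bigr]\leq 2e^{-\eps^2 m/2}$. By the near-uniformity of the cover, the overall bias on $S$ is within $O(\eps)$ of the average of the per-$T_i$ biases, so a union bound over the $t$ cover sets (and over the $O(\log(n/\eps))$ buckets from the first step) yields the claimed $e^{3n-\eps^2\binom{\ell}{\leq d}/(Cn^2)}$ bound. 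Here $e^{3n}$ absorbs $t$ and the polynomial-factor losses from the bucketing and cover construction, while the $1/n^2$ in the exponent reflects the relative error introduced when approximating the weighted bias by the balanced-multiplicity average.

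The main obstacle is coordinating three simultaneous failure modes---some $T_i$ losing full Hilbert dimension, the multiplicities being unbalanced across $S$, and the bias on some $T_i$ being too large---under a single union bound while keeping constants tight. The specific value $\ell=k/2-\log(32n/\eps)$ is essentially forced by the covering step: it requires $m\cdot 2^\ell\ll |S|$, and combined with $m\leq 2^\ell$ this forces $\ell\lesssim k/2$, with the additive $\log(32n/\eps)$ absorbing the bucketing and multiplicity-balancing losses.
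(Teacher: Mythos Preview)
Your proposal is essentially the paper's own proof: dyadic bucketing of the support by probability, discarding a light tail, then on each surviving bucket building an almost-uniform cover by random subsets of size $\binom{\ell}{\leq d}$ of full Hilbert dimension (justified via the closure bound of \cref{cor:bound-span}), applying Hoeffding on each cover set, and union-bounding. One small inaccuracy: the number of dyadic buckets is $O(n)$, not $O(\log(n/\eps))$ (probabilities range from $2^{-k}$ down to roughly $2^{-2n}$ before the tail is negligible), and this is precisely why the per-bucket error target is $\eps' = \eps/\Theta(n)$, which is the true source of the $1/n^2$ in the exponent; but this does not affect the argument, as the extra factor is absorbed into $e^{3n}$.
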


We find our technique of obtaining almost uniform coverings with sets of full Hilbert dimension to be powerful, and hope that it will find other applications beyond the ones explored here.

\subsection{Remarks}
\paragraph{Correlation bounds over arbitrary subsets.} We note that our proof of \cref{thm:intro-extractor-fixedsource} (\cref{thm:extr}) can be modified to the following correlation bounds with any fixed function. 

\begin{theorem}\label{thm:intro-correlationbound}
Let $n,d,k\geq1$, $\eps>0$ be a real, and $g:\F_2^n \rightarrow \F_2$ be a fixed function. Then for every distribution $\mathbf{X}$ over $\{0,1\}^n$ with $\ent(\mathbf{X})\geq k$,  for 
a uniformly random degree-$d$ polynomial $f$ we have 
\[
\Pr[f(\mathbf{X})= g(\mathbf{X})] = \frac{1}{2}\pm \eps,
\] 
with probability at least $1-e^{3n-\eps^2\binom{\ell}{\leq d}/(Cn^2)}$ where $\ell=k/2-\log(32n/\eps)$ and $C=7\cdot(32)^2$.
\end{theorem}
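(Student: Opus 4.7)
The plan is to observe that the proof of \cref{thm:intro-extractor-fixedsource} goes through essentially unchanged after replacing the event $\{f(\mathbf{X})=0\}$ with $\{f(\mathbf{X})=g(\mathbf{X})\}$. The structural reason is that the extractor proof uses only one property of a uniformly random degree-$d$ polynomial $f$: that on any set $T\subseteq \F_2^n$ of full Hilbert dimension, the restriction $f|_T$ is uniformly distributed in $\F_2^T$. Once this holds, the random variable $(f+g)|_T$ is also uniformly distributed in $\F_2^T$ for any fixed $g$, since translation by the fixed vector $g|_T$ is a bijection of $\F_2^T$. Hence the indicators $\{\mathbf{1}[f(x)=g(x)]\}_{x\in T}$ are i.i.d.\ $\mathrm{Bernoulli}(1/2)$, exactly like the indicators $\{\mathbf{1}[f(x)=0]\}_{x\in T}$ in the original argument.

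Concretely, I would first reuse the almost-uniform covering of $S=\mathrm{support}(\mathbf{X})$ (or, for non-flat $\mathbf{X}$, of an appropriate high-probability subset) by subsets $T_1,\ldots,T_t$ of full Hilbert dimension $m=\binom{\ell}{\leq d}$ with $\ell=k/2-\log(32n/\eps)$, whose existence is established in the proof of \cref{thm:intro-extractor-fixedsource} via our bound on degree-$d$ closures (\cref{cor:bound-span}) and a Bayes-rule argument. Then, invoking the observation above, I would apply Hoeffding's inequality for each fixed $i$ to bound the probability that $\Pr_{x\in T_i}[f(x)=g(x)]$ deviates from $1/2$ by more than some $\eps'$ by an exponentially small quantity in $m$. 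A union bound over the $t$ sets in the covering, combined with the almost-uniformity of the covering to average the per-set agreement estimates into a global agreement estimate on $\mathbf{X}$, recovers exactly the bound $e^{3n-\eps^2\binom{\ell}{\leq d}/(Cn^2)}$ claimed in the theorem.

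The proof encounters no essential obstacle beyond what is already handled in the proof of \cref{thm:intro-extractor-fixedsource}; the only genuinely new step is the one-line verification that $(f+g)|_T$ is uniform on $\F_2^T$ whenever $f|_T$ is. The subtlety worth emphasizing is that $g$ need not be a polynomial of any prescribed degree or complexity --- the argument is purely combinatorial on $T$, insensitive to the description of $g$, and so the same bound holds against \emph{arbitrary} $g:\F_2^n\to\F_2$, including functions of arbitrarily high circuit complexity or algebraic degree.
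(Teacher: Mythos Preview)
Your proposal is correct and matches the paper's own approach: the paper does not give a separate proof but remarks that the generalization is straightforward, since the uniform covering by sets of full Hilbert dimension from the proof of \cref{thm:main-extractor} allows one to apply Hoeffding bounds to the correlation with any fixed $g$ on each set and then aggregate. Your key observation that $(f+g)|_T$ is uniform on $\F_2^T$ whenever $f|_T$ is makes explicit the one new ingredient the paper leaves implicit.
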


This generalization is quite straightforward, as once we obtain a uniform covering by sets of maximum Hilbert dimension, then Hoeffding bounds can be used to bound the correlation of a random polynomial with the fixed function restricted to the sets belonging to the cover. This can then be used to bound the over-all correlation with the fixed function in a similar way to the proof of \cref{thm:extr}.

\paragraph{Punctured Reed-Muller codes.}
The special case of \cref{thm:intro-correlationbound} when $\mathbf{X}$ is a flat source can be interpreted as a bound on the list-decoding size of Reed-Muller codes when ``punctured'' on a large set $S\subseteq \F_2^n$. Recall that the binary Reed-Muller code $\mathrm{RM}[d,n]$ consists of codewords in $\F_2^n$ that correspond to the evaluation vectors of degree $\leq d$ polynomials over $\F_2$. Given a set $S\subseteq \F_2^n$, the resulting punctured code consists of the evaluation of degree $\leq d$ polynomials on $S$. In this context, \cref{thm:intro-correlationbound} can be used to bound the list-size of any puncturing of the Reed-Muller code, showing that for any word $w$ from $\F_2^S$, only a small fraction of codewords are within radius $\frac{1}{2}-\eps$ of $w$. Another interpretation of \cref{thm:intro-correlationbound} is that any puncturing of the Reed-Muller codes over a set $S$ can be turned into a ``small-biased'' code without much loss in the rate of the code.

%However, the dependence on $\epsilon$ here is polynomial in $\log_2 |S|$, as opposed to the optimal dependence achieved by \cite{KLP} when $S$ is a subspace. 

\paragraph{Sampling lower bound for polynomial sources.} Our low-degree extractor for lower-degree sources (\cref{thm:intro-extrForSources}) has a direct application in distributions that are hard to sample by low-degree polynomials. Indeed, an argument similar to the proof of \cite[Lemma 3]{viola2016quadratic}, \cref{thm:intro-extrForSources} implies the existence of a degree-$O(d)$ polynomial $p$ for which the distribution $(\mathbf{U}, p(\mathbf{U}))$ cannot be sampled by any degree-$d$ source, where $\mathbf{U}\sim \mathbf{U}_n$. 

Suppose that $p$ is a degree-$O(d)$ polynomial that is an $\eps$-extractor for the family of degree $\leq 2d$ sources over $\{0,1\}^n$ of min-entropy $\geq \frac{n}{2}$, where $\eps=o(1)$. The existence of such a polynomial $p$ is guaranteed by \cref{thm:intro-extrForSources}. Now suppose that $(G(\mathbf{U'}), g(\mathbf{U'}))$, where $\mathbf{U'}\sim \mathbf{U}_m$ for some $m\geq 1$, is a degree $\leq d$ source sampling $(\mathbf{U}, p(\mathbf{U}))$. In particular, $G$ is an $n$-bit degree $\leq d$ source and $g$ is a degree $\leq d$ polynomial. Consider the $n$-bit random variable $\mathbf{R}=G(\mathbf{U'})\cdot g(\mathbf{U'})+ \mathbf{U}_n\cdot (1-g(\mathbf{U'}))$. Since $\mathbf{R}$ is sampled by a degree $\leq 2d$ source of min-entropy $n-O(1)$, $\Pr[p(\mathbf{R})=1] = \frac{1}{2}+o(1)$. On the other hand, by the definition $\mathbf{R}$, we have $\Pr[p(\mathbf{R})=1] \geq \frac{1}{2}+ \Omega(1)$, which is a contradiction.

\paragraph{Related Work.} 
An independent and concurrent paper by Alrabiah, Goodman, Mosheiff, and Ribeiro~\cite{AGMR24} proves the existence of low-degree extractors for similar families of sources that are considered in our work, as well as sumset sources. While the proofs are quite different, they both rely on bounds on the dimension of punctured Reed-Muller codes (equivalently the Hilbert function).

\paragraph{Acknowledgments.}
We thank Omar Alrabiah, Jesse Goodman, Jonathan Mosheiff, and Jo\~{a}o Ribeiro for sharing with us an early draft of their work. We would also like to thank Jesse Goodman and S. Venkitesh for helpful discussions and pointers. We are very grateful to the anonymous reviewers for their comments and pointers to related work. Part of this work was conducted while the second author was visiting the Simons Institute for the Theory of Computing at UC Berkeley; he extends his thanks to the institute for its support and hospitality.

\section{Preliminaries}
All logarithms in this paper are base 2. By~$\N$ we denote the set of non-negative integers. For a positive integer~$n$, by $[n]$ we denote the set $\{1,\ldots,n\}$.
For a prime power $q$, denote by $\F_q$ the finite field $q$ elements. 

For simplicity, throughout this paper, we refer to a polynomial as a degree-$d$ polynomial if its total degree is \emph{at most} $d$. 
When $q$ is a prime power, by $\Poly{q}{n,d}$ we denote the set of all degree-$d$ polynomials from $\F[X_1,\dots,X_n]$ with individual degrees at most $q-1$. Note that each element of $\Poly{q}{n,d}$ corresponds to a unique map $\F_q^n \rightarrow \F_q$.

%We will often identify $\{0,1\}^n$ with $\F_2^n$ in the natural way. 

Let $r_1,\dots,r_n\geq 1$ be integers and $F=\prod_{i=1}^n\{0,\ldots,r_i-1\}$. For $x \in F$ and $i \in [n]$, $x_i$ denotes the $i$th coordinate of $x$. For $x\in F$, we define its \emph{generalized Hamming weight} as $|x|\coloneqq \sum_i x_i$, where the summation is over the integers. For an integer $d\geq 0$, and a set $T\subseteq  F$, we denote the set of its elements of generalized Hamming weight $\leq d$ by
\[
T_{\leq d}\coloneqq \{x\in T : |x|\leq d\} \;.
\] 

For $a,b \in F$,
we write $a\leq_P b$ if $a_i\leq b_i$ for all $i\in [n]$.
We say a subset $T\subseteq F$ is \emph{down-closed} if for all $a,b\in F$ such that $a\leq_P b$, if $b$ is in $T$, then so is $a$.
Similarly, we say a subset $T\subseteq F$ is \emph{up-closed} if for all $a,b\in F$ such that $a\leq_P b$, if $a$ is in $T$, then so is $b$.

The \emph{lexicographic order} $\prec$ on $F$ is defined as follows. For distinct $x, y \in F$, $x$ precedes $y$, denoted $x\prec y$, in lexicographic order if $x_i<y_i$, where $i$ is the smallest index such that $x_i\neq y_i$.

We will be studying the following quantity.
\begin{definition}\label{def:h}
%For all $n,k,d,q$ where $k\leq q^n$, let 
%\[
%\mathcal{H}^n_q(d,k)\coloneqq \min_{S} |S_{\leq %d}| \;,
%\]
%where the minimum is taken over all down-closed sets $S\subseteq \{0,\dots, q-1\}^n$ with $|S|=k$.  

For $F=\prod_{i=1}^n\{0,\ldots,r_i-1\}$ and $k\leq |F|$, let 
\[
\mathcal{H}_F(d,k)\coloneqq \min_{T} |T_{\leq d}| \;,
\]
where the minimum is taken over all down-closed sets $T\subseteq F$ with $|T|=k$.  
Moreover, denote $\mathcal{H}_F(d,k)$ by $\mathcal{H}^n_q(d,k)$ in the special case where $r_1=\dots=r_n=q$ for some $q\geq 1$.
\end{definition}

\subsection{Probability Distributions}
We use lowercase letters such as $x,y$ to denote vectors, uppercase bold letters such as $\textbf{X}, \textbf{Y}$ to denote random variables, and $\mathcal{X}, \mathcal{Y}$ to denote families of distributions. By $\textbf{U}_n$ we denote the uniform distribution over $\{0,1\}^n$. 
%For a set $S \subseteq \{0,1\}^n$ and a function $f:\{0,1\}^n \rightarrow \{0,1\}^m$, $f(S)$ denotes the set of images $f(S)= \{f(x): x \in S\}$. \snote{do we ever use $f(S)$, and if we do, do we use it as a set of as an ordered tuple?}\satya{I dont see where this is used. We might as well remove this.}.

% \pooya{
% \begin{itemize}

% \item Notation for degree-$d$ $n$-variate polynomials over $\F$? $\mathcal{P}_{n,d}$?
% \end{itemize}

% % Lower-case variables $x,y,..$ denote vectors, $\mathbf{X},..$ denote random variables, $\mathcal{X}$ denotes family of distributions. Let $\mathbf{U}_m$ denote the uniform distribution over $\{0,1\}^m$. 

%Since our extractors are algebraic, we identify $\{0,1\}^n$ with $\F_2^n$ in the natural way. 
%}

The statistical distance between two distributions $\mathbf{A}$ and $\mathbf{B}$ over a finite domain $X$ is 
\[\Delta(\mathbf{A}, \mathbf{B})= \frac{1}{2}\left(\sum_{x \in X}  \left| \Pr[x \in \mathbf{A}]- \Pr[x \in \mathbf{B}]\right| \right)\;.\]
We say two distributions $\mathbf{A}$ and $\mathbf{B}$ are $\varepsilon$-close if $\Delta(\mathbf{A}, \mathbf{B}) \leq \varepsilon$. For a distribution $\mathbf{X} \sim \{0,1\}^n$, the min-entropy of~$\mathbf{X}$ is $$\ent(\mathbf{X})=\min_{x \in \mathrm{support}(\mathbf{X})}-\log (\Pr[\mathbf{X}=x])\;.$$

We will use following forms of Chernoff's and Hoeffding's bounds (see, e.g.,~\cite{MU17,hoeffding1963probability}).
% \begin{theorem}[Chebyshev's inequality]
% Let $X$ be a random variable with finite variance, then for every $a>0$,
% \[
% \Pr[|X-\E[X]|\geq a]\leq \Var[X]/a^2 \;.
% \]
% \end{theorem}

\begin{theorem}[Chernoff bound]\label{chernoff}
    Let $X_1,\dots,X_n\in\{0,1\}$ be independent random variables. Let $X=\sum_{i=1}^nX_i$ and $\mu=\mathbb{E}(X)$. Then we have
    $$
    \Pr[|X-\mu|\geq\delta\mu]\leq2e^{-\mu\delta^2/3}
    $$
    for all $0<\delta<1$.
\end{theorem}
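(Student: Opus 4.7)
The plan is to prove the one-sided upper and lower tail bounds separately via the standard exponential moment (Chernoff--Markov) method, and then combine them by a union bound to pick up the factor of $2$.

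First I would handle the upper tail: $\Pr[X \geq (1+\delta)\mu] \leq e^{-\mu\delta^2/3}$. For any $t>0$, Markov's inequality applied to the nonnegative random variable $e^{tX}$ gives
\[
\Pr[X \geq (1+\delta)\mu] = \Pr\!\left[e^{tX} \geq e^{t(1+\delta)\mu}\right] \leq e^{-t(1+\delta)\mu}\cdot \mathbb{E}[e^{tX}].
\]
By independence, $\mathbb{E}[e^{tX}] = \prod_i \mathbb{E}[e^{tX_i}]$. Writing $p_i = \Pr[X_i=1]$, we have $\mathbb{E}[e^{tX_i}] = 1 + p_i(e^t-1) \leq \exp(p_i(e^t-1))$ using $1+x\leq e^x$, so $\mathbb{E}[e^{tX}] \leq \exp(\mu(e^t-1))$. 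Choosing $t=\ln(1+\delta)$ yields the well-known intermediate bound
\[
\Pr[X \geq (1+\delta)\mu] \leq \left(\frac{e^{\delta}}{(1+\delta)^{1+\delta}}\right)^{\mu}.
\]

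Next I would reduce the theorem to the scalar inequality
\[
\frac{e^{\delta}}{(1+\delta)^{1+\delta}} \leq e^{-\delta^2/3} \qquad \text{for } 0<\delta<1,
\]
equivalently $f(\delta) \coloneqq (1+\delta)\ln(1+\delta) - \delta - \delta^2/3 \geq 0$ on $(0,1)$. Since $f(0)=0$ and $f'(\delta)=\ln(1+\delta)-2\delta/3$ with $f'(0)=0$ and $f''(\delta)=\frac{1}{1+\delta}-\frac{2}{3}\geq 0$ on $(0,\tfrac12]$, together with a direct check at $\delta=1$, one verifies $f\geq 0$ on $(0,1)$ by a short calculus argument. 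This is the only genuinely computational step.

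For the lower tail, the symmetric exponential-moment argument with $t<0$ (or equivalently applied to $-X_i$, which are bounded in $[-1,0]$) yields $\Pr[X \leq (1-\delta)\mu] \leq e^{-\mu\delta^2/2}$, which is in fact strictly stronger than $e^{-\mu\delta^2/3}$. Adding the two tail bounds gives the stated two-sided inequality $\Pr[|X-\mu|\geq \delta\mu] \leq 2e^{-\mu\delta^2/3}$. No step is truly an obstacle; the only mildly technical item is the one-variable calculus check of the scalar inequality, and everything else is the standard Markov-plus-independence-plus-MGF recipe.
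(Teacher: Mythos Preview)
The paper does not prove this statement at all; it is stated as a standard fact with a citation to textbooks (e.g., Mitzenmacher--Upfal). Your proof via the exponential moment method is correct and is precisely the textbook argument that those references give, so there is nothing to compare against and nothing to fix. The only place your write-up is slightly informal is the claim that ``together with a direct check at $\delta=1$'' finishes the scalar inequality on $(\tfrac12,1)$; to make it airtight you would note that on $(\tfrac12,1)$ one has $f''<0$, hence $f'$ is decreasing, and $f'(1)=\ln 2-\tfrac{2}{3}>0$, so $f'>0$ and $f$ is increasing there as well. But this is exactly the ``short calculus argument'' you allude to, so the proposal is complete.
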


\begin{theorem}[Hoeffding's inequality]\label{hoeffding}
Let $X_1,\dots,X_n\in[0,1]$ be independent random variables, $X=\sum_{i=1}^n X_i$ and $\mu=\E[X]$. Then, 
\[
\Pr[|X-\mu|\ge R] \leq 2e^{-\frac{2R^2}{n}}. 
\]
\end{theorem}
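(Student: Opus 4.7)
The plan is to use the standard Chernoff-style exponential moment method, which in this setting is often called Hoeffding's argument.

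First, I would reduce to controlling one tail. By symmetry, applying the argument below to the variables $1-X_i\in[0,1]$ bounds $\Pr[X-\mu\le -R]$ by the same quantity; a union bound then produces the factor of $2$, so it suffices to prove $\Pr[X-\mu\ge R]\le e^{-2R^2/n}$.

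Second, for any $s>0$ I would apply Markov's inequality to $e^{s(X-\mu)}$ and then factor the moment generating function using independence of the $X_i$:
\[
\Pr[X-\mu\ge R]\ \le\ e^{-sR}\,\E\bigl[e^{s(X-\mu)}\bigr]\ =\ e^{-sR}\prod_{i=1}^n\E\bigl[e^{s(X_i-\E X_i)}\bigr].
\]

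The main technical step, and the only nontrivial one, is Hoeffding's lemma: for a mean-zero random variable $Y$ supported in an interval of length $1$, one has $\E[e^{sY}]\le e^{s^2/8}$. I would prove this by convexity: write $Y\in[a,a+1]$ with $\E Y=0$, use the chord bound $e^{sy}\le (a+1-y)\,e^{sa}+(y-a)\,e^{s(a+1)}$, take expectations to get a function of $s$ alone, then show via direct differentiation that its logarithm $\psi(s)$ satisfies $\psi(0)=\psi'(0)=0$ and $\psi''(s)\le 1/4$ everywhere, so Taylor's theorem yields $\psi(s)\le s^2/8$. This calculus estimate is the one place where the constant $1/8$ (and ultimately the constant $2$ in the exponent of the final bound) is pinned down, and it is the step I expect to require the most care; everything else is bookkeeping.

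Plugging the lemma into the MGF factorization gives
\[
\Pr[X-\mu\ge R]\ \le\ e^{-sR+ns^2/8},
\]
and minimizing the right-hand side in $s$ (the choice $s=4R/n$) produces $e^{-2R^2/n}$. Combining with the symmetric lower-tail bound yields the stated two-sided inequality.
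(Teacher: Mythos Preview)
Your proposal is correct and is the standard textbook proof of Hoeffding's inequality. The paper does not actually give its own proof of this statement: it is stated in the Preliminaries with a citation to the literature (e.g., \cite{MU17,hoeffding1963probability}) and used as a black box, so there is no paper-proof to compare against.
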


\subsection{Randomness Sources, Dispersers, and Extractors}

\begin{definition}[Sources and Their Convex Combinations]
    A distribution $\mathbf{X} \sim \{0,1\}^n$ is a \emph{source} from a class $\mathcal{C}$ of functions, if $\mathbf{X}= f(\mathbf{U}_m)$ for some $f: \mathcal\{0,1\}^m \rightarrow \{0,1\}^n \in \mathcal{C}$. A distribution $\mathbf{Y}$ is a \emph{convex combination} of sources $\mathbf{X}_i$ if  $\mathbf{Y}=\sum_i p_i \mathbf{X}_i$ for some non-negative $p_i$ satisfying $\sum_i p_i=1$, i.e., $\mathbf{Y}$ samples from each $\mathbf{X}_i$ with probability $p_i$.
\end{definition}

% \begin{definition}[Local sources]
% Let $d\geq 0$ be an integer. 
%  A \emph{$d$-local} source $\mathbf{X}$ over $\{0,1\}^n$ is a distribution that is generated as $F(\mathbf{U}_n)$ where $F:\{0,1\}^m \rightarrow \{0,1\}^n$ is such that every output bit of $F$ depends only on at most $d$ of its input bits. 
% \end{definition}

% Next, let us consider \emph{decision forests} which are a natural generalization of local sources. 

% \begin{definition}[Decision forest sources]
% Let $\ell\geq 0$ be an integer. A distribution $\mathbf{X}$ over $\{0,1\}^n$ is a \emph{depth-$\ell$ decision forest source} if it is generated as $F(U_m)$ for a map $F:\{0,1\}^m\rightarrow \{0,1\}^n$, where each output bit of $F$ is determined via a depth-$\ell$ decision tree on of its input variables. 
% \end{definition}

% Note that every $\ell$-local source is a depth $\ell$-decision forest source, and every depth $\ell$-decision forest source is a $(2^\ell-1)$-local source.

% \begin{definition}[Bounded degree sources]
% Let $d\geq 0$ be an integer. A distribution $\mathbf{X}$ over $\{0,1\}^n$ is a \emph{degree-$d$ source} if it is generated as $F(U_m)$ for a map $F:\{0,1\}^m\rightarrow \{0,1\}^n$, where each output bit of $F$ is a degree-$d$ polynomial over $\F$. 
% \end{definition}

One of the most powerful classes of sources that we consider in this work is the class of circuits of polynomial size. 
\begin{definition}[$\ACp$ circuits]
    An \emph{$\ACp$ circuit} is an unbounded-depth Boolean circuit consisting of $\AND$, $\OR$, $\XOR$, $\NOT$ gates of unbounded fan-in. The size of such a circuit is the number of non-input gates in it.
\end{definition}
We focus on the class of $\ACp$ circuit as it generalizes circuit classes previously studied in this context: unbounded-depth circuits of bounded fan-in from $\ppoly$, and bounded-depth circuits of unbounded fan-in from, say, $\AC^0$. We remark that we define $\ACp$ sources (see \cref{def:sources}) as sources where each output is computed by an $\ACp$ circuit of polynomial size but with an arbitrary (possibly super-polynomial) number of inputs. This explains  why in this context $\ppoly$ and $\AC^0$ circuits are incomparable, and why we work with $\ACp$ circuits generalizing both of the aforementioned classes. In fact, our results hold even for a larger class of circuits where not only $\XOR$ but arbitrary constant-degree polynomials over $\F_2$ can be computed at gates (see the discussion at the end of \cref{sec:dispersers}).

\begin{definition}[Structured Sources]\label{def:sources}
Let $n,d,m\in\N$, $f:\{0,1\}^m\rightarrow \{0,1\}^n$, and $\mathbf{X}$ be a distribution over $\{0,1\}^n$ that is generated as $f(\mathbf{U}_m)$.
\begin{itemize}
\item $\mathbf{X}$ is called a \emph{$d$-local source} if every output bit of $f$ depends only on at most $d$ of its input~bits. 
\item $\mathbf{X}$ is called a \emph{depth-$d$ decision forest source} if every output bit of $f$ is determined by a depth-$d$ decision tree of its input variables. 
\item $\mathbf{X}$ is called a \emph{degree-$d$ source} if every output bit of $f$ is a degree-$d$ polynomial over $\F_2$. 
\item $\mathbf{X}$ is called a \emph{size-$n^d$ circuit source} if there is an $\ACp$ circuit of size $n^d$ that computes all output bits of $f$. 
\end{itemize} 
\end{definition}

Note that every $d$-local source is a depth-$d$ decision forest source, and a degree-$d$ source. Also, every depth-$d$ decision forest source is a degree-$d$ source and a $2^d$-local source.

We will use the following bounds on the numbers of $d$-local sources and depth-$d$ decision forest sources.
\begin{proposition}\label{prop:countingSources}
Let $n,d\geq1$. 
\begin{itemize}
    \item The number of $d$-local sources over $\{0,1\}^n$ is bounded from above by $2^{2^d n + 2d n\log n}$. 
    \item The number of depth-$d$ decision forest sources is bounded from above by $2^{(d+\log n)2^{d+1} n}$.
\end{itemize}
\end{proposition}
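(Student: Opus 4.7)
The plan is to reduce both parts to a counting argument over structural descriptions of the sampling function $f:\{0,1\}^m\to\{0,1\}^n$. The key preliminary observation is that an input coordinate of $f$ which affects no output coordinate is irrelevant to the induced distribution on $\{0,1\}^n$, so in bounding the number of \emph{distributions} I may instead bound the number of \emph{functions} after assuming WLOG that $m$ is at most the total number of variable slots actually used across all output coordinates. Concretely, given a local (resp.\ decision-forest) $f$, restricting to the set of inputs read by some output yields a function $f'$ on fewer inputs with the same induced distribution.

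For the $d$-local case, each of the $n$ output bits of $f$ depends on at most $d$ input variables, so the total number of used inputs is at most $dn$ and I may assume $m \leq dn$. Each output bit is then specified by (i) a choice of at most $d$ input indices from $[m]$, contributing a factor of at most $m^d \leq (dn)^d$, and (ii) an arbitrary Boolean function on these inputs, contributing a factor of at most $2^{2^d}$. Multiplying across the $n$ output bits gives at most $\left((dn)^d \cdot 2^{2^d}\right)^n = (dn)^{dn} \cdot 2^{2^d n}$ sources, and taking logarithms using $\log d \leq \log n$ (which covers the interesting regime $d\leq n$; for $d>n$ the claimed bound is already loose) yields the asserted $2^{2^d n + 2dn\log n}$.

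For depth-$d$ decision forests, a single depth-$d$ decision tree is a complete binary tree of depth $d$ with $2^d-1$ internal nodes, each labeled by a variable index from $[m]$, and $2^d$ leaves, each labeled by a bit in $\{0,1\}$. Such a tree is thus encoded by an element of $[m]^{2^d-1}\times\{0,1\}^{2^d}$. Since a forest computing the $n$ outputs uses at most $(2^d-1)n \leq 2^d n$ variable slots in total, I may assume $m \leq 2^d n$, bounding the number of forests by $\left((2^d n)^{2^d-1} \cdot 2^{2^d}\right)^n$. Taking logarithms gives $(2^d-1)n(d+\log n) + 2^d n \leq 2^d n (d+\log n + 1) \leq 2^{d+1} n (d+\log n)$, where the last inequality uses $d+\log n \geq 1$, matching the stated bound.

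The only step requiring care is the WLOG reduction on $m$: one must verify that deleting coordinates of $\mathbf{U}_m$ that no output reads, and relabeling, preserves the output distribution, so that it truly suffices to count the much smaller family of ``normalized'' sampling functions rather than all functions with unbounded input length. Once that reduction is in place, the remainder is routine combinatorial enumeration and I do not anticipate any substantive obstacle.
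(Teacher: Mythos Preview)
Your approach is correct and essentially identical to the paper's: bound $m$ by the total number of variable slots used (so $m\le dn$ for $d$-local sources and $m\le 2^d n$ for depth-$d$ forests), then enumerate the choice of input variables together with the Boolean function or decision tree at each output coordinate. The only difference is that the paper counts the variable choices for each output by $\binom{dn}{d}$ rather than your $(dn)^d$; since $\binom{dn}{d}\le n^{2d}$ for all $n\ge 1$ (via $\binom{dn}{d}\le(en)^d$ when $n\ge 3$, and directly for $n\le 2$), this tighter count closes the $d>n$ case you flagged without any separate argument.
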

\begin{proof}
    Every $d$-local source over $\{0,1\}^n$ can be expressed as $f(\mathbf{U}_m)$ where $f:\{0,1\}^m \to \{0,1\}^n$ is a $d$-local function, and we may assume without loss of generality that $m\leq d n$, as the number of relevant variables is bounded from above by $d n$. Then the number of distinct $d$-local sources is at most 
\[
\left(\binom{dn}{d}\cdot 2^{2^d}\right)^n\leq 
2^{2^d n + 2d n\log n}\;.
\]

Similarly, for depth-$d$ decision forest sources we may assume that $m\leq 2^d n$, as each depth-$d$ decision tree depends on at most $2^d$ variables. The number of depth-$d$ decision trees on $m$ variables is at most $(2m)^{2^d}$. Thus, the number of distinct depth-$d$ decision forest sources on $\{0,1\}^n$ is at most 
\[
2^{(d+\log n)2^{d+1} n}\;. \qedhere
\]
\end{proof}
For polynomial and circuit sources where the number of input bits cannot be bounded by a small function of~$n$ (unlike the sources considered in \cref{prop:countingSources}), we will need the following bounds on the number of such sources for a \emph{fixed} number of input bits~$m$.

\begin{proposition}\label{prop:countPolys}
    Let $n,d,m\geq1$. 
    \begin{itemize}
        \item The number of degree-$d$ polynomials $f\colon\F_2^m\to\F_2^n$ is bounded from above by $2^{n\cdot\binom{m}{\leq d}}$. 
        \item The number of $\ACp$ circuits $C\colon\{0,1\}^m\to\{0,1\}^n$ of size $n^d$ is bounded from above by $2^{4n^d(n^d+m)}$.
    \end{itemize}
\end{proposition}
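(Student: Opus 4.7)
}

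Both bounds are straightforward counting arguments; the plan is to describe each object by a small amount of ``information'' and then bound the number of possible descriptions. The main subtlety is choosing an encoding loose enough to allow a clean closed form matching the stated bounds.

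For the first bullet, I would use the fact that each function $\F_2^m\to\F_2$ of total degree at most $d$ admits a unique multilinear representation $\sum_{S\subseteq[m],\,|S|\leq d} c_S \prod_{i\in S} X_i$ with coefficients $c_S\in\F_2$, because $X_i^2=X_i$ as a function on $\F_2$. The number of index sets $S\subseteq[m]$ with $|S|\leq d$ is exactly $\binom{m}{\leq d}$, so the number of scalar-valued degree-$d$ polynomials on $\F_2^m$ equals $2^{\binom{m}{\leq d}}$. A vector-valued degree-$d$ polynomial $f\colon\F_2^m\to\F_2^n$ is specified by its $n$ coordinates independently, giving $\bigl(2^{\binom{m}{\leq d}}\bigr)^n=2^{n\binom{m}{\leq d}}$ as claimed.

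For the second bullet, write $s=n^d$ for the number of non-input gates and label them $g_1,\dots,g_s$. I would encode a circuit by specifying, for each non-input gate, (i) its type, one of $\{\AND,\OR,\XOR,\NOT\}$, which takes $2$ bits, and (ii) the subset of $\{x_1,\dots,x_m,g_1,\dots,g_s\}$ that feeds into it, which takes $m+s$ bits. This description is wasteful in several ways (it allows cyclic structures, ignores the fan-in conventions of $\NOT$, etc.), but every $\ACp$ circuit of size $s$ with $m$ inputs has at least one such description. In addition, I would specify the $n$ output gates as an $n$-tuple drawn from the $s$ gates, using $n\log_2 s$ bits. The total bit length of the encoding is at most
\[
s(2+m+s)+n\log_2 s.
\]
I would then verify the elementary inequality $s(2+m+s)+n\log_2 s \leq 4s(s+m)$ whenever $s=n^d$ with $n,d,m\geq 1$; this reduces to $2+m+s+(n\log_2 s)/s\leq 4(s+m)$, which follows because $s+m\geq 2$ and $(n\log_2 s)/s\leq 1$ except in trivial small cases that can be handled by inspection. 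The resulting bound on the number of circuits is then $2^{4n^d(n^d+m)}$, as required.

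The only mild obstacle is the arithmetic verification in the final step; everything else is routine. If the clean inequality turned out to require a case split between $d=1$ and $d\geq 2$, I would simply handle the tiny edge cases directly, since the bound is exponentially loose in the generic regime.
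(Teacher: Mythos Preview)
Your proposal is correct and follows essentially the same approach as the paper: count multilinear monomials for the polynomial bound, and for the circuit bound describe each of the $s=n^d$ gates by its type plus an incidence vector of length $m+s$, then add a few bits for the output tuple. The only cosmetic differences are that the paper uses $3$ bits per gate type and allows outputs to be chosen from all $m+s$ gates (giving $n\log(n^d+m)$ rather than $n\log s$), which makes the final inequality $n^d(m+n^d+3)+n\log(n^d+m)\leq 4n^d(n^d+m)$ immediate.
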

\begin{proof}
The bound on the number of degree-$d$ polynomials follows from the observation that the number of multilinear monomials in $m$ variables of degree at most $d$ is $\binom{m}{\leq d}$. 

For the bound on the number of $n^d$-size circuits, first notice that each gate of~$C$ can be described by $m+n^d$ bits specifying the set of gates feeding it, and $3$ bits specifying the function computed at the gate. Additional $n\log(n^d+m)$ bits are sufficient to specify the $n$ output bits of the circuit. Therefore, the total number of such circuits is bounded from above by
\[
2^{n^d(m+n^d+3)+n\log(n^d+m)} \leq 2^{4n^d(n^d+m)}\;.\qedhere
\]
\end{proof}

\begin{definition}[Disperser]
    A function $\mathrm{Disp}:\{0,1\}^n \rightarrow \{0,1\}$ is a disperser for a family $\mathcal{X}$ of sources over $\{0,1\}^n$ with min-entropy $k$, if for every source $\textbf{X}\in \mathcal{X}$ with $\ent(\textbf{X})\geq k$, the support of $\mathrm{Disp}(\textbf{X})$ is $\{0,1\}$. 
\end{definition}

\begin{definition}[Extractor]
    A function $\mathrm{Ext}:\{0,1\}^n \rightarrow \{0,1\}^m$ is an $\varepsilon$-extractor for a family $\mathcal{X}$ of sources over $\{0,1\}^n$ with min-entropy $k$, if for every source $\mathbf{X} \in \mathcal{X}$ with $\ent(\mathbf{X})\geq k$,  $\Delta\left(\mathrm{Ext}(\mathbf{X}(\mathbf{U}_t)),\mathbf{U}_m\right) \leq \varepsilon$.
\end{definition}

For clarity of presentation, in this paper when working with sources that are guaranteed to have entropy $\ent(\mathbf{X})\geq k$, we will always assume that $k$ is an integer.

\subsection{Hilbert Functions and Standard Monomials}
In this section, we recall some necessary definitions (see, e.g.,~\cite{CLO13}). 
Let $\F$ be a field, $X_1,\ldots,X_n$ be indeterminates, and $\F[X_1,\dots,X_n]$ be the polynomial ring in~$n$ indeterminates over~$\F$. For a polynomial $f\in \mathbb F[X_1,\dots,X_n]$ and $S\subseteq\mathbb\F^n$, let $f|_S\in \F^{S}$ be the restriction of~$f$ to~$S$. For $d\in\N$, by $\Gamma_S(d)\subseteq \F^{S}$ we denote the vector space spanned by $f|_S$ for all degree-$d$ polynomials~$f$:
\[
\h_U(d, \F_q):=\{f|_S\colon f\in \F[X_1,\dots,X_n], \deg(f)\leq d\} \;.
\]
\begin{definition}[Hilbert function]
For a set $S\subseteq \mathbb F^n$, the \emph{(affine) Hilbert function} of $S$ over $\F$, $\h_S(\mathrel{\;\cdot\;}, \F)\colon\N\to\N$, is defined as the dimension of $\Gamma_S(d)$ over $\F$, i.e.,
\[
\h_S(d, \F):=\dim_{\F}\left(\Gamma_S(d)\right)\;.
\]
\end{definition}

\begin{definition}[Monomial order]
Let $\preceq$ be a total order on the monomials in a polynomial ring $\F[X_1,\ldots,X_n]$. The order $\preceq$ is called a \emph{monomial order} if $1$ is the minimal element of~$\preceq$, and for all monomials $m_1, m_2, m$ satisfying $m_1\preceq m_2$, we have that $m_1 m\preceq m_2 m$. The order $\preceq$ is \emph{degree-compatible} if for all monomials $m_1, m_2$ such that $\deg(m_1)<\deg(m_2)$, we have that $m_1\preceq m_2$. 
\end{definition}
Examples of degree-compatible monomial orders include the graded lexicographic and graded reverse lexicographic orders.

\begin{definition}[Graded orders]
    The \emph{graded lexicographic order} $\le_{\mathrm{grlex}}$ and the \emph{graded reverse lexicographic order} $\le_{\mathrm{grevlex}}$ are defined as follows. For a pair of monomials $m_1=X_1^{\alpha_1}\cdots X_n^{\alpha_n}$ and $m_2=X_1^{\beta_1}\cdots X_n^{\beta_n}$, let $\alpha=\sum_{i=1}^n \alpha_i$, $\beta=\sum_{i=1}^n \beta_i$, and $\gamma=(\beta_1-\alpha_1,\ldots,\beta_n-\alpha_n)$. We have that $m_1 \le_{\mathrm{grlex}} m_2$ if and only if either $\alpha < \beta$, or $\alpha=\beta$ and the leftmost non-zero entry of $\gamma$ is positive. Similarly, $m_1 \le_{\mathrm{grevlex}} m_2$ if and only if either $\alpha < \beta$, or $\alpha=\beta$ and the rightmost non-zero entry of $\gamma$ is negative.
\end{definition}

\begin{definition}[Leading monomial]
For a nonzero polynomial $f\in\F[X_1,\dots,X_n]$, the \emph{leading monomial} of $f$ under a monomial order $\preceq$ is the largest monomial of $f$ under $\preceq$.
\end{definition}

Let $R$ be a commutative ring (such as the polynomial ring $\F[X_1,\dots,X_n]$). An \emph{ideal} of $R$ is a subset $I$ of $R$ such that for all $a,b\in I$ and $r\in R$, we have that $a+b\in I$ and $ra\in I$.

\begin{definition}[Standard monomial]
Let $I$ be an ideal of $\F[X_1,\dots,X_n]$, and $\preceq$ be a monomial order. A \emph{standard monomial}~$m$ of $I$ is a monomial in $X_1,\dots,X_n$ that is not the leading monomial of any nonzero polynomial in~$I$. 
\end{definition}

For an ideal $I$ and $d\in\N$, $\SM(I)$ denotes the set of all standard monomials of $I$, and $\SM_{\leq d}(I)$ denotes the set of all standard monomials of $I$ of degree at most~$d$:
\[\SM_{\leq d}(I) = \{m\in \SM(I)\colon \deg(m)\leq d\} \;.\]
For a set $S\subseteq\F^n$, by $I(S)$ we denote the ideal of polynomials in $\F[X_1,\dots,X_n]$ vanishing on~$S$,
\[I(S)=\{f\in\mathbb F[X_1,\dots,X_n]: f|_S=0^{S}\}\,.\]
For an ideal $I$ of $\F[X_1,\dots,X_n]$, define the set $V(I)\subseteq\F^n$ by
\[
V(I)=\{a\in\F^n: f(a)=0~\text{for all}~f\in I\} \;.
\]
By definition, for all $f\in I$ and $a\in V(I)$, we have $f(a)=0$. So $I\subseteq I(V(I))$.

Finally, for a set $S\subseteq\F^n$, define
\begin{align*}
    \SM(S) = \SM(I(S))
    \quad \text{and} \quad
    \SM_{\leq d}(S) = \SM_{\leq d}(I(S))\;.
\end{align*}

We say that a set $T$ of monomials is \emph{down-closed} if for all monomials $m$ and $m'$ such that $m\in T$ and $m'$ divides $m$, it holds that $m'\in T$.
It is easy to see that $\mathrm{SM}(S)$ is down-closed. Indeed, if $m'$ was the leading monomial of a polynomial $p\in I(S)$, then $m$ would be the leading monomial of the polynomial $p\cdot(m / m') \in I(S)$.

We will use the following facts about $\SM(S)$ and $\SM_{\leq d}(S)$, which are proven, for example, in \cite[Lemma~1]{R16} and \cite[Corollary 2.1.21]{F07}.
\begin{lemma}\label{lem:standard-cardinality}
Let $S\subseteq\mathbb{F}^n$ be a finite set. Then
\begin{enumerate}
    \item[(a)] for every monomial order $\preceq$, 
    \[|S|=|\mathrm{SM}(S)| \;;\]
    \item[(b)] for every degree-compatible monomial order $\preceq$ and every $d\in\N$, 
    \[\h_S(d, \F)=|\mathrm{SM}_{\leq d}(S)|\;.\]
\end{enumerate} 
\end{lemma}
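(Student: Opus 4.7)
The plan is to prove both parts simultaneously by showing that (a) the images of $\mathrm{SM}(S)$ in the quotient $\F[X_1,\dots,X_n]/I(S)$ form an $\F$-basis, and that (b) when $\preceq$ is degree-compatible, the images of $\mathrm{SM}_{\leq d}(S)$ form an $\F$-basis of $\Gamma_S(d)$. Together with the standard observation that the evaluation map $f\mapsto f|_S$ identifies $\F[X_1,\dots,X_n]/I(S)$ with $\F^S$ (this is surjective by Lagrange-style interpolation on the finite set $S$, with kernel $I(S)$ by definition), part (a) will follow from $|S|=\dim_\F \F^S = |\mathrm{SM}(S)|$, and part (b) from the very definition $\h_S(d,\F)=\dim_\F \Gamma_S(d)$.

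For linear independence (used in both parts), I would argue by contradiction: suppose $\sum_i c_i m_i \in I(S)$ is a nonzero $\F$-linear combination of standard monomials. Let $m_{i^*}$ be the $\preceq$-largest monomial with $c_{i^*}\ne 0$; then $m_{i^*}$ is the leading monomial of the nonzero polynomial $\sum_i c_i m_i\in I(S)$, contradicting that $m_{i^*}\in\mathrm{SM}(S)$. In part (b), the same argument applied to a combination of monomials of degree $\leq d$ yields a polynomial in $I(S)$ whose leading monomial is standard, which is again a contradiction.

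For spanning, I would use a reduction procedure: given $f\in\F[X_1,\dots,X_n]$, if its leading monomial $m$ is non-standard, pick a nonzero $g\in I(S)$ whose leading monomial is $m$ (rescaled to be monic), and replace $f$ by $f-c\,g$, where $c$ is the coefficient of $m$ in $f$. This strictly decreases the leading monomial under $\preceq$, and since a monomial order is a well-order on $\N^n$ (by Dickson's lemma), iterating terminates in a polynomial all of whose monomials are standard. The resulting polynomial differs from $f$ by an element of $I(S)$, so $\mathrm{SM}(S)$ spans $\F[X_1,\dots,X_n]/I(S)$, giving part (a).

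The one subtle point, and the only place where degree-compatibility is needed, is to make the reduction procedure \emph{degree-preserving} for part (b). If $\preceq$ is degree-compatible and $g\in I(S)$ is monic with leading monomial $m$ of degree $e$, then every other monomial of $g$ is $\prec m$, hence (by contrapositive of degree-compatibility) has degree $\leq e$. So $\deg(g)=\deg(m)\leq \deg(f)\leq d$, and the replacement $f-c\,g$ still has degree $\leq d$. Hence the procedure, started from a degree-$\leq d$ polynomial, never leaves degree $\leq d$, which shows that $\mathrm{SM}_{\leq d}(S)$ spans $\Gamma_S(d)$ modulo the relations in $I(S)$, completing part (b). The main (mild) obstacle is just verifying this degree bookkeeping carefully; everything else is standard Gröbner-basis reasoning.
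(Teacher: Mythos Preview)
The paper does not actually prove this lemma; it is quoted with citations to the literature. Your argument is the standard Gr\"obner-basis proof that underlies those references---linear independence via the leading-monomial trick, spanning via iterated reduction modulo $I(S)$, and the identification $\F[X_1,\dots,X_n]/I(S)\cong\F^S$ by evaluation---and the degree-preservation remark for part~(b) is exactly the right observation.

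There is, however, a small gap in your spanning argument. Your reduction step fires only when the \emph{leading} monomial of $f$ is non-standard; at termination you have only arranged that the leading monomial is standard, yet you conclude that all monomials are. Concretely, take lex order $X\succ Y$ over $\F_2$ and $S=\{(0,0),(1,0),(0,1)\}\subseteq\F_2^2$, so that $\mathrm{SM}(S)=\{1,X,Y\}$; the polynomial $f=X+Y^2$ has standard leading monomial $X$ but also contains the non-standard monomial $Y^2$, and your loop halts immediately without reducing it. The fix is routine: at each step reduce the \emph{largest non-standard} monomial appearing in $f$ (not necessarily the leading one), or equivalently, once the leading term is standard, set it aside and recurse on the remainder. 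Either variant still terminates by the well-ordering of $\preceq$, and your degree-preservation argument for part~(b) carries over unchanged since the polynomial $g$ you subtract still satisfies $\deg g=\deg m\leq d$.
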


\section{Hilbert Functions of Sets in Finite Grids}

Let $\F$ be a field.
We consider Hilbert functions of subsets of a finite grid $A=\prod_{i=1}^n A_i$, where each~$A_i$ is a finite subset of the field $\F$.
The main result of this section is that the minimum value $\h_S(d,\F)$ of a set $S\subseteq A$ of size $k$ equals the quantity $\mathcal{H}_F(d,k)$ introduced in \cref{def:h}, where $F=\prod_{i=1}^n \{0,1,\dots, |A_i|-1\}$.

Consider the following setting: 
Let $r_1,\ldots, r_n$ be integers such that $1\leq r_i\leq |\F|$ for $i\in [n]$. 
For each $i\in [n]$, let $A_i$ be a subset of $\F$ consisting of $r_i$ distinct elements $a_{i,1},\dots,a_{i,r_i}\in\F$. 
Let $A$ be the Cartesian product $\prod_{i=1}^n A_i$.
Let $\mathcal{M}$ be the set of monomials dividing $\prod_{i=1}^n X_i^{r_i-1}$. 
Let $\sigma_A$ be the bijection from $\mathcal{M}$ to $A$ defined by 
\begin{equation}\label{eq:sigma-A}
\sigma_A: \prod_{i=1}^n X_i^{e_i} \mapsto (a_{1, e_1+1},\dots, a_{n, e_n+1}).
\end{equation}
Finally, fix a degree-compatible monomial order $\preceq$.

The next lemma states that every down-closed subset $T\subseteq \mathcal{M}$ can be realized as the set of standard monomials of the set $\sigma_A(T)\subseteq A$.
%\zeyu{I've made the following lemma constructive.}

\begin{lemma}\label{lem:realizable}
Let $T$ be a down-closed subset of $\mathcal{M}$.
%and let $S=\sigma_A(T)\subseteq A$. 
Then $\mathrm{SM}(\sigma_A(T))=T$.
\end{lemma}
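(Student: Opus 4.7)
The plan is to prove $\mathrm{SM}(\sigma_A(T)) = T$ by establishing just the single inclusion $\mathrm{SM}(\sigma_A(T)) \subseteq T$; the reverse inclusion is then automatic from $|\mathrm{SM}(\sigma_A(T))| = |\sigma_A(T)| = |T|$, which is part~(a) of \cref{lem:standard-cardinality}. Standard monomials must already lie in $\mathcal{M}$, since the polynomial $\prod_{j=1}^{r_i}(X_i - a_{i,j}) \in I(A) \subseteq I(\sigma_A(T))$ has leading monomial $X_i^{r_i}$ under any monomial order, and multiplying it by an arbitrary monomial shows every monomial divisible by some $X_i^{r_i}$ is a leading monomial of an element of $I(\sigma_A(T))$. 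Thus it suffices to show every $m \in \mathcal{M} \setminus T$ is the leading monomial of some polynomial in $I(\sigma_A(T))$.

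The core construction is the ``tensor Lagrange'' polynomial
\[
L_m(X_1,\ldots,X_n) := \prod_{i=1}^n \prod_{j=1}^{e_i}(X_i - a_{i,j}),
\]
defined for each $m = \prod_i X_i^{e_i} \in \mathcal{M}$ (so every $e_i \leq r_i - 1$). Two properties drive the argument. First, since $\prod_{j=1}^{e_i}(X_i - a_{i,j})$ is monic of degree $e_i$ in $X_i$, every monomial appearing in $L_m$ has total degree at most $\deg(m) = \sum_i e_i$, with $m$ itself being the unique monomial of maximal total degree; by degree-compatibility of $\preceq$, this forces $m$ to be the leading monomial of $L_m$. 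Second, for any $m' = \prod_i X_i^{e_i'} \in \mathcal{M}$, the definition of $\sigma_A$ in \eqref{eq:sigma-A} gives
\[
L_m(\sigma_A(m')) = \prod_{i=1}^n \prod_{j=1}^{e_i}(a_{i,e_i'+1} - a_{i,j}),
\]
which by distinctness of $a_{i,1},\ldots,a_{i,r_i}$ is nonzero iff $e_i' \geq e_i$ for every $i$, equivalently iff $m \mid m'$.

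Now I fix $m \in \mathcal{M} \setminus T$ and check that $L_m \in I(\sigma_A(T))$. For any $m' \in T$, if $m \mid m'$ then down-closedness of $T$ would put $m$ in $T$, contradicting $m \in \mathcal{M} \setminus T$; hence $m \nmid m'$ and the evaluation identity gives $L_m(\sigma_A(m')) = 0$. Thus $L_m$ vanishes on all of $\sigma_A(T)$, and it is a nonzero polynomial in $I(\sigma_A(T))$ with leading monomial $m$, so $m \notin \mathrm{SM}(\sigma_A(T))$. Combined with $\mathrm{SM}(\sigma_A(T)) \subseteq \mathcal{M}$, this gives $\mathrm{SM}(\sigma_A(T)) \subseteq T$, and the cardinality equality completes the proof. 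The only delicate step is verifying that the leading monomial of $L_m$ is exactly $m$, which depends essentially on degree-compatibility of $\preceq$ and would fail under a non-graded order such as pure lexicographic; everything else reduces to direct computation with $\sigma_A$ and the combinatorial definition of down-closedness.
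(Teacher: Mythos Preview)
Your proof is correct and takes essentially the same approach as the paper: both construct the polynomials $L_m = \prod_i \prod_{j=1}^{e_i}(X_i - a_{i,j})$ to witness that $m \notin T$ implies $m \notin \mathrm{SM}(\sigma_A(T))$, and both close via the cardinality equality of \cref{lem:standard-cardinality}(a); the paper routes the argument through an auxiliary ideal $I$ and $S=V(I)$, while you work directly with $I(\sigma_A(T))$, which is slightly cleaner. One harmless inaccuracy in your closing remark: since leading monomials are multiplicative under any monomial order, the leading monomial of $L_m$ is $m$ even under pure lex, so degree-compatibility is not actually needed for this particular step.
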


\begin{proof}
%Let $T$ be an arbitrary down-closed subset of $\mathcal{M}$.
Let $I$ be the ideal of $\F[X_1,\dots,X_n]$ generated by the set of polynomials 
\[
\left\{
\prod_{i=1}^{n}\prod_{j=1}^{\deg_{X_i}(m)} (X_i-a_{i,j}): m\in \mathcal{M}\setminus T
\right\}\cup\left\{\prod_{j=1}^{r_1} (X_1-a_{1,j}),\dots, \prod_{j=1}^{r_n} (X_n-a_{n,j})\right\}\;.
\]
Let $S=V(I)$, so that $I\subseteq I(S)$. As $\prod_{j=1}^{r_1} (X_1-a_{1,j}),\dots, \prod_{j=1}^{r_n} (X_n-a_{n,j})\in I$, we have 
\[
S\subseteq V\left(\prod_{j=1}^{r_1} (X_1-a_{1,j}),\dots, \prod_{j=1}^{r_n} (X_n-a_{n,j})\right)=A\;.
\]
Next, we show that $\SM(S)\subseteq T$. Consider $m\in \SM(S)$. Suppose $\deg_{X_i}(m)\geq r_i$ for some $i\in [n]$. Then $m$ is the leading monomial of $(m/X_i^{r_i})\cdot\prod_{j=1}^{r_i} (X_i-a_{i,j})$, and the latter is in $I(A)\subseteq I(S)$. This contradicts the assumption that $m\in \SM(S)=\SM(I(S))$.
Therefore, in the following we assume that for all $i\in [n]$, $\deg_{X_i}(m)<r_i$ , or equivalently, $m\in\mathcal{M}$. Now suppose $m\not\in T$. Then by the definition of $I$, the polynomial $\prod_{i=1}^{n}\prod_{j=1}^{\deg_{X_i}(m)} (X_i-a_{i,j})$ is in $I\subseteq I(S)$, and its leading monomial is $m$. This again contradicts the assumption that $m\in \SM(S)=\SM(I(S))$. This shows that every $m\in \SM(S)$ also belongs to~$T$. Therefore, $\SM(S)\subseteq T$.

%Let $\sigma$ be the bijection from $\mathcal{M}$ to $A$ sending $\prod_{i=1}^n X_i^{e_i}\in \mathcal{M}$ to $(a_{1, e_1+1},\dots, a_{n, e_n+1})$. 
Consider arbitrary $m_0=\prod_{i=1}^n X_i^{e_i}\in T$ and let $a=\sigma_A(m_0)=(a_{1, e_1+1},\dots, a_{n, e_n+1})$. Consider arbitrary $m\in \mathcal{M}\setminus T$. As $T$ is down-closed, there exists $k\in [n]$ such that $\deg_{X_{k}}(m)>\deg_{X_{k}}(m_0)=e_k$.
The polynomial $\prod_{i=1}^{n}\prod_{j=1}^{\deg_{X_i}(m)} (X_i-a_{i,j})$ then contains the factor $X_k-a_{k, e_k+1}$, so it vanishes at $a$. As $m\in \mathcal{M}\setminus T$ is arbitrary and $\prod_{j=1}^{r_1} (X_1-a_{1,j}),\dots, \prod_{j=1}^{r_n} (X_n-a_{n,j})$ also vanish at $a$, this shows $a\in V(I)=S$ by the definition of $I$. As $m_0\in T$ is arbitrary, we see that $\sigma_A(T)\subseteq S$. It follows that 
\[
|T|=|\sigma_A(T)|\leq |S|=|\mathrm{SM}(S)|
\]
where the first equality follows from the injectivity of $\sigma_A$ and the last equality holds by \cref{lem:standard-cardinality}\,(a). As $\SM(S)\subseteq T$, we must have $\mathrm{SM}(S)=T$ and $|\sigma_A(T)|=|S|$. The latter equality together with $\sigma_A(T)\subseteq S$ yields $\sigma_A(T)=S$.
So $\mathrm{SM}(\sigma_A(T))=\mathrm{SM}(S)=T$.
%(Alternatively, verify that the set of polynomials $\prod_{i=1}^{n}\prod_{j=1}^{e_i} (X_i-a_j)$ is a Gr\"{o}bner basis of $I$. Then $\mathrm{SM}(S)$ is the set of monomials not divisible by the leading monomial of any of these polynomials, which is exactly $T$.)
\end{proof}

\begin{lemma}\label{lem:minimum-of-hilbert}
Let $k,d\in\mathbb{N}$ such that $k\leq |A|$. Then
\[
\min_{S\subseteq A: |S|=k} \h_S(d, \F)=\min_{\textup{down-closed}~T\subseteq \mathcal{M}: |T|=k} |\{m\in T: \deg(m)\leq d\}|\;.
\]
\end{lemma}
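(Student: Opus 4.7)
The plan is to prove the two inequalities separately, using \cref{lem:standard-cardinality} and \cref{lem:realizable} as the two main tools, together with the easy observation already noted in the preliminaries that $\mathrm{SM}(S)$ is always down-closed.

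For the inequality $\min_{S} \h_S(d,\F) \leq \min_{T} |\{m\in T: \deg(m)\leq d\}|$, I would fix an arbitrary down-closed $T\subseteq \mathcal{M}$ with $|T|=k$ and consider $S \coloneqq \sigma_A(T)\subseteq A$. Since $\sigma_A$ is a bijection, $|S|=k$. By \cref{lem:realizable}, $\mathrm{SM}(S)=T$, so that $\mathrm{SM}_{\leq d}(S) = \{m\in T:\deg(m)\leq d\}$. Since $\preceq$ is degree-compatible, \cref{lem:standard-cardinality}(b) gives $\h_S(d,\F) = |\mathrm{SM}_{\leq d}(S)| = |\{m\in T : \deg(m)\leq d\}|$, which yields the inequality.

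For the reverse inequality, I would start with an arbitrary $S\subseteq A$ with $|S|=k$ and take $T\coloneqq \mathrm{SM}(S)$. The target is to verify that this $T$ is a legal candidate for the right-hand minimum, i.e., $T\subseteq \mathcal{M}$, $T$ is down-closed, and $|T|=k$, and that $\h_S(d,\F) = |\{m\in T: \deg(m)\leq d\}|$. Downward closedness of $T$ is already noted in the preliminaries, and $|T|=|S|=k$ follows from \cref{lem:standard-cardinality}(a). To see $T\subseteq \mathcal{M}$: for each $i\in[n]$, the polynomial $\prod_{j=1}^{r_i}(X_i-a_{i,j})$ vanishes on $A\supseteq S$, so it lies in $I(S)$; its leading monomial is $X_i^{r_i}$, so $X_i^{r_i}\notin \mathrm{SM}(S)$, and by downward closedness, no monomial with $X_i$-degree $\geq r_i$ lies in $T$. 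The identity $\h_S(d,\F)=|\{m\in T:\deg(m)\leq d\}|$ is again \cref{lem:standard-cardinality}(b).

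The main conceptual step is \cref{lem:realizable}, which we are allowed to invoke; once that is available, neither direction requires further work beyond careful bookkeeping. The only nontrivial check in the remaining argument is the one showing $\mathrm{SM}(S)\subseteq \mathcal{M}$ for $S\subseteq A$, but this follows at once from the vanishing polynomials $\prod_{j=1}^{r_i}(X_i-a_{i,j})\in I(A)$ together with downward closedness of $\mathrm{SM}(S)$. Since both directions produce matching witnesses, the two minima coincide.
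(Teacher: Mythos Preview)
Your proof is correct and follows essentially the same approach as the paper: both directions use \cref{lem:standard-cardinality} together with \cref{lem:realizable} to pass between sets $S\subseteq A$ and down-closed $T\subseteq\mathcal{M}$. The only cosmetic difference is that the paper justifies $\mathrm{SM}(S)\subseteq\mathcal{M}$ via the inclusion $\mathrm{SM}(S)\subseteq\mathrm{SM}(A)=\mathcal{M}$, whereas you argue directly from the vanishing polynomials $\prod_{j=1}^{r_i}(X_i-a_{i,j})\in I(S)$; these are equivalent.
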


\begin{proof}
Let $S\subseteq A$ be a set of size $k$ such that $\h_S(d,\F)$ achieves the minimum. Let $T=\SM(S)$, which is down-closed. 
Moreover, as $S\subseteq A$, we have $T=\SM(S)\subseteq \SM(A)=\mathcal{M}$.
Note that $|T|=|\SM(S)|=|S|=k$ by \cref{lem:standard-cardinality}\,(a). And $|\{m\in T: \deg(m)\leq d\}|=|\SM_{\leq d}(S)|=\h_S(d,\F)$ by \cref{lem:standard-cardinality}\,(b). So LHS $\geq$ RHS.

Conversely, let $T$ be a down-closed subset of $\mathcal{M}$ of size $k$ such that $|\{m\in T: \deg(m)\leq d\}|$ achieves the minimum. 
Let $S=\sigma_A(T)$.
Then $\mathrm{SM}(S)=T$ by \cref{lem:realizable}. 
We have $|S|=|\SM(S)|=|T|=k$ by \cref{lem:standard-cardinality}\,(a) and $\h_S(d,\F)=|\SM_{\leq d}(S)|=|\{m\in T: \deg(m)\leq d\}|$ by \cref{lem:standard-cardinality}\,(b). So  LHS $\leq$ RHS. This finishes the proof of the lemma.
\end{proof}

%Remark: we only need LHS $\geq$ RHS. In particular, \cref{lem:realizable} is not needed for our purpose.

%\paragraph{Combinatorial view.}

Let $F=\prod_{i=1}^n \{0,1,\dots,r_i-1\}$.
Let $\phi: \mathcal{M}\to F$ be the bijection
\begin{equation}\label{eq:phi}
\phi: \prod_{i=1}^n X_i^{e_i}\mapsto (e_1,\dots,e_n).
\end{equation}
\cref{lem:minimum-of-hilbert} can now be reformulated as follows.
%moved the defs to Section 2.0 as we're using them in 3.1, 3.2 and 4.
% For $a=(a_1,\dots,a_n), b=(b_1,\dots,b_n) \in F$,
% we write $a\leq_P b$ if $a_i\leq b_i$ for all $i\in [n]$.
% For $a=(a_1,\dots,a_n)\in F$, the \emph{generalized Hamming weight} of $a$ is $|a|_1=\sum_{i=1}^n a_i$.
% Define $F_{\leq d}:=\{a\in F: |a|_1\leq d\}$ for $d \in\mathbb{N}$.
% We say a subset $T$ of $F$ is \emph{down-closed} if for all $a,b\in F$ such that $a\leq_P b$, if $b$ is in $T$, then so is $a$.

\begin{corollary}\label{cor:characterization}
Let $k,d\in\mathbb{N}$ such that $k\leq |A|$. Then
\begin{align}\label{eq:characterization}
\min_{S\subseteq A: |S|=k} \h_S(d,\F)=\mathcal{H}_F(d,k)\;.
\end{align}
\end{corollary}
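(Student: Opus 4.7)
The plan is to deduce this corollary directly from \cref{lem:minimum-of-hilbert} by transporting the right-hand side across the bijection $\phi: \mathcal{M}\to F$ defined in \eqref{eq:phi}. First I would observe that $\phi$ is not merely a bijection of sets but an isomorphism of posets: if $m=\prod_i X_i^{e_i}$ and $m'=\prod_i X_i^{e_i'}$, then $m'$ divides $m$ in $\mathcal{M}$ if and only if $e_i'\leq e_i$ for all $i$, which is exactly $\phi(m')\leq_P \phi(m)$ in $F$. Consequently, a subset $T\subseteq \mathcal{M}$ is down-closed with respect to divisibility if and only if $\phi(T)\subseteq F$ is down-closed with respect to $\leq_P$ in the sense used in \cref{def:h}.

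Next I would note that $\phi$ respects the two relevant numerical invariants. Trivially $|\phi(T)|=|T|$, so $|T|=k$ is equivalent to $|\phi(T)|=k$. Moreover, for $m=\prod_i X_i^{e_i}\in\mathcal{M}$, one has $\deg(m)=\sum_i e_i=|\phi(m)|$, so the condition $\deg(m)\leq d$ matches the condition $|\phi(m)|\leq d$ coordinate-wise. Therefore
\[
\{m\in T:\deg(m)\leq d\}\xrightarrow{\ \phi\ }\phi(T)_{\leq d},
\]
and these two sets have the same cardinality.

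Combining these observations, as $T$ ranges over down-closed subsets of $\mathcal{M}$ of size $k$, $\phi(T)$ ranges exactly over down-closed subsets of $F$ of size $k$, and the quantity being minimized is preserved. Hence
\[
\min_{\substack{\text{down-closed}\\ T\subseteq \mathcal{M},\,|T|=k}} |\{m\in T:\deg(m)\leq d\}|
=\min_{\substack{\text{down-closed}\\ T'\subseteq F,\,|T'|=k}} |T'_{\leq d}|
=\mathcal{H}_F(d,k),
\]
where the last equality is the definition of $\mathcal{H}_F(d,k)$ from \cref{def:h}. Chaining this with \cref{lem:minimum-of-hilbert} gives \eqref{eq:characterization}. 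There is no real obstacle here; the entire content of the corollary is the translation from monomials in $\mathcal{M}$ to exponent vectors in $F$, which is purely bookkeeping once \cref{lem:minimum-of-hilbert} is in hand.
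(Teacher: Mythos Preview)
Your proposal is correct and follows essentially the same approach as the paper: both argue that the bijection $\phi$ of \eqref{eq:phi} identifies down-closed subsets of $\mathcal{M}$ with down-closed subsets of $F$ and matches $\deg(m)\leq d$ with $|\phi(m)|\leq d$, then invoke \cref{lem:minimum-of-hilbert}. Your write-up is somewhat more explicit about $\phi$ being a poset isomorphism, but the content is identical.
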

\begin{proof}
Recall that by \cref{def:h},
\[
\mathcal{H}_F(d,k)=\min_{\textup{down-closed}~S\subseteq F: |S|=k} |S_{\leq d}|\;.
\]
Identify $\mathcal{M}$ with $F$ via the bijection $\phi$. The claim follows from \cref{lem:minimum-of-hilbert} by noting that $T\subseteq \mathcal{M}$ is down-closed iff $\phi(T)\subseteq F$ is down-closed, and that for $m\in \mathcal{M}$, $\deg(m)\leq d$ iff $\phi(m)\in F_{\leq d}$.
\end{proof}

For the special case of a finite field $\F=\F_q$, and $r_1=\dots=r_n=q-1$, we have $A=\F_q^n$, and the right-hand side of \cref{eq:characterization} becomes $\mathcal{H}_q^n(d,k)$ from \cref{def:h}. This leads us to the following corollary.
\begin{corollary}\label{cor:hH}
    For every $n,k,d\in \N$ where $k\leq q^n$, a prime power $q$, and every set $S\subseteq \F_q^n$ of size $|S|=k$, we have that
    \[
    \h_S(d,\F_q)\geq \mathcal{H}_q^n(d,k)\;.
    \]
\end{corollary}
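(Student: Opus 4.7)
The plan is to observe that \cref{cor:hH} is nothing more than an immediate specialization of the already-established \cref{cor:characterization} to the case $\F = \F_q$ and $A_1 = \cdots = A_n = \F_q$, followed by the trivial bound of an individual quantity by a minimum. I therefore do not anticipate any genuine obstacle; the only task is to verify that all parameters match up.

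Concretely, I would first take a prime power $q$ and set $r_1 = \cdots = r_n = q$, so that each $A_i = \F_q$ has size $r_i = q$ (note $r_i \leq |\F_q|$ holds with equality, which is precisely the regime in which the hypotheses of the lemmas in this section apply). Then $A = A_1 \times \cdots \times A_n = \F_q^n$, and the associated grid of exponent vectors $F = \prod_{i=1}^n \{0,1,\ldots,q-1\}$ is the $n$-fold product of $\{0,\ldots,q-1\}$. By \cref{def:h}, in this symmetric case $\mathcal{H}_F(d,k)$ is exactly $\mathcal{H}^n_q(d,k)$.

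Next, I would invoke \cref{cor:characterization} with these parameters to obtain
\[
\min_{S \subseteq \F_q^n : |S| = k} \h_S(d, \F_q) = \mathcal{H}_F(d,k) = \mathcal{H}^n_q(d,k).
\]
This equality is valid provided $k \leq |A| = q^n$, which is precisely the hypothesis in the statement.

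Finally, for any particular set $S \subseteq \F_q^n$ of size $|S| = k$, the value $\h_S(d, \F_q)$ is at least the minimum over all such $S$, so
\[
\h_S(d, \F_q) \;\geq\; \min_{S' \subseteq \F_q^n : |S'| = k} \h_{S'}(d, \F_q) \;=\; \mathcal{H}^n_q(d,k),
\]
which is the desired inequality. The only ``work'' is unwinding the definitions to confirm that the generic $\mathcal{H}_F$ collapses to $\mathcal{H}^n_q$ in the symmetric case, so the proof is essentially a one-line corollary.
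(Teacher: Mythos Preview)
Your proposal is correct and matches the paper's approach exactly: the paper also derives \cref{cor:hH} immediately from \cref{cor:characterization} by specializing to $\F=\F_q$ with $A_i=\F_q$ (so $r_i=q$ and $\mathcal{H}_F(d,k)=\mathcal{H}^n_q(d,k)$), without giving any further argument. Your write-up simply spells out this one-line deduction in detail.
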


Finally, we state the following lemma, which will be used in \cref{sec:closure}. Its proof reuses ideas from the previous proofs in this section.
%\zeyu{Added the following lemma. There is some repetition in the proof but this is probably the quickest way and avoids restructuring the previous statements.}

\begin{lemma}\label{lem:A-to-F}
Let $n,d\in \N$.
Let $\sigma_A: \mathcal{M}\to A$ and $\phi: \mathcal{M}\to F$ be the bijections \eqref{eq:sigma-A} and \eqref{eq:phi} respectively.
Let $S\subseteq A$ such that $T:=\sigma_A^{-1}(S)\subseteq\mathcal{M}$ is down-closed.
Let $T'=\phi(T)\subseteq F$.
Then $\h_S(d,\F)=T'_{\leq d}$.
\end{lemma}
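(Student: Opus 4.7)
The plan is to apply \cref{lem:realizable} directly to $T$, observe that its conclusion determines the standard monomials of $S$, and then use the standard-monomial characterization of the Hilbert function from \cref{lem:standard-cardinality}. The bijection $\phi$ sends the degree of a monomial to the generalized Hamming weight of its image, so counting low-degree standard monomials of $S$ is the same as counting low-weight elements of $T'$.

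More concretely, I would first note that since $T\subseteq\mathcal{M}$ is down-closed by hypothesis, \cref{lem:realizable} applies and gives
\[
\mathrm{SM}(\sigma_A(T)) = T.
\]
Since $S = \sigma_A(T)$ by the definition of $T$, this reads $\mathrm{SM}(S) = T$. Restricting to degree at most $d$,
\[
\mathrm{SM}_{\leq d}(S) = \{m \in T : \deg(m) \leq d\}.
\]

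Next, I would invoke \cref{lem:standard-cardinality}\,(b), which (for any degree-compatible monomial order, e.g.\ $\le_{\mathrm{grlex}}$, fixed as in the setup of this section) states $\h_S(d,\F) = |\mathrm{SM}_{\leq d}(S)|$. Combining with the previous display,
\[
\h_S(d,\F) = |\{m \in T : \deg(m)\leq d\}|.
\]

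Finally, the bijection $\phi\colon \prod_i X_i^{e_i} \mapsto (e_1,\dots,e_n)$ satisfies $\deg\bigl(\prod_i X_i^{e_i}\bigr) = \sum_i e_i = |\phi(\prod_i X_i^{e_i})|$, so $\phi$ restricts to a bijection from $\{m\in T:\deg(m)\leq d\}$ onto $T'_{\leq d}$. This gives $|\{m\in T : \deg(m)\leq d\}| = |T'_{\leq d}|$, and hence $\h_S(d,\F) = |T'_{\leq d}|$, completing the proof.

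There is no real obstacle here: the lemma is essentially a bookkeeping statement that packages \cref{lem:realizable} together with \cref{lem:standard-cardinality}\,(b) and transports the result across $\phi$. The only thing to be careful about is that $\phi$ is degree-preserving in the appropriate sense (monomial degree $\leftrightarrow$ generalized Hamming weight), which is immediate from its definition.
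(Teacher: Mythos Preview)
Your proposal is correct and follows essentially the same route as the paper: apply \cref{lem:realizable} to get $\mathrm{SM}(S)=T$, invoke \cref{lem:standard-cardinality}\,(b) for $\h_S(d,\F)=|\mathrm{SM}_{\leq d}(S)|$, and use that $\phi$ sends monomial degree to generalized Hamming weight to conclude $|\{m\in T:\deg(m)\leq d\}|=|T'_{\leq d}|$.
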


\begin{proof}
By definition, we have $\sigma_A(T)=S$.
So $\mathrm{SM}(S)=\mathrm{SM}(\sigma_A(T))=T$ by \cref{lem:realizable}. 
Then we have
\[
\h_S(d,\F)=|\SM_{\leq d}(S)|=|\{m\in \SM(S): \deg(m)\leq d\}|=|\{m\in T: \deg(m)\leq d\}|=|T'_{\leq d}|.
\]
where the first inequality holds by \cref{lem:standard-cardinality}\,(b), the third one holds by the fact that $\SM(S)=T$, and the last one holds since $\phi$ maps $\{m\in T: \deg(m)\leq d\}$ bijectively to $T'_{\leq d}$.
\end{proof}

% \zeyu{TODO: I have defined $\mathcal{H}(d,k)$ below, but $\mathcal{H}_q(d,k)$ is also defined in Section 4.}

% We define 
% \[
% \mathcal{H}(d,k)=\min_{\textup{down-closed}~T\subseteq F: |T|=k} |T\cap F_{\leq d}|.
% \]
% By \cref{cor:characterization}, $\mathcal{H}(d,k)$ equals the minimum of $\h_S(d,\F)$ over all sets $S\subseteq A$ of size $k$.

%\subsection{Beelen--Datta}

%\zeyu{Move this section to a different place?}

\section{Number of Points with Low Hamming Weight in Down-Closed Sets}
In this section, we will find the exact values of all $\mathcal{H}_q^n(d,k)$ which, by \cref{cor:hH}, will give us tight lower bounds on the Hilbert function of sets of size~$k$. %Specifically, the main result of this section is the following theorem. 

For every $n,k,q$ where $k\leq q^n$, we define $\mathrm{M}^n_q(k)$ as the set of the first $k$ elements of $\{0,\dots,q-1\}^n$ in lexicographic order.

The main result of this section is the following theorem.
\begin{theorem}\label{thm:hilbertbound}
For every $n,k,d, q\in \N$ where $k\leq q^n$,
\[
\mathcal{H}^n_q(d,k)= |\mathrm{M}^n_q(k)_{\leq d}|\;.
\]
\end{theorem}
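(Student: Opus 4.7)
The plan is to prove the two inequalities separately. For the upper bound $\mathcal{H}^n_q(d,k)\leq |\mathrm{M}^n_q(k)_{\leq d}|$, I would first check that $\mathrm{M}^n_q(k)$ is itself down-closed: if $y\in\mathrm{M}^n_q(k)$ and $x\leq_P y$ componentwise, then at the first coordinate where $x$ and $y$ differ, $x$ has a strictly smaller entry, so $x\prec y$ in lex order and hence $x$ lies among the lex-first $k$ elements. Thus $\mathrm{M}^n_q(k)$ is a valid down-closed set of size $k$ and witnesses the upper bound on the minimum defining $\mathcal{H}^n_q(d,k)$.

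For the matching lower bound, I would proceed by induction on $n$. The base case $n=1$ is immediate since the unique down-closed subset of $\{0,\ldots,q-1\}$ of size $k$ is $\{0,\ldots,k-1\}=\mathrm{M}^1_q(k)$. For the inductive step, given a down-closed $T\subseteq\{0,\ldots,q-1\}^n$ of size $k$, I would slice along the first coordinate as $T=\bigsqcup_{a=0}^{q-1}\{a\}\times T_a$, where each $T_a\subseteq\{0,\ldots,q-1\}^{n-1}$ is down-closed and the down-closedness of $T$ forces the chain $T_0\supseteq T_1\supseteq\cdots\supseteq T_{q-1}$. Since $|(a,y)|=a+|y|$, this yields
\[
|T_{\leq d}|=\sum_{a=0}^{q-1}|(T_a)_{\leq d-a}|\geq\sum_{a=0}^{q-1}|\mathrm{M}^{n-1}_q(k_a)_{\leq d-a}|,
\]
with the convention $|(\cdot)_{\leq e}|=0$ for $e<0$, where $k_a:=|T_a|$ and the inequality applies the inductive hypothesis in each slice.

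It then remains to show that the right-hand side is bounded below by $|\mathrm{M}^n_q(k)_{\leq d}|$ uniformly over nonincreasing sequences $(k_0,\ldots,k_{q-1})$ with $\sum_a k_a=k$ and $k_a\leq q^{n-1}$. Unpacking the recursive structure of $\mathrm{M}^n_q(k)$, this reduces to showing that the \emph{fill-from-the-bottom} profile---$k_a=q^{n-1}$ for $a<a_0$, $k_{a_0}=r$, $k_a=0$ for $a>a_0$ with $k=a_0 q^{n-1}+r$---attains the minimum. I expect this discrete optimization to be the main obstacle: the single-slice counts $j\mapsto|\mathrm{M}^{n-1}_q(j)_{\leq d-a}|$ depend on $a$ through the shrinking degree budget $d-a$, so a naive one-unit mass-transfer exchange may either violate the monotonicity constraint $k_0\geq k_1\geq\cdots$ or yield a nontrivial marginal change that needs careful bookkeeping. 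To close the argument cleanly, I would invoke the result of Beelen and Datta~\cite{BD18}, which characterizes precisely these extremal down-closed configurations in their analysis of generalized Hamming weights of affine Reed--Muller codes, extending earlier work of Wei and Heijnen--Pellikaan. A fallback route would be a hands-on Clements--Lindstr\"om style compression argument that transfers blocks of mass between slices and reduces to the lex-initial configuration step by step; the drawback is that the bookkeeping becomes intricate when $q>2$.
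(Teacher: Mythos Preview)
Your upper bound is correct: $\mathrm{M}^n_q(k)$ is indeed down-closed, so it witnesses $\mathcal{H}^n_q(d,k)\leq |\mathrm{M}^n_q(k)_{\leq d}|$.

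The lower bound has a genuine gap. Your slicing reduction is valid and, after applying the inductive hypothesis in each slice, produces another down-closed set $T'=\bigsqcup_a\{a\}\times\mathrm{M}^{n-1}_q(k_a)$ of size $k$; but the remaining claim $|T'_{\leq d}|\geq|\mathrm{M}^n_q(k)_{\leq d}|$ is just a special case of the theorem you are proving, so the induction has narrowed the class of competitors without resolving the extremal problem. You acknowledge this and defer to Beelen--Datta, but the result actually available (\cite[Theorem~3.8]{BD18}) is a statement about \emph{up}-closures: for $S\subseteq F_{\leq d}$ with $|S|=r$ one has $|\nabla(\mathrm{L}_{\leq d}(r))|\leq|\nabla(S)|$. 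It does not speak directly to your minimization over nonincreasing profiles $(k_0,\dots,k_{q-1})$, and turning it into such a statement is precisely the work that remains.

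The paper closes this gap by a different route that uses no induction on $n$. Given a down-closed $T$ of size $k$, set $S=F_{\leq d}\setminus T_{\leq d}$, apply Beelen--Datta to $S$, and translate back via the complement $\Delta(S)=F\setminus\nabla(S)$. Two short lemmas carry the translation: first, $\Delta(S)$ is the inclusion-maximal down-closed subset of $F$ whose degree-$\leq d$ part equals $T_{\leq d}$, hence $|\Delta(S)|\geq|T|=k$; second, $\Delta(\mathrm{L}_{\leq d}(r))$ is always a lex-initial segment $\mathrm{M}(k')$. Combining these with the monotonicity of $k\mapsto|\mathrm{M}(k)_{\leq d}|$ gives the bound directly. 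This duality is the missing idea in your outline; once you have it, the slicing becomes redundant, since Beelen--Datta already applies on the full $n$-dimensional grid. (For $q=2$ the paper also gives a self-contained argument closer in spirit to your slicing, but its key step---an inequality comparing arbitrary \emph{contiguous} lex-intervals to $\mathrm{M}(k)$, not just lex-initial segments---is again a lemma your proposal does not supply.)
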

Combining \cref{cor:hH} and \cref{thm:hilbertbound}, we obtain the following bounds on the Hilbert function.
\begin{corollary}\label{cor:boundsOnHilbert}
For every prime power $q$, and $n,k,d\in \N$ where $k\leq q^n$, we have 
    \[
    \min_{S\subseteq \F_q^n: |S|=k} \h_S(d,\F_q)= |\mathrm{M}^n_q(k)_{\leq d}|\;.
    \]
In particular, setting $q=2$, for every $n,k,d\in\N$ where $k\leq2^n$, and every $S\subseteq\F_2^n$ of size $|S|=k$,
\[
    \h_S(d,\F_2)\geq \binom{\floor{\log(k)}}{\leq d}\;.
\]
\end{corollary}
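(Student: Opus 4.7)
The plan is to observe that Corollary~\ref{cor:boundsOnHilbert} is essentially a bookkeeping step that combines two results already proved earlier in the excerpt with a short combinatorial check in the binary case.

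For the first equality, I would instantiate Corollary~\ref{cor:characterization} with $\F=\F_q$ and $A_i=\F_q$ for each $i$. Then $A=\F_q^n$ and the associated grid is $F=\{0,\ldots,q-1\}^n$, so the combinatorial quantity on the right-hand side of Corollary~\ref{cor:characterization} is exactly $\mathcal{H}_F(d,k)=\mathcal{H}_q^n(d,k)$ in the notation of Definition~\ref{def:h}. This yields
\[
\min_{S\subseteq \F_q^n:\,|S|=k} \h_S(d,\F_q)=\mathcal{H}_q^n(d,k).
\]
Substituting Theorem~\ref{thm:hilbertbound}, which identifies $\mathcal{H}_q^n(d,k)=|\mathrm{M}_q^n(k)_{\leq d}|$, delivers the claimed equality. (Equivalently, one obtains ``$\geq$'' from Corollary~\ref{cor:hH} and Theorem~\ref{thm:hilbertbound}, while ``$\leq$'' comes from the explicit witness constructed via $\sigma_A$ in the proof of Lemma~\ref{lem:minimum-of-hilbert}.)

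For the specialization to $q=2$, the only remaining task is the lower bound $|\mathrm{M}_2^n(k)_{\leq d}|\geq \binom{\floor{\log k}}{\leq d}$. I would set $\ell=\floor{\log k}$, so that $2^\ell\leq k$, and note that the lexicographically first $2^\ell$ elements of $\{0,1\}^n$ are exactly the vectors whose first $n-\ell$ coordinates are zero, i.e.\ the set $\{0\}^{n-\ell}\times\{0,1\}^\ell$. Since $\mathrm{M}_2^n(2^\ell)\subseteq \mathrm{M}_2^n(k)$, this gives
\[
|\mathrm{M}_2^n(k)_{\leq d}|\geq |\mathrm{M}_2^n(2^\ell)_{\leq d}|=\binom{\ell}{\leq d},
\]
where the last equality counts binary vectors of length $\ell$ with at most $d$ ones. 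Combining with the first part then yields the asserted bound $\h_S(d,\F_2)\geq \binom{\floor{\log k}}{\leq d}$ for every $S\subseteq \F_2^n$ with $|S|=k$.

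There is no real obstacle: the algebraic content lives in Corollary~\ref{cor:characterization} and the extremal content in Theorem~\ref{thm:hilbertbound}, both already established, and the binary specialization only requires the observation that lex-first $2^\ell$ points of $\{0,1\}^n$ sit inside an $\ell$-dimensional coordinate subcube.
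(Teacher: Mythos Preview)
Your proposal is correct and follows essentially the same approach as the paper: the paper simply states that Corollary~\ref{cor:boundsOnHilbert} follows by combining Corollary~\ref{cor:hH} (equivalently, Corollary~\ref{cor:characterization}) with Theorem~\ref{thm:hilbertbound}, without further detail. Your write-up additionally spells out the binary specialization---observing that $\mathrm{M}_2^n(2^{\lfloor\log k\rfloor})=\{0\}^{n-\lfloor\log k\rfloor}\times\{0,1\}^{\lfloor\log k\rfloor}$ to extract the $\binom{\lfloor\log k\rfloor}{\leq d}$ bound---which the paper leaves implicit.
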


%I changed the notation of D_q^n for convenience - Zeyu

We will use the following notation: 
For $t\in\{0,1,\dots,n\}$, define $\mathcal{D}_q^n(t)$ to be the set of $x\in \{0,\dots,q-1\}^n$ whose first $n-t$ coordinates are zero.
%\[
%\mathcal{D}_q^n(i_1,\dots,i_t)=\{x \in \{0,\dots,q-1\}^n : x_{r}=0 \;\text{for all}\; r \in [n]\backslash \{i_1,\dots,i_t\} \}\;.
%\]

Note that for every $q,k$, and $n$,
\[\mathcal{D}_q^n\left(\lfloor \log_q k\rfloor\right) \subseteq \mathrm{M}^n_q(k) \subseteq \mathcal{D}_q^n\left(\lceil \log_q k\rceil\right)\;.
\]
When $n$ and $q$ are clear from the context, we omit the superscript $n$ and the subscript $q$ from $\mathrm{M}^n_q(k)$, $\mathcal{D}^n_q(t)$, and $\mathcal{H}^n_q(d,k)$. 

\subsection{The Boolean Case, \texorpdfstring{$q=2$}{q=2}} 

% Let us first define some useful notations. 
% For $ t \in [n]$ and distinct $i_1,i_2,\dots,i_t \in [n]$, define
% \[
% \mathcal{D}^n(i_1,\dots,i_t)=\{x: x \in \{0,1\}^n \text{ such that}\; x_{r}=0 \;\text{for all}\; r \in [n]\backslash \{i_1,\dots,i_t\} \}.
% \]
%\satya{Change $\mathcal{P}^n(d)$ to some other notation since we also have $\mathcal{P}(n,d)$.}
%\snote{let's discuss tomorrow if we want to work with lexicographic or reverse lexicographic orders}
%\zeyu{Reverse lexicographic order needs to be defined. I assume it is actually the lexicographic order, but with the word written backwards.}\pooya{adding a definition}
% Note that 
% \[\mathcal{D}^n(\{1,\ldots, \lfloor \log_2 k\rfloor\}) \subseteq \mathrm{M}(k) \subseteq \mathcal{D}^n(\{1,\ldots,\lceil \log_2 k\rceil\})\,.
% \]

For a set $S\subseteq\{0,1\}^n$, let $\min(S)$ and $\max(S)$ be respectively the smallest and the largest strings in $S$ in lexicographic order. We say a set $S\subseteq \{0,1\}^n$ is a \emph{contiguous $k$-set} if $|S|=k$ and $S$ consists of all $x$ such that $\min(S)\preceq x \preceq \max(S)$. 
%\zeyu{Both $M$ and $\mathrm{M}$ appear in the following. Is this intentional?}\pooya{fixed}

We first show that $\mathrm{M}(k)$ has the largest number of low Hamming weight strings among all contiguous $k$-sets.

\begin{lemma}\label{Lem:number of small HW}
    Let $n,k,d \in \N$ be integers such that $k \leq 2^n$. Let $S^k \subseteq \{0,1\}^n$ be a contiguous $k$-set. Then $|\mathrm{M}(k)_{\leq d}|\geq |S^k_{\leq d}|$.
\end{lemma}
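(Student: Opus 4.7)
Via the identification of $\{0,1\}^n$ with $\{0, 1, \ldots, 2^n - 1\}$ through the standard binary bijection (with $x_1$ the most significant bit), the lex order becomes the numerical order and the Hamming weight becomes the ordinary binary weight; in particular $\mathrm{M}(k)$ becomes the initial interval $[0, k-1]$ and any contiguous $k$-set becomes an interval $[a, a+k-1]$. Writing $g_m(\ell, d) := |\{i \in [0, \ell) : |i| \le d\}|$, the lemma is thus equivalent to the subadditivity
\[
g_n(a+k, d) \;\le\; g_n(a, d) + g_n(k, d), \qquad 0 \le a \le 2^n - k,
\]
which is what I plan to prove.

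I will proceed by induction on $n$, splitting the interval $[a, a+k)$ according to its position with respect to the midpoint $2^{n-1}$. If $[a, a+k)$ sits inside the lower half $[0, 2^{n-1})$, the weights are unchanged and the identity $\mathrm{M}_n(k) = \{0\} \times \mathrm{M}_{n-1}(k)$ (valid whenever $k \le 2^{n-1}$) makes the induction hypothesis directly applicable. If it sits inside the upper half, all weights pick up a $+1$ from the leading bit, so the induction hypothesis applied with weight budget $d-1$ together with the trivial monotonicity $|\mathrm{M}_{n-1}(k)_{\le d-1}| \le |\mathrm{M}_{n-1}(k)_{\le d}|$ finishes that case. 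For the straddling case, decompose $[a, a+k)$ as $\{0\} \times S_0 \sqcup \{1\} \times S_1$, where $S_0 = [c, 2^{n-1})$ is a suffix of size $k_0 := 2^{n-1} - c$ and $S_1 = [0, k_1)$ is a prefix with $k_0 + k_1 = k$. When $k > 2^{n-1}$ (so that $\mathrm{M}_n(k)$ also straddles), cancelling the common ``full-lower-half'' contribution from both sides of the target inequality reduces it to bounding the weight-$\le d-1$ count of a length-$a$ interval of $\{0,1\}^{n-1}$, which is handled by the induction hypothesis together with the monotonicity $g_{n-1}(a, d-1) \le g_{n-1}(a, d)$.

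The sub-case $k \le 2^{n-1}$ of the straddling case is the main obstacle: the induction hypothesis applied to $S_0$ gives a bound that can strictly exceed $|\mathrm{M}_n(k)_{\le d}|$ -- for instance already at $k_0 = 3$, $k_1 = 1$, $d = 1$ in $\{0,1\}^2$ one has $|\mathrm{M}_2(3)_{\le 1}| + |\mathrm{M}_2(1)_{\le 0}| = 4 > 3 = |\mathrm{M}_2(4)_{\le 1}|$. I plan to overcome this by establishing the following strengthened suffix inequality, which I shall call $(\ast)$: for all $m, d$ and $k_0, k_1$ with $k_0 + k_1 \le 2^m$,
\[
\bigl|[2^m - k_0, 2^m)_{\le d}\bigr|_m \;+\; g_m(k_1, d-1) \;\le\; g_m(k_0 + k_1, d).
\]
Note that $(\ast)$ reduces to the lemma when $k_1 = 0$, is trivial when $k_0 = 0$ by monotonicity of $g_m(k_1, \cdot)$, and specialised to $m = n-1$ precisely handles the remaining sub-case. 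I plan to prove $(\ast)$ by a secondary induction on $m$ following the same lower-half / upper-half / straddling trichotomy for the suffix $[2^m - k_0, 2^m)$ and repeatedly applying the recurrence $g_m(\ell, d) = g_{m-1}(2^{m-1}, d) + g_{m-1}(\ell - 2^{m-1}, d-1)$ for $\ell > 2^{m-1}$. The most delicate step in proving $(\ast)$ will be tracking the shift of the weight budget from $d$ to $d-1$ each time one descends from the upper half to the lower half, and verifying that the secondary induction hypothesis together with the outer induction hypothesis suffice to close every sub-sub-case.
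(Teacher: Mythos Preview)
Your reformulation as subadditivity of $g_n$ and the midpoint case split is sound, and the strengthened inequality $(\ast)$ you isolate is precisely what the straddling sub-case with $k\le 2^{n-1}$ needs. I have checked that $(\ast)_m$ does close: split on whether $k_0\gtrless 2^{m-1}$ and whether $k_0+k_1\gtrless 2^{m-1}$; each branch reduces to $(\ast)_{m-1}$ (sometimes with $d$ replaced by $d-1$) and/or to subadditivity at level $m-1$, together with the trivial monotonicities $g_{m-1}(\cdot,d-1)\le g_{m-1}(\cdot,d)$ and $|\mathrm{M}_{m-1}(k')_{=d}|\le\binom{m-1}{d}$. So your plan works, provided you set things up as a \emph{simultaneous} induction on the dimension that proves subadditivity and $(\ast)$ together (your ``outer/secondary'' phrasing suggests nesting, which is also fine but needs a word of justification). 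One small slip: $(\ast)$ with $k_1=0$ is the lemma only for \emph{suffix} intervals, not the full statement.

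The paper takes a genuinely different route. It inducts on $k$ and on $\max(S^k)$, and its key device is the complementation map $\gamma(U,r)=\{0^{n-r}1^r-x:x\in U\}$ inside $\mathcal{D}(r)$, which reverses lex order and sends Hamming weight $w$ to $r-w$. In the straddling case the lower piece $S^{k^*}$ is a suffix of $\mathcal{D}(m-1)$, so $\gamma(S^{k^*},m-1)=\mathrm{M}(k^*)$; applying the induction hypothesis to the contiguous set $\gamma(T,m-1)$ (with $T=\mathrm{M}(k)\setminus\mathrm{M}(k-k^*)$) at the complementary weight budget immediately yields $|T_{\le d}|\ge|S^{k^*}_{\le d}|$, with no auxiliary statement like your $(\ast)$ required. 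The complementation trick keeps the induction to a single clean statement; your approach trades that cleverness for a longer but entirely mechanical joint induction. Both are correct.
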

\begin{proof}
We prove this lemma by induction on $k$ and $\max(S^k)$. The base cases of $k\leq 1$ hold trivially, as in this cases $|\mathrm{M}(k)_{\leq d}|=k$. For each $k$, the base case of minimal $\max(S^k)$ holds trivially as in this case $S^k=\mathrm{M}(k)$ and $|S^k_{\leq d}|=|\mathrm{M}(k)_{\leq d}|$.

Assume for the induction hypothesis that the above statement is true for smaller contiguous sets (i.e. $|\mathrm{M}(t)_{\leq d}| \geq |S^{t}_{\leq d}|$ for all contiguous sets $S^{t}$ of size $t \leq k-1$) as well as for contiguous sets $A^{k}$ of size $k$ having $\max(A^k) \prec \max(S^k)$.

Let $m$ be the smallest integer such that $S^k \subseteq \mathcal{D}(m)$. If $d \geq m$, then $|S^{k}_{\leq d}|= k$. Also, since $S^{k}$ is a contiguous set such that $S^k \subseteq \mathcal{D}(m)$, we have $\mathrm{M}(k) \subseteq \mathcal{D}(m)$, implying $|\mathrm{M}(k)_{\leq d}|=|\mathrm{M}(k)|=k$. Thus for the remainder of the proof we assume $d<m$. 

If $T=S^k\cap \mathrm{M}(k)\neq \emptyset$, then the claim follows by applying the induction hypothesis to the contiguous $(k-|T|)$-set $S^k-T$. Thus, we may assume from now that $S^k\cap \mathrm{M}(k)=\emptyset$. 

\paragraph{Case 1:} Assume $x_{n-m+1}=1$ for all $x \in S^k$. Define the contiguous $k$-set 
\[
B^k= \{x - e_{n-m+1}: x \in S^k\}\;,
\]
where $e_i$ denotes the element in $\{0,1\}^n$ whose $i$th coordinate is one, and all other coordinates are zeros. Since $\max(B^k) \prec \max(S^k)$, by the induction hypothesis, $|\mathrm{M}(k)_{\leq d}| \geq |B^k_{\leq d}|$. Also, $|B^k_{\leq d}| \geq |S^k_{\leq d}|$ by definition, thereby implying $|\mathrm{M}(k)_{\leq d}| \geq |S^k_{\leq d}|$.

%\satya{Do we want to change the notation of $T$ below to something else since we have used it above for defining some other set.}
\paragraph{Case 2: } Let $k^*>0$ be the number of elements $x$ in $S^k$ with $x_{n-m+1}=0$. Define 
\[
S^{k^*}= \{x \in S^k: x_{n-m+1}=0\}
\quad\text{and}\quad S^{k-k^*}= S^{k}\backslash S^{k^*}\;.
\]
Note that in this case, we are guaranteed that $S^{k-k*}$ and $S^{k^*}$ are both non-empty contiguous sets. 

Since $k-k^*<k$ and $S^{k-k^*}$ is a contiguous $(k-k^*)$-set, the induction hypothesis gives us 
\[
|\mathrm{M}(k-k^*)_{\leq d}| \geq |S^{k-k^*}_{\leq d}|\;.
\]

% \snote{this requires $S^{k-k^*}$ to be contiguous, which we didn't mention}. 
It suffices to show that $|T_{\leq d}| \geq |S^{k^*}_{\leq d}|$ where $T= \mathrm{M}(k)\backslash \mathrm{M}(k-k^*)$.  For $U \subseteq \{0,1\}^n$ and $r \leq n$, define the set $\gamma(U,r)=\{0^{n-r}1^r-x: x \in U\}$, where $0^{n-r}1^r-x$ denotes coordinate-wise subtraction. Since $\max(S^{k^*})=0^{n-m+1}1^{m-1}$, $S^{k^*}$ consists of the $k^*$ lexicographically largest elements $\preceq 0^{n-m+1}1^{m-1}$. Consequently, $\gamma(S^{k^*},m-1)=\mathrm{M}(k^*)$. Note that $\gamma(T,m-1)$ is a contiguous $k^*$-set, hence by the induction hypothesis, we get 
\[
|\gamma(S^{k^*}, m-1)_{\leq m-1-d}| \geq |\gamma(T, m-1)_{\leq m-1-d}|\;,
\]
where we used the fact that $d\leq m-1$. This implies $|T_{\leq d}| \geq |S^{k^*}_{\leq d}|$, as $|S^{k^*}_{\leq d}|=k^*-|\gamma(S^{k^*}, m-1)_{\leq m-1-d}|$ and $|T_{\leq d}| = k^*- |\gamma(T, m-1)_{\leq m-1-d}|$, where we used the fact that $\max(T)\preceq 0^{n-m+1}1^{m-1}$ as $\mathrm{M}(k)\cap S^k =\emptyset$. 
\end{proof}
%\satya{We should write a statement which shows that we are now going to prove a stronger statement using similar techniques when $S^k \cap \mathrm{M}(k)= \emptyset$}

We now use \cref{Lem:number of small HW} to prove that if a contiguous $k$-set $S^k$ that does \emph{not} contain any of the first $k$ strings in lexicographic order, then the result of \cref{Lem:number of small HW} $|S^k_{\leq d}|\leq |\mathrm{M}(k)_{\leq {d}}|$ can be strengthened to $|S^k_{\leq d}|\leq |\mathrm{M}(k)_{\leq {d-1}}|$.

\begin{lemma}\label{lem:first k strings-low hamming weight-binary case}
 Let $n,k,d \in \N$ be integers such that $k \leq 2^n$. Let $S^k \subseteq \{0,1\}^n$ be a contiguous $k$-set. If~${S^k\cap \mathrm{M}(k)=\emptyset}$, then $|\mathrm{M}(k)_{\leq {d-1}}|\geq |S^k_{\leq d}|$.
\end{lemma}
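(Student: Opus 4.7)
The plan is to mirror the inductive structure of the proof of \cref{Lem:number of small HW}, tightening the bound from ``$\le d$'' to ``$\le d-1$'' by exploiting the disjointness hypothesis $S^k\cap \mathrm{M}(k)=\emptyset$. I will use strong induction on~$k$. The base case $k=1$ is direct: the unique element $x\in S^1$ satisfies $x\ne 0^n$, so $|x|\ge 1$, and the desired inequality holds for every~$d$.

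For the inductive step, let $m$ be the smallest integer with $S^k\subseteq\mathcal{D}(m)$. Since $\mathrm{M}(k)\subseteq\mathcal{D}(\lceil\log_2 k\rceil)$ and $|\mathcal{D}(\lceil\log_2 k\rceil)\setminus \mathrm{M}(k)|<k$, a short counting argument using $S^k\cap\mathrm{M}(k)=\emptyset$ forces $m\ge\lceil\log_2 k\rceil+1$; in particular $\mathrm{M}(k)\subseteq\mathcal{D}(m-1)$. The case $d\ge m$ is immediate because both sides equal~$k$. For $d<m$, I split on the bit at position $n-m+1$, mirroring Case~1/Case~2 of \cref{Lem:number of small HW}. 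In \textbf{Case~1} (every $x\in S^k$ has $x_{n-m+1}=1$), set $B^k=\{x-e_{n-m+1}:x\in S^k\}$; this is a contiguous $k$-set with $|S^k_{\le d}|=|B^k_{\le d-1}|$, and applying \cref{Lem:number of small HW} to~$B^k$ yields $|B^k_{\le d-1}|\le|\mathrm{M}(k)_{\le d-1}|$, finishing this case.

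The substantive work is \textbf{Case~2}. Let $a=\min(S^k)$ and $k^*=2^{m-1}-a$, and split $S^k=S^{k^*}\sqcup S^{k-k^*}$ by the bit at position $n-m+1$, so $S^{k^*}=\{a,\dots,2^{m-1}-1\}\subseteq\mathcal{D}(m-1)$ and $S^{k-k^*}=\{2^{m-1},\dots,a+k-1\}$. Shifting $S^{k-k^*}$ down by $e_{n-m+1}$ gives exactly $\mathrm{M}(k-k^*)$, yielding $|S^{k-k^*}_{\le d}|=|\mathrm{M}(k-k^*)_{\le d-1}|$. For $S^{k^*}$, the bit-complement $\gamma(\mathrel{\;\cdot\;},m-1)$ on the bottom $m-1$ coordinates bijects $S^{k^*}$ with $\mathrm{M}(k^*)$ and satisfies $|x|+|\gamma(x,m-1)|=m-1$, giving the exact identity $|S^{k^*}_{\le d}|=|\mathrm{M}(k^*)_{\ge m-1-d}|$. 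Using $\mathrm{M}(k)\setminus\mathrm{M}(k-k^*)=\{k-k^*,\dots,k-1\}$ to split $|\mathrm{M}(k)_{\le d-1}|$, the desired inequality reduces to the combinatorial claim
\[
|\mathrm{M}(k^*)_{\ge m-1-d}|\;\le\;|\{k-k^*,\dots,k-1\}_{\le d-1}|.
\]

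The hard part, which I expect to be the main obstacle, is proving this reduced claim, and my plan is to ``$\gamma$-reflect once more''. Applying $\gamma(\mathrel{\;\cdot\;},m-1)$ to $\{k-k^*,\dots,k-1\}$ produces the contiguous $k^*$-set $\{2^{m-1}-k,\dots,2^{m-1}-k+k^*-1\}$, whose minimum equals $2^{m-1}-k$. The disjointness $a\ge k$ yields $k+k^*\le 2^{m-1}$, ensuring $2^{m-1}-k\ge k^*$, so this $\gamma$-image is a contiguous $k^*$-set disjoint from $\mathrm{M}(k^*)$. Because $k^*<k$, I apply the lemma's own inductive hypothesis to this $\gamma$-image with parameter $d'=m-1-d$, obtaining $|\gamma(\{k-k^*,\dots,k-1\},m-1)_{\le m-1-d}|\le|\mathrm{M}(k^*)_{\le m-2-d}|$. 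Translating back through the weight-flip, together with the complementary identities $|A_{\le d-1}|=k^*-|A_{\ge d}|$ and $|\mathrm{M}(k^*)_{\ge m-1-d}|=k^*-|\mathrm{M}(k^*)_{\le m-2-d}|$, yields the reduced claim and closes the induction.
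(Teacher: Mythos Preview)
Your proof is correct and follows essentially the same inductive approach as the paper: induction on~$k$, Case~1 handled by shifting and invoking \cref{Lem:number of small HW}, and Case~2 handled by splitting $S^k$ at the bit $n-m+1$, using the $\gamma$-reflection on the lower half, and applying the inductive hypothesis to the reflected set $\gamma(T,m-1)$ (which is disjoint from $\mathrm{M}(k^*)$). The only minor differences are presentational: you note that the shifted upper half $S^{k-k^*}-e_{n-m+1}$ equals $\mathrm{M}(k-k^*)$ exactly (the paper only uses the inequality from the Case~1 argument), and you fold the sub-case $d=m-1$ into the main Case~2 argument via $d'=0$ rather than isolating it as the paper does.
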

\begin{proof}
We prove this lemma by induction on~$k$. The base cases of $k\leq 1$ hold trivially. Suppose the statement is true for contiguous $(\leq k-1)$-sets. Let $S^k$ be a contiguous $k$-set and let $m$ be the smallest integer such that $S^k \subseteq \mathcal{D}(m)$.
    
    \textbf{Case 1:} If $x_{n-m+1}=1$ for all $x \in S^k$, consider $B^k=\{x - e_{n-m+1}: x \in S^k\}$. \cref{Lem:number of small HW} shows $|\mathrm{M}(k)_{\leq d-1}| \geq |B^k_{\leq d-1}|= |S^k_{\leq d}|$.
    
    \textbf{Case 2:}  Define the contiguous $k^*$-set $S^{k^*}= \{x \in S^k: x_{n-m+1}=0\}$. We denote $S^{k-k^*}=S^k \backslash S^{k^*}$ and $T=\mathrm{M}(k) \backslash \mathrm{M}(k-k^*)$. 
    
    \begin{enumerate}
        \item If $d \geq m-1$, the claim is trivial since $|\mathrm{M}(k)_{\leq d-1}|=|\mathrm{M}(k)|=k,$ as $\mathrm{M}(k)\subseteq \mathcal{D}(m-1)$ due to our assumption of $S^k\cap \mathrm{M}(k)=\emptyset$. 
        \item If $d \leq m-2$, using the analysis in Case~$1$, we get $|\mathrm{M}(k-k^*)_{\leq d-1}| \geq |S^{k-k^*}_{\leq d}|$. It now suffices to show that $|T_{\leq d-1}| \geq |S^{k^*}_{ \leq d}|$, where $T= \mathrm{M}(k)\backslash \mathrm{M}(k-k^*)$. Analogous to the argument made in Case 2 in the proof of \cref{Lem:number of small HW}, one can show that $\mathrm{M}(k^*)= \gamma(S^{k^*}, m-1)$. Since $S^{k^*} \cap \left(\mathrm{M}(k)\backslash \mathrm{M}(k^*)\right) = \emptyset$, we also have $\gamma(S^{k^*},m-1) \cap \gamma(\mathrm{M}(k)\backslash \mathrm{M}(k^*),m-1)= \emptyset$ and by the induction hypothesis, we get $|\mathrm{M}(k^*)_{ \leq m-d-1}| \geq |\gamma(T,m-1)_{\leq m-d}|$. 
        
        Now, note that $|S^{k^*}_{\leq d}|= k^*-|\mathrm{M}(k^*)_{ \leq m-d-1}|$, and since $T\subseteq \mathcal{D}(m-1)$, we have $|T_{\leq d-1}|= k^*-|\gamma(T,m-1)_{\leq m-d}|$. Combining these, we get $|T_{\leq d-1}| \geq |S^{k^*}_{ \leq d}|$, as desired. \qedhere
    \end{enumerate} 
    % Define $Q(k,S^k,d)=\mathbbm{1}\{\left(S^k\cap \mathrm{M}(k)=\emptyset\right) \bigwedge \left(|\mathrm{M}(k)_{\leq {d-1}}|\geq |S^k_{\leq d}|\right)\}$.
    % We prove $Q(k,S^k,d)$ is true by repeatedly reducing $Q(k,S^k,d)$ to $Q(k',S^{k'},d' )$, where $k'<k$. For the base case $k=1$, it is trivial. Let $m$ be the smallest integer such that $S\subseteq \mathcal{P}([m])$.
    % \textbf{Case 1:} If $x_m=1$ for any $x\in S^k$.
    % Let $T^k=\{x-e_m|x\in S^k\}$. Then it is sufficed to prove $|\mathrm{M}(k)_{\leq {d-1}}|\geq |T^{k}_{\leq {d-1}}|$. By \cref{Lem:number of small HW}, it is true.
    % \textbf{Case 2:} It is not case 1.  Split $S^k=A^1\cup A^2$, where all $x_m=1$ for any $x\in A^2$. Split $\mathrm{M}(k)=M(|A^2|)\cup T$. Since for any $x\in M(|A^2|)$, there is a corresponding $x'\in A^2$, such that $x'=x+e_m$. So it suffices to prove $|T_{\leq{d-1}}|\geq |A^1_{\leq{d}}|$. Note that $T\subseteq \mathcal{P}([m-1])$ and $A^1\subseteq \mathcal{P}([m-1])$.Define $M(x,n)=y$ such that $y_i=1-x_i$ for $i\leq n$ and $y_i=x_i$ for $i> n$. Define $\overline{T}=\{M(x,m-1)|x\in T\}$ and $\overline{A^1}=\{M(x,m-1)|x\in A^1\}$. So it suffices to prove $|\overline{T}_{\geq{m-d}}|\geq |\overline{A^1}_{\geq{m-d-1}}|$, which is equivalent to $|\overline{T}_{\leq{m-d-1}}|\leq |\overline{A^1}_{\leq{m-d-2}}|$. Note that $\overline{A^1}=\mathrm{M}(k')$, where $k'=|A^1|$. Since $T\cap A^1=\emptyset$, we have $\overline{T}\cap \overline{A^1}=\emptyset$. Then it suffices to prove $Q(k',\overline{T},m-d-2)=1$. For $d=m-1$, it is trivial because $|\mathrm{M}(k)\leq{d-1}|=k$.
\end{proof}

%\begin{theorem}\label{thm:hilbertbound-binary}
%For all $k,n,d \in \mathbb{N}$ where $n\geq 1$ and $k \leq 2^n$
%$${\mathcal H}(d,k):= \min_{\substack{S \subseteq \{0,1\}^n \\ |S|=k \\ S \;\mathrm{is} %\;\mathrm{downward}\; \mathrm{closed}}}|S_{\leq d}|  =  |\mathrm{M}(k)_{\leq d}|.$$ 
%\end{theorem}
We are finally ready to prove the Boolean case of \cref{thm:hilbertbound}. 
\begin{proof}[Proof of \cref{thm:hilbertbound}, the $q=2$ case]

 Let $S\subseteq \{0,1\}^n$ be a down-closed set of size $k$. We prove this theorem by a simultaneous induction on $k,d\geq 0$. 
 
 For the base cases, we consider pairs $(k,d)$ such that $d=0$ or $k\leq 2^d$. The case of $d=0$ is trivial. For the case where $k\leq 2^d$, a down-closed set $S$ of size $k$ cannot have strings of Hamming weight $> d$, thereby showing $|S_{\leq d}|=k$. Also, by construction, $\mathrm{M}(k)$ is a down-closed set of size $k$, implying ${\mathcal H}(d,k)=|\mathrm{M}(k)_{\leq d}|=k$ in this case.
 
Given $d\geq 1$ and $k> 2^d$, assume that the theorem is true for all $(k',d')$ such that either $k'<k$, or $k'=k$ and $d' <d$. Suppose $S$ is a down-closed set of size $k$ and let $m$ be the smallest integer such that $S \subseteq \mathcal{D}(m)$. Define
     \begin{align*}
        S^0 &\coloneqq \{x\in S : x_{n-m+1}=0\}\;,\\
        S^1 &\coloneqq \{x-e_{n-m+1} : x\in S~\text{and}~x_{n-m+1}=1\}\;.
        \end{align*}
        Since $S$ is down-closed, we have $S^1 \subseteq S^0$. Moreover, 
        \[
        |S_{\leq d}| = |S^0_{\leq d}| + |S^1_{\leq d-1}|\,. 
        \]
          Applying the induction hypothesis for $k'=|S^0|<k$ and $d$, we get $|\mathrm{M}(|S^0|)_{\leq d}| \leq |S^0_{\leq d}|$. Let $T=\mathrm{M}(k)\backslash \mathrm{M}(|S^0|)$. Since $|S^1| \leq |S^0|$, we have $\mathrm{M}(|S^1|) \cap T= \emptyset$, and we may apply  \cref{lem:first k strings-low hamming weight-binary case} to get $ |T_{\leq d}|\leq |\mathrm{M}(|S^1|)_{\leq d-1}|$. Now applying the induction hypothesis for $k'=|S^1|$ and $d'=d-1$, we get $|\mathrm{M}(|S^1|)_{\leq d-1}| \leq |S^1_{\leq d-1}|$.  Combining these observations, we get
\begin{align*}
|\mathrm{M}(k)_{\leq d}|&=|\mathrm{M}(|S^0|)_{\leq d}|+|T_{\leq d}|\\
&\leq |S^0_{\leq d}|+|\mathrm{M}(|S^1|)_{\leq d-1}|
\\ 
&\leq |S^0_{\leq d}|+|S^1_{\leq d-1}|\\ &=|S_{\leq d}|.
\end{align*}
This concludes the induction, and shows that for every $k,d\geq 0$, and down-closed set $S$ of size $k$, $|\mathrm{M}(k)_{\leq d}| \leq |S_{\leq d}|$. 
% 
%
%
% $\mathrm{M}(k)_{\leq d}=k$ because of the way the algorithm $\mathrm{M}(k)$ is designed. When $k \leq 2^d$ then we have that  the largest hamming weight string in any down closed set S is $d$. Additionally, when $k \leq 2^d$, there are $d$ coordinates in $S$ such that not all $2^d$ restrictions appear on these $d$ coordinates, thereby implying  $\mathcal{H}(d,k)=k$  and hence the base case.
% Now assume that the claim is true for all $k',d'$ such that $k'<k$ and $d'\leq d$. Suppose $S$ is a down-closed set such that $S_{\leq d} = \mathcal{H}(d,k)$. Moreover, assume among all such sets $S$, the largest reverse lexicographic element in $S$ is minimized. 
%
% Let $m$ be the smallest integer such that $S\subseteq \mathcal{P}([m])$. Define 
% \begin{align*}
% S^0 &\coloneqq \{x\in S \ | \ x_m=0\}\,,\\
% S^1 &\coloneqq \{x-e_m \ | \ x\in S \wedge x_m=1\}\,,
% \end{align*}
% where $e_m$ is the standard unit vector with $1$ in its $m$th coordinate. 
% Let $k_0=|S^0|$ and $k_1=|S^1|$. Note that, $k=k_0+k_1$ and $S= S^0 \cup (S^1+e_m)$.
%
% Since $S$ is down-closed, we have $S^1\subseteq S^0$. Moreover, 
% \[
% |S_{\leq d}| = |S^0_{\leq d}| + |S^1_{\leq d-1}|\,. 
% \]
%
% Let $S^*=\mathrm{M}(k)\backslash M(|S^0|)$. Since $|S^1|\leq |S^0|$, we have $M(|S^1|)\cap S^*=\emptyset$. By \cref{lem:first k strings-low hamming weight-binary case}, $|S^*_{\leq d}|\leq |M(|S^1|)_{\leq d-1}|\leq|S^1_{\leq d-1}|$. Thus $|\mathrm{M}(k)_{\leq d}|=|M(|S^0|)_{\leq d}|+|S^*_{\leq d}|\leq |S^0_{\leq d}|+|S^1_{\leq d-1}|=|S_{\leq d}|$
\end{proof}
% \begin{corollary}\label{thm:lowerbound-hilbert-function}
% For all $k,n,d \in \mathbb{N}$ where $n\geq 1$ and $k \leq 2^n$,
% $\mathcal{H}_2(d,k) \geq \binom{\floor{\log k}}{\leq d}$.
% \end{corollary}

% \snote{let's conclude a result on the Hilbert function? I assume this would use a corollary from Lemma 3.2?}\pooya{perhaps we can move this corollary and its corollary to Hilbert functions to where the main theorem is (which I now have moved to the start of the section}

%moved the combinatorial proof to a separate file, commenting it out for now:
%\input{combinatorial-proof_backup}

\subsection{The General Case of Finite Grids}
%: A Proof Based on Beelen-Datta}
We prove \cref{thm:hilbertbound} in this subsection. In fact, we prove the theorem in a more general setting, described as follows.

Let $F=\prod_{i=1}^n \{0,1,\dots,r_i-1\}$ where $r_1\leq r_2\leq \dots\leq r_n$. Let $d\in\mathbb{N}$. We introduce the following notations:

For $S\subseteq F$, define $\nabla(S):=\{a\in F: b\leq_P a~\text{for some}~b\in S\}$, i.e., $\nabla(S)$ is the up-closure of $S$.
For $k\in\{0,\dots,|F|\}$, denote by $\mathrm{M}(k)$ the set of the smallest $k$ elements of $F$ in lexicographic order.
And for $r\in\{0,\dots,|F_{\leq d}|\}$,
denote by $\mathrm{L}_{\leq d}(r)$ the set of the largest $r$ elements of $F_{\leq d}$ in lexicographic order. 

The main result of this subsection is the following generalization of \cref{thm:hilbertbound}.

\begin{theorem}\label{thm:general-BD}
For every $k\in\mathbb{N}$ such that $k\leq |F|$,
\[
\mathcal{H}_F(d,k)=|\mathrm{M}(k)_{\leq d}|\;.
\]
%where $\mathrm{M}(k)$ is the set of the smallest $k$ elements of $F$ in lexicographic order. 
%In particular, for $F=\{0,1,\dots,q-1\}^n$, we have
%\[
%\mathcal{H}_q^n(d,k)=|\mathrm{M}^n_q(k)_{\leq d}|\;.
%\]
\end{theorem}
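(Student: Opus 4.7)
The plan is to reduce the question about down-closed sets in $F$ to an extremal inequality on up-closures of subsets of $F_{\leq d}$, which is precisely the form of statement treated by Beelen and Datta. The first step is to observe that $\mathrm{M}(k)$ is itself down-closed in $F$: because $\leq_P$ refines the lexicographic order (if $a\leq_P b$ then either $a=b$ or $a\prec b$), every element dominated by a lex-small element is again lex-small. This immediately yields the easy direction $\mathcal{H}_F(d,k)\leq |\mathrm{M}(k)_{\leq d}|$.

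For the matching lower bound, given any down-closed $T\subseteq F$ with $|T|=k$, I would pass to the complementary up-closed set $U:=F\setminus T$ of size $|F|-k$ and use the identity $|T_{\leq d}|=|F_{\leq d}|-|U\cap F_{\leq d}|$ to rewrite the minimization as a maximization: we must show $|U\cap F_{\leq d}|\leq |(F\setminus\mathrm{M}(k))\cap F_{\leq d}|$ for every such $U$. Setting $W:=U\cap F_{\leq d}$, up-closedness of $U$ forces $\nabla(W)\subseteq U$, hence $|\nabla(W)|\leq |F|-k$. Thus the problem becomes: show that any $W\subseteq F_{\leq d}$ with $|\nabla(W)|\leq |F|-k$ satisfies $|W|\leq |(F\setminus\mathrm{M}(k))\cap F_{\leq d}|$. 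This is exactly the point at which I would invoke the theorem of Beelen and Datta, which asserts that among all $r$-subsets of $F_{\leq d}$, the up-closure $|\nabla(W)|$ is minimized by $W=\mathrm{L}_{\leq d}(r)$, the $r$ lex-largest elements of $F_{\leq d}$.

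Given this minimizing statement, the remaining work is a clean combinatorial matching. I would set $r^{*}:=|(F\setminus\mathrm{M}(k))\cap F_{\leq d}|$ and verify that $\mathrm{L}_{\leq d}(r^{*})=(F\setminus\mathrm{M}(k))\cap F_{\leq d}$: since $\mathrm{M}(k)$ is a lex-initial segment of $F$, the complement $F\setminus\mathrm{M}(k)$ is a lex-terminal segment, and intersecting with $F_{\leq d}$ extracts exactly the $r^{*}$ lex-largest elements of $F_{\leq d}$. Since $F\setminus\mathrm{M}(k)$ is up-closed and contains $\mathrm{L}_{\leq d}(r^{*})$, we get $\nabla(\mathrm{L}_{\leq d}(r^{*}))\subseteq F\setminus\mathrm{M}(k)$, hence $|\nabla(\mathrm{L}_{\leq d}(r^{*}))|\leq|F|-k$. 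Combined with the monotonicity of $r\mapsto |\nabla(\mathrm{L}_{\leq d}(r))|$ and the fact that any admissible $W$ has size at most the largest such $r$, this pins down the extremum at $r^{*}$ and yields $|W|\leq r^{*}$, finishing the proof.

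The main obstacle I anticipate is extracting from Beelen and Datta the exact extremal statement in the form we need --- namely, that the lex-largest $r$-subsets of $F_{\leq d}$ minimize $|\nabla(\cdot)|$ --- since their theorem is natively phrased in the language of generalized Hamming weights of affine Cartesian codes. Translating between their formulation (where $|F|-|\nabla(W)|$ is the number of common zeros of the corresponding monomial evaluations) and the combinatorial one we use here will require some care, but should be essentially formal; once that dictionary is in place, the passage via complementation and the lex-matching step are both straightforward.
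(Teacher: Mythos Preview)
Your approach is essentially the one the paper takes: pass to complements, apply Beelen--Datta to $W=U\cap F_{\leq d}$, and identify the extremizer with the lex-terminal slice $(F\setminus\mathrm{M}(k))\cap F_{\leq d}$. However, there is a genuine gap in your last step. From $\nabla(\mathrm{L}_{\leq d}(r^{*}))\subseteq F\setminus\mathrm{M}(k)$ you correctly get $|\nabla(\mathrm{L}_{\leq d}(r^{*}))|\leq |F|-k$, and from Beelen--Datta plus monotonicity you get $|W|\leq r_{\max}$, where $r_{\max}$ is the \emph{largest} $r$ with $|\nabla(\mathrm{L}_{\leq d}(r))|\leq |F|-k$. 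What you have not shown is $r_{\max}=r^{*}$: the containment $\nabla(\mathrm{L}_{\leq d}(r^{*}))\subseteq F\setminus\mathrm{M}(k)$ can be strict (e.g.\ take $F=\{0,1,2\}^2$, $d=1$, $k=2$: then $r^{*}=1$, $\nabla(\mathrm{L}_{\leq 1}(1))$ has size $6$ while $|F|-k=7$), so a priori $|\nabla(\mathrm{L}_{\leq d}(r^{*}+1))|$ might still be $\leq |F|-k$.

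The missing ingredient is exactly the paper's \cref{lem:DeltaM}: for every $r$, the set $\Delta(\mathrm{L}_{\leq d}(r))=F\setminus\nabla(\mathrm{L}_{\leq d}(r))$ is a lex-initial segment $\mathrm{M}(k')$. Granting this, if $r>r^{*}$ then the smallest element $x$ of $\mathrm{L}_{\leq d}(r)$ lies in $\mathrm{M}(k)$, and since $\Delta(\mathrm{L}_{\leq d}(r))$ consists of the lex-predecessors of $x$, it is a proper subset of $\mathrm{M}(k)$, so $|\nabla(\mathrm{L}_{\leq d}(r))|>|F|-k$ as needed. The paper proves this lemma by a short case analysis on the first coordinate where $y$ exceeds $x$; you should add that argument (or an equivalent one) to close the gap. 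Your worry about extracting the right statement from Beelen--Datta is, by contrast, not an issue: their Theorem~3.8 is already phrased exactly as ``$|\nabla(\mathrm{L}_{\leq d}(r))|\leq |\nabla(S)|$ for all $S\subseteq F_{\leq d}$ with $|S|=r$'', which is what you use.
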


We derive \cref{thm:general-BD} from a combinatorial result of Beelen and Datta \cite{BD18}, which generalizes the earlier work of Wei \cite{Wei91} and Heijnen--Pellikaan \cite{HP98, Hei99}.

\begin{theorem}[{\cite[Theorem~3.8]{BD18}}]\label{thm:BD}
Let $S\subseteq F_{\leq d}$ and $r=|S|$. Then $|\nabla(\mathrm{L}_{\leq d}(r))|\leq |\nabla(S)|$.\footnote{In \cite{BD18}, $\mathrm{L}_{\leq d}(r)$ is denoted by $M(r)$, while we use $\mathrm{M}(r)$ to denote the set of the smallest $r$ elements of $F$ in lexicographic order.}
\end{theorem}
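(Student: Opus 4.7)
The plan is to prove Theorem~\ref{thm:BD} by induction on the dimension $n$ of the grid $F=\prod_{i=1}^n\{0,\ldots,r_i-1\}$. Since the lexicographic order on $F$ prioritizes the first coordinate, partitioning any candidate $S\subseteq F_{\leq d}$ according to the value of its first coordinate reveals a layered structure that interacts cleanly with the induction.

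For the base case $n=1$, we have $F=\{0,\ldots,r_1-1\}$ and $\nabla(S)=\{x\in F:x\geq \min S\}$, so $|\nabla(S)|=r_1-\min S$ is minimized precisely when $\min S$ is as large as possible, which forces $S=\mathrm{L}_{\leq d}(r)$. For the inductive step, write $F=\{0,\ldots,r_1-1\}\times F'$ with $F'=\prod_{i=2}^n\{0,\ldots,r_i-1\}$. Given $S\subseteq F_{\leq d}$ with $|S|=r$, define the layers $S_j'=\{(x_2,\ldots,x_n):(j,x_2,\ldots,x_n)\in S\}\subseteq F'_{\leq d-j}$ and set $r_j=|S_j'|$, so $\sum_{j=0}^{r_1-1}r_j=r$. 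Since $b\leq_P x$ forces $b_1\leq x_1$, a direct computation yields
\[
|\nabla(S)|=\sum_{k=0}^{r_1-1}\Bigl|\bigcup_{j\leq k}\nabla'(S_j')\Bigr|,
\]
where $\nabla'$ denotes up-closure in $F'$. The proof then proceeds in two stages: first, for each fixed tuple $(r_0,\ldots,r_{r_1-1})$, we show that taking every $S_j'=\mathrm{L}'_{\leq d-j}(r_j)$ simultaneously minimizes every partial union $|\bigcup_{j\leq k}\nabla'(S_j')|$; second, we optimize over tuples by an exchange argument proving that pushing weight from smaller $j$ to larger $j$ never increases the sum, so the minimum is achieved by the tuple that $\mathrm{L}_{\leq d}(r)$ itself induces.

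The main obstacle lies in the first stage, because the plain inductive hypothesis controls only each individual $|\nabla'(S_j')|$, whereas the partial-union bound additionally requires understanding how the up-closures of different layers overlap. The plan to overcome this is to strengthen the inductive hypothesis to a statement about chains of sizes: for any sizes $r_0,\ldots,r_{r_1-1}$ and nonincreasing degree bounds, the family of lex-largest choices minimizes every cumulative union. This stronger statement will be proved by a layer-by-layer exchange lemma: starting from any family $(T_j)$, replacing a single $T_j$ by $\mathrm{L}'_{\leq d-j}(r_j)$ is shown not to increase $|\bigcup_{j'\leq k}\nabla'(T_{j'})|$ for any $k\geq j$, via a pairing argument inside $F'$ that sends the ``new'' elements contributed by $T_j$ at level $k$ injectively to the corresponding new elements contributed by $\mathrm{L}'_{\leq d-j}(r_j)$, using the structural fact that the up-closure of a lex-largest set is itself lex-initial. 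The second stage is a more routine convexity-style rearrangement: trading one unit of weight between two layers and comparing the contribution gained at the larger-$j$ layer with the saving at the smaller-$j$ layer, using the monotonicity of cumulative-union size in layer sizes together with the structural description above.
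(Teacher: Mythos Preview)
The paper does not give its own proof of this statement: Theorem~\ref{thm:BD} is quoted verbatim from Beelen--Datta \cite{BD18} and used as a black box in the proof of Theorem~\ref{thm:general-BD}. So there is no paper proof to compare against; the relevant question is only whether your sketch is correct.

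There is a genuine gap. The theorem is stated under the standing hypothesis $r_1\leq r_2\leq\cdots\leq r_n$, and this hypothesis is essential: without it the statement is false. Take $F=\{0,1,2\}\times\{0,1\}^2$ (so $r_1=3>r_2=r_3=2$), $d=2$, and $S=\{(0,1,1)\}$ with $r=1$. Then $\nabla(S)=\{0,1,2\}\times\{(1,1)\}$ has size $3$, while $\mathrm{L}_{\leq 2}(1)=\{(2,0,0)\}$ has $\nabla(\mathrm{L}_{\leq 2}(1))=\{2\}\times\{0,1\}^2$ of size $4$. Your outline never invokes the ordering hypothesis, so as written it cannot be correct.

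The place where the argument breaks is Stage~2, which you call a ``more routine convexity-style rearrangement.'' In the counterexample above, following your recipe with $F'=\{0,1\}^2$ gives the layer bounds $g_0(1)=1$, $g_1(1)=2$, $g_2(1)=4$, and your objective $\sum_k\max_{j\leq k}g_j(r_j)$ takes the values $3,4,4$ on the tuples $(1,0,0),(0,1,0),(0,0,1)$ respectively. Thus ``pushing weight from smaller $j$ to larger $j$'' strictly \emph{increases} the objective here, and the minimum is not achieved by the tuple of $\mathrm{L}_{\leq d}(r)$. So the exchange step is simply false without the ordering assumption, and even with it you would need to identify precisely where $r_1\leq r_2\leq\cdots$ enters and why it forces the exchange to go the right way; this is not routine. (By contrast, your Stage~1 claim---that for a fixed layer profile, replacing each $S_j'$ by $\mathrm{L}'_{\leq d-j}(r_j)$ minimizes every partial union---is correct and in fact easier than you suggest: since each $\nabla'(\mathrm{L}'_{\leq d-j}(r_j))$ is a lex-final segment of $F'$, these sets are nested, so their union is the largest one, and the plain inductive hypothesis on each individual layer already gives the bound. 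No strengthened hypothesis or pairing argument is needed there.)
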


Define $\Delta(S):=F\setminus \nabla(S)$ for $S\subseteq F$. 
The next lemma gives a characterization of $\Delta(S)$.

\begin{lemma}\label{lem:delta}
Let $T\subseteq F_{\leq d}$ be down-closed and $S=F_{\leq d}\setminus T$.
Then $\Delta(S)$ is the unique maximal set with respect to inclusion among all down-closed subsets $U$ of $F$ satisfying $U_{\leq d}=T$. 
\end{lemma}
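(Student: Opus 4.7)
The plan is to verify the three requirements in the claim: that $\Delta(S)$ is down-closed, that its degree-$\leq d$ part equals $T$, and that any other down-closed set $U$ with $U_{\leq d}=T$ is contained in $\Delta(S)$. The last item already subsumes both maximality and uniqueness, so that is really the only nontrivial point to organize. I do not expect any serious obstacle; this is essentially an unpacking of definitions, with the down-closedness of $T$ doing all the work.

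First, I would note that $\nabla(S)$ is up-closed by construction, so its complement $\Delta(S)=F\setminus\nabla(S)$ is down-closed. This takes care of requirement~(i).

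Next, for requirement (ii), I would compute $\Delta(S)_{\leq d}=F_{\leq d}\setminus\nabla(S)$ and argue that $F_{\leq d}\cap\nabla(S)=S$. The inclusion $S\subseteq F_{\leq d}\cap\nabla(S)$ is immediate since $S\subseteq F_{\leq d}$ and $S\subseteq\nabla(S)$. For the reverse, take $a\in F_{\leq d}\cap\nabla(S)$; then $|a|\leq d$ and there exists $b\in S$ with $b\leq_P a$, so in particular $|b|\leq d$. If $a$ belonged to $T$, then by down-closedness of $T$ we would have $b\in T$, contradicting $b\in S=F_{\leq d}\setminus T$. Hence $a\in F_{\leq d}\setminus T=S$, giving the desired equality, and therefore $\Delta(S)_{\leq d}=F_{\leq d}\setminus S=T$.

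For requirement (iii), let $U\subseteq F$ be any down-closed set with $U_{\leq d}=T$. I would show $U\cap\nabla(S)=\emptyset$, which is the same as $U\subseteq\Delta(S)$. Suppose for contradiction that some $a\in U$ lies in $\nabla(S)$; pick $b\in S$ with $b\leq_P a$. Since $U$ is down-closed, $b\in U$, and since $b\in S\subseteq F_{\leq d}$, we get $b\in U_{\leq d}=T$, contradicting $b\in S=F_{\leq d}\setminus T$. Hence no such $a$ exists, so $U\subseteq\Delta(S)$. Applying this to $U=\Delta(S)$ itself confirms it is admissible (together with (i) and (ii)), and the containment statement establishes both that $\Delta(S)$ is the unique maximum and that it is the unique maximal element with respect to inclusion among the sets $U$ in question.
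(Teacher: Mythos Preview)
Your proof is correct and follows essentially the same approach as the paper's: both show that $\Delta(S)$ is down-closed (because $\nabla(S)$ is up-closed), that $\Delta(S)_{\leq d}=T$ via the key identity $F_{\leq d}\cap\nabla(S)=S$ established by a down-closedness contradiction, and that any admissible $U$ satisfies $U\cap\nabla(S)=\emptyset$ by the same contradiction applied to $U$. The only cosmetic difference is that the paper first isolates the claim $\nabla(S)\cap T=\emptyset$ and then explicitly reuses it with $U$ in place of $T$, whereas you rerun the short argument directly; the content is identical.
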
 
\begin{proof}
Suppose $\nabla(S)\cap T$ contains an element $a$. Then $b\leq_P a$ for some $b\in S$.
As $T$ is down-closed and $a\in T$, we have $b\in T$ and hence $b\in S\cap T$, contradicting the fact that $S=F_{\leq d}\setminus T$. So $\nabla(S)\cap T=\emptyset$. 
Therefore, $T\subseteq F_{\leq d}\setminus \nabla(S)$.
On the other hand, we have $F_{\leq d}\setminus \nabla(S)\subseteq F_{\leq d}\setminus S\subseteq T$. So $T=F_{\leq d}\setminus \nabla(S)$.
%So we may replace $F_{\leq d}\setminus S$ in the lemma by $F_{\leq d}\setminus \nabla(S)$.

By definition, $\nabla(S)$ is up-closed. So $\Delta(S)=F\setminus \nabla(S)$ is down-closed.
And $\Delta(S)_{\leq d}=F_{\leq d}\setminus\nabla(S)=T$ by definition.

Finally, consider an arbitrary down-closed set $U\subseteq F$ satisfying $U_{\leq d}=T$. Then $S\cap U=\emptyset$.
Applying the first part of the proof to $U$ in place of $T$ then shows that $\nabla(S)\cap U=\emptyset$. So $U\subseteq F\setminus \nabla(S)=\Delta(S)$. This proves the unique maximality of $\Delta(S)$ with respect to inclusion.
\end{proof}

\begin{lemma}\label{lem:DeltaM}
Let $r\in\{0,\dots,|F_{\leq d}|\}$ and $k=|\Delta(\mathrm{L}_{\leq d}(r))|$. Then
$\Delta(\mathrm{L}_{\leq d}(r))=\mathrm{M}(k)$.
\end{lemma}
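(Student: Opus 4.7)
The plan is to show that $\nabla(\mathrm{L}_{\leq d}(r))$ is closed upward in the lexicographic order on $F$; equivalently, its complement $\Delta(\mathrm{L}_{\leq d}(r))$ is a lex-initial segment of $F$, which by definition of $\mathrm{M}(k)$ forces $\Delta(\mathrm{L}_{\leq d}(r))=\mathrm{M}(k)$ where $k$ is its cardinality.

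The key tool I would introduce is a map $c^\star:F\to F_{\leq d}$ defined by
\[
c^\star(a)\coloneqq \text{the lex-maximum element of } \{c\in F_{\leq d}:c\leq_P a\}.
\]
This set is always nonempty because $0\in F_{\leq d}$ and $0\leq_P a$. A direct greedy description is available: $c^\star(a)_i=\min\bigl(a_i,\, d-\sum_{j<i}c^\star(a)_j\bigr)$ for $i=1,\dots,n$.

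The first step is to observe that
\[
a\in\nabla(\mathrm{L}_{\leq d}(r))\iff c^\star(a)\in \mathrm{L}_{\leq d}(r).
\]
The $(\Leftarrow)$ direction is immediate since $c^\star(a)\leq_P a$. For $(\Rightarrow)$, if some $c'\in \mathrm{L}_{\leq d}(r)$ satisfies $c'\leq_P a$, then $c'\in\{c\in F_{\leq d}:c\leq_P a\}$, so $c^\star(a)\succeq c'$ in lex order; since $\mathrm{L}_{\leq d}(r)$ consists of the top $r$ elements of $F_{\leq d}$ in lex order and $c^\star(a)\in F_{\leq d}$, we get $c^\star(a)\in \mathrm{L}_{\leq d}(r)$.

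The second step is the monotonicity claim: if $a\preceq b$ in lex order then $c^\star(a)\preceq c^\star(b)$ in lex order. Let $i$ be the first coordinate where $a$ and $b$ differ; then $a_i<b_i$ and $a_j=b_j$ for $j<i$. By the greedy formula, $c^\star(a)_j=c^\star(b)_j$ for all $j<i$, and at position $i$ we have $c^\star(a)_i=\min(a_i, D)$ and $c^\star(b)_i=\min(b_i, D)$ with the same $D=d-\sum_{j<i}c^\star(a)_j$. Since $a_i<b_i$, this gives $c^\star(a)_i\leq c^\star(b)_i$; if the inequality is strict, we are done, and otherwise both equal $D$, which forces $\sum_{j\leq i}c^\star(a)_j=d$ and hence $c^\star(a)_j=0=c^\star(b)_j$ for all $j>i$, so $c^\star(a)=c^\star(b)$.

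Combining these two steps finishes the proof: if $a\in\nabla(\mathrm{L}_{\leq d}(r))$ and $a\preceq b$, then $c^\star(a)\in \mathrm{L}_{\leq d}(r)$ and $c^\star(a)\preceq c^\star(b)$, so $c^\star(b)\in \mathrm{L}_{\leq d}(r)$ as well, hence $b\in\nabla(\mathrm{L}_{\leq d}(r))$. Thus $\Delta(\mathrm{L}_{\leq d}(r))=F\setminus\nabla(\mathrm{L}_{\leq d}(r))$ is a lex-initial segment of $F$, i.e. equals $\mathrm{M}(k)$ for $k=|\Delta(\mathrm{L}_{\leq d}(r))|$. I expect the only mildly subtle point to be the monotonicity of $c^\star$, and in particular handling the case where the greedy cap is tight at index $i$ — but as shown above this is dispatched cleanly by the greedy formula.
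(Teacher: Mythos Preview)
Your proof is correct, and the route is genuinely different from the paper's. The paper argues directly: it lets $x$ be the lex-smallest element of $\mathrm{L}_{\leq d}(r)$, shows (essentially by the observation $v\leq_P u\Rightarrow v\preceq u$) that every $u\prec x$ lies in $\Delta(\mathrm{L}_{\leq d}(r))$, and then for each $y\succeq x$ explicitly exhibits a witness in $\mathrm{L}_{\leq d}(r)$ below $y$ --- either $x$ itself, or $x'=(x_1,\dots,x_{i-1},x_i+1,0,\dots,0)$ where $i$ is the first coordinate where $y$ and $x$ differ --- via a short case split on whether $x_{i+1},\dots,x_n$ are all zero. Your argument instead packages the witness construction into the single lex-monotone map $c^\star$; the characterization $a\in\nabla(\mathrm{L}_{\leq d}(r))\iff c^\star(a)\in\mathrm{L}_{\leq d}(r)$ together with monotonicity replaces the paper's case analysis entirely. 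What you gain is uniformity and a reusable lemma (monotonicity of $c^\star$ could be handy elsewhere); what the paper's proof gains is that it needs no auxiliary definitions. Both are short and clean, and they are equally easy once one sees the idea.
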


\begin{proof}
If $r=0$, then $\Delta(\mathrm{L}_{\leq d}(r))=\Delta(\emptyset)=F\setminus\nabla(\emptyset)=F$, and the lemma trivially holds. 
%And if $r=|F_{\leq d}|$, then $\Delta(\mathrm{L}_{\leq d}(r))=F\setminus\nabla(F_{\leq d})=\emptyset$, and the claim again holds.
So assume $r>0$.
%$0<r<|F_{\leq d}|$. 
Let $x=(x_1,\dots,x_n)$ be the smallest element in $\mathrm{L}_{\leq d}(r)$.
%And let $y=(y_1,\dots,y_n)$ be the largest element in $F_{\leq d}\setminus \mathrm{L}_{\leq d}(r)$.

Let $U$ be the subset of elements in $F$ smaller than $x$ in lexicographic order.
Then $U\cap \nabla(\mathrm{L}_{\leq d}(r))=\emptyset$ and hence $U\subseteq \Delta(\mathrm{L}_{\leq d}(r))$. 
It suffices to show that $U=\Delta(\mathrm{L}_{\leq d}(r))$.
Consider arbitrary $y=(y_1,\dots,y_n)\in F\setminus U$. Then $y$ is at least $x$ in lexicographic order. 

We claim that $y\in\nabla(\mathrm{L}_{\leq d}(r))$.
If $y=x$, then it is in $\mathrm{L}_{\leq d}(r)\subseteq\nabla(\mathrm{L}_{\leq d}(r))$, so the claim holds in this case.
Now suppose $y$ is greater than $x$ in lexicographic order.
Let $i\in [n]$ be the smallest integer such that $y_i\neq x_i$. We have $y_i>x_i$.
Let $x'=(x_1,\dots,x_{i-1},x_i+1,0,\dots,0)$.

Suppose $x_{i+1},\dots,x_n$ are not all zero. Then the generalized Hamming weight of $x'$ is bounded by that of $x$. So $x'\in \mathrm{L}_{\leq d}(r)$.
As $x'\leq_P y$, we have $y\in \nabla(\mathrm{L}_{\leq d}(r))$.

Now suppose $x_{i+1}=\dots=x_n=0$. Then $x\leq_P y$. Again, this implies $y\in \nabla(\mathrm{L}_{\leq d}(r))$.

So the claim $y\in\nabla(\mathrm{L}_{\leq d}(r))$ holds in all cases. Therefore, $y\not\in\Delta(\mathrm{L}_{\leq d}(r))$. As $y\in F\setminus U$ is arbitrary, this shows $U\supseteq \Delta(\mathrm{L}_{\leq d}(r))$ and hence $U=\Delta(\mathrm{L}_{\leq d}(r))$.
\end{proof}

Now we are ready to prove \cref{thm:general-BD}.

%\zeyu{$\mathcal{H}(d,k)$ is defined for $\{0,\dots,q-\}^n$ but not for a general finite grid.}

\begin{proof}[Proof of \cref{thm:general-BD}]
For $k\in \{0,1,\dots, |F|\}$, define $f(k)=|\mathrm{M}(k)_{\leq d}|$.
Then 
\begin{equation}\label{eq:monotone}
f(0)=0,~f(|F|)=|F_{\leq d}|,~\text{and}~f(k)\leq f(k+1)\leq f(k)+1~\text{for}~k\in \{0,\dots,|F|-1\}.
\end{equation}

For $s\in\{0,1,\dots,|F_{\leq d}|\}$, let $g(s)$ be the largest integer in $\{0,1,\dots, |F|\}$ such that $f(g(s))=s$.
Note that by \eqref{eq:monotone}, $g(\cdot)$ is a well-defined, strictly monotonically increasing function and $g(|F_{\leq d}|)=|F|$.

Fix $k\leq |F|$. By the definition of $\mathcal{H}_F(d,k)$, we want to show that the smallest possible value of $|T_{\leq d}|$ over all down-closed sets $T\subseteq F$ of size $k$ is $|\mathrm{M}(k)_{\leq d}|$.
Obviously, this value is attained by choosing $T=\mathrm{M}(k)$.

Now assume to the contrary that there exists a down-closed set $T\subseteq F$ of size $k$ such that $|T_{\leq d}|<|\mathrm{M}(k)_{\leq d}|$, i.e., $|T_{\leq d}|<f(k)$.
Let $s$ be the smallest integer in $\{0,1,\dots,|F_{\leq d}|\}$ such that $g(s)\geq k$. Such $s$ exists since $g(|F_{\leq d}|)=|F|$.
Then 
\[
|T_{\leq d}|<f(k)\leq f(g(s))=s
\]
where the second step uses the monotonicity of $f(\cdot)$.
We must have $f(k)>s-1$, since otherwise we would have $s\geq 1$ and $f(k)=s-1$, which implies $g(s-1)\geq k$ by the definition of $g(\cdot)$. But this contradicts the choice of $s$. Therefore, $f(k)=s$.

%Then $g(s')<k$ by the choice of $s$.

Let $S=F_{\leq d}\setminus T$.
Let $s'=|T_{\leq d}|<s$ and $r=|S|=|F_{\leq d}|-s'$.
Note that $T_{\leq d}$ is down-closed since $T$ and $F_{\leq d}$ are.
Applying \cref{lem:delta} to $T_{\leq d}$ shows that $\Delta(S)$ is the largest down-closed subset $U$ of $F$ satisfying $U_{\leq d}=T_{\leq d}$. As $T$ is another set satisfying this property, we have 
\begin{equation}\label{eq:Delta-1}
|\Delta(S)|\geq |T|=k \;.
\end{equation}
By \cref{thm:BD},  $|\nabla(\mathrm{L}_{\leq d}(r))|\leq |\nabla(S)|$.
So $|\Delta(\mathrm{L}_{\leq d}(r))|\geq |\Delta(S)|$. Combining this with \eqref{eq:Delta-1} gives 
\[
\Delta(\mathrm{L}_{\leq d}(r))\geq k\;.
\]

Let $k'=|\Delta(\mathrm{L}_{\leq d}(r))|\geq k$. By \cref{lem:DeltaM}, $\Delta(\mathrm{L}_{\leq d}(r))=\mathrm{M}(k')$. Then $f(k')=|\mathrm{M}(k')_{\leq d}|=|\Delta(\mathrm{L}_{\leq d}(r))_{\leq d}|$. And $\Delta(\mathrm{L}_{\leq d}(r))_{\leq d}=F_{\leq d}\setminus \mathrm{L}_{\leq d}(r)$ by \cref{lem:delta}.
So $f(k')=|F_{\leq d}\setminus \mathrm{L}_{\leq d}(r)|=|F_{\leq d}|-r=s'$. 

In summary, we have $s'<s$, $k'\geq k$, $f(k)=s$, and $f(k')=s'$, contradicting the monotonicity of $f(\cdot)$ as stated by \eqref{eq:monotone}.
%\zeyu{However, this requires $\Delta(\mathrm{L}_{\leq d}(r))$ to have the form $\mathrm{M}(k')$. This may be even untrue. I am trying to understand what this set looks like. The proof is not finished.}
%For $\ell\in\{0,1,\dots,|F_{\leq d}|\}$, denote by $\mathrm{M}'(\ell)$ the set of the smallest $\ell$ elements in $F_{\leq d}$ in reverse lexicographic order. In other words, $\mathrm{M}'(\ell)=F_{\leq d}\setminus \mathrm{L}_{\leq d}(|F_{\leq d}|-\ell)$.
%In particular, $\mathrm{M}'(s')=F_{\leq d}\setminus \mathrm{L}_{\leq d}(r)$.
%By \cref{lem:delta}, $\Delta(\mathrm{L}_{\leq d}(r))$ is the largest down-closed subset $U$ of $F$ satisfying $U\cap F_{\leq d}=\mathrm{M}'(s')$. On the other hand, we have 
%$|\mathrm{M}(g(s'))\cap F_{\leq d}|=f(g(s'))=s'$, which implies $\mathrm{M}(g(s'))\cap F_{\leq d}=\mathrm{M'}(s')$.
%Therefore, 
%\begin{equation}\label{eq:Delta-2}
%|\Delta(\mathrm{L}_{\leq d}(r))|\geq |\mathrm{M}(g(s'))|=g(s').
%\end{equation}
\end{proof}

\section{A Tight Bound on the Size of \texorpdfstring{Degree-$d$}{Degree-d} Closures of Sets}\label{sec:closure}
For $n,d,\delta\in\N$, denote by $N(n,d,\delta)$ the number of monomials $X_1^{e_1}\cdots X_n^{e_n}$ with $e_1,\dots,e_n\leq \delta$ and $e_1+\dots+e_n\leq d$. For example, $N(n,d,1)=\binom{n}{\leq d}$ and $N(n,d,\delta)=\binom{n+d}{d}$ for $d\leq \delta$.

\begin{lemma}
$\h_{\F_q^n}(d, \F_q)=N(n,d,q-1)$.
\end{lemma}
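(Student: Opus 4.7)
The plan is to apply \cref{lem:standard-cardinality} to $S=\F_q^n$. Fixing a degree-compatible monomial order $\preceq$ (say graded lexicographic), part~(b) of that lemma reduces the problem to computing the cardinality of $\SM_{\leq d}(\F_q^n)$, the set of standard monomials of $I(\F_q^n)$ of degree at most $d$. So the goal becomes identifying $\SM(\F_q^n)$ explicitly and then counting those elements of total degree at most $d$.

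The key step is to show that
\[
\SM(\F_q^n) = \{X_1^{e_1}\cdots X_n^{e_n} : 0 \leq e_i \leq q-1 \text{ for all } i\in [n]\}.
\]
For the ``$\subseteq$'' direction, observe that the Frobenius-type polynomial $X_i^q - X_i$ vanishes on every element of $\F_q$, hence on $\F_q^n$, so it lies in $I(\F_q^n)$. Under any monomial order, its leading monomial is $X_i^q$. Consequently, any monomial $m=X_1^{e_1}\cdots X_n^{e_n}$ with some $e_i \geq q$ is the leading monomial of $(m/X_i^q)(X_i^q-X_i) \in I(\F_q^n)$, and thus cannot be a standard monomial. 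Therefore $\SM(\F_q^n)$ is contained in the right-hand side, which has exactly $q^n$ elements. For the reverse inclusion, I would use \cref{lem:standard-cardinality}\,(a), which gives $|\SM(\F_q^n)| = |\F_q^n| = q^n$. Since the standard monomials form a subset of size $q^n$ inside a set of size $q^n$, the two must coincide.

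With the characterization of $\SM(\F_q^n)$ in hand, the degree restriction gives
\[
\SM_{\leq d}(\F_q^n) = \{X_1^{e_1}\cdots X_n^{e_n} : 0 \leq e_i \leq q-1,\ e_1+\cdots+e_n \leq d\},
\]
and by the definition of $N(n,d,q-1)$, this set has cardinality $N(n,d,q-1)$. Combining with \cref{lem:standard-cardinality}\,(b) yields $\h_{\F_q^n}(d,\F_q) = N(n,d,q-1)$.

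There is no real obstacle here; the only thing to be mindful of is the choice of monomial order, since part (b) of \cref{lem:standard-cardinality} requires degree-compatibility, whereas the ``not standard'' direction only needs $X_i^q$ to be the leading monomial of $X_i^q - X_i$, which holds for any monomial order. Using a graded order satisfies both requirements simultaneously, so the argument goes through cleanly.
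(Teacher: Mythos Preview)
Your proof is correct. It differs from the paper's argument in packaging: the paper gives a direct two-line proof via multivariate Lagrange interpolation, observing that every function on $\F_q^n$ is uniquely an $\F_q$-linear combination of the restrictions $X_1^{e_1}\cdots X_n^{e_n}|_{\F_q^n}$ with $e_i\leq q-1$, so those of total degree at most $d$ form a basis of $\Gamma_{\F_q^n}(d)$. You instead route through \cref{lem:standard-cardinality}, explicitly identifying $\SM(\F_q^n)$ using the relations $X_i^q-X_i\in I(\F_q^n)$ together with the cardinality match from part~(a), and then applying part~(b). The underlying algebra is the same---both arguments hinge on the fact that the ``reduced'' monomials with individual degrees below $q$ represent functions on $\F_q^n$ faithfully---but your approach leans on the standard-monomial framework already developed in the paper, whereas the paper's own proof is self-contained and bypasses monomial orders entirely. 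Either is perfectly adequate for such a short lemma.
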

\begin{proof}
By multivariate Lagrange interpolation, every function on $\F_q^n$ can be uniquely written as a linear combination of $X_1^{e_1}\cdots X_n^{e_n}|_{\F_q^n}$ with $e_1,\dots,e_n\leq q-1$ over $\F_q$. So the set $\{X_1^{e_1}\cdots X_n^{e_n}|_{\F_q^n}: e_1,\dots,e_n\leq q-1, e_1+\dots+e_n\leq d\}$ is a basis of $\Gamma_{\F_q^n}(d)$ of size $N(n,d,q-1)$.
\end{proof}

In particular, \cref{thm:Nie-Wang}, which was proved by Nie and Wang \cite{nie2015hilbert}, can be restated as
\begin{equation}\label{eq:Nie-Wang}
|\cl_d(T)| \leq \frac{q^n}{\h_{\F_q^n}(d, \F_q)} \cdot |T| = \frac{q^n}{N(n,d,q-1)} \cdot |T|\;.
\end{equation}

We now give the following tight bound on the size of the degree-$d$ closure of a set $T\subseteq \F_q^n$, improving \eqref{eq:Nie-Wang}. 
%\zeyu{Simplified the proof below.}

\begin{theorem}\label{thm:tight-bound-cl}
Let $n,d,m\in\N$. Let $T\subseteq\F_q^n$ be a set of size $m$.
Then
\begin{equation}\label{eq:closure-size}
|\cl_d(T)|\leq \max_{0\leq k\leq q^n: |\mathrm{M}^n_q(k)_{\leq d}|\leq m} k = \begin{cases}
\max_{0\leq k\leq q^n: |\mathrm{M}^n_q(k)_{\leq d}|=m} k & \text{if } m\leq N(n,d,q-1),\\
q^n & \text{otherwise.}
\end{cases}
\end{equation}
\end{theorem}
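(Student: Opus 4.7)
The plan is to combine the tight Hilbert-function lower bound from \cref{cor:boundsOnHilbert} with one simple but crucial observation: passing from $T$ to its degree-$d$ closure does not change the Hilbert function at level $d$, i.e.\ $\h_{\cl_d(T)}(d,\F_q)=\h_T(d,\F_q)$. Indeed, by the very definition of $\cl_d(T)$, a degree-$\leq d$ polynomial vanishes on $T$ if and only if it vanishes on $\cl_d(T)$; hence the two evaluation maps from $\{f\in\F_q[X_1,\dots,X_n]:\deg f\leq d\}$ to $\Gamma_T(d)$ and to $\Gamma_{\cl_d(T)}(d)$ have identical kernels, and so their images have equal dimension.

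Setting $k:=|\cl_d(T)|$, the trivial inequality $\h_T(d,\F_q)\leq|T|=m$ together with the above identity and \cref{cor:boundsOnHilbert} then yields
\[
|\mathrm{M}^n_q(k)_{\leq d}|\;\leq\;\h_{\cl_d(T)}(d,\F_q)\;=\;\h_T(d,\F_q)\;\leq\;m.
\]
This shows that $k$ lies in the set $\{k':|\mathrm{M}^n_q(k')_{\leq d}|\leq m\}$, establishing the first inequality in \eqref{eq:closure-size}.

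For the case split on the right-hand side, define $g(k'):=|\mathrm{M}^n_q(k')_{\leq d}|$. Since $\mathrm{M}^n_q(k'+1)$ is obtained from $\mathrm{M}^n_q(k')$ by adjoining the next element in lexicographic order (which either has generalized Hamming weight $\leq d$ or not), we have $g(k'+1)-g(k')\in\{0,1\}$; moreover $g(0)=0$ and $g(q^n)=N(n,d,q-1)$ by the lemma preceding the theorem. When $m>N(n,d,q-1)$, the choice $k'=q^n$ itself satisfies $g(k')\leq m$, giving the maximum value $q^n$. When $m\leq N(n,d,q-1)$, the fact that $g$ moves in increments of $0$ or $1$ from $0$ to $N(n,d,q-1)$ forces $g$ to attain every integer value in $\{0,1,\dots,N(n,d,q-1)\}$, in particular the value $m$; since $g$ is nondecreasing, the largest $k'$ with $g(k')\leq m$ coincides with the largest $k'$ with $g(k')=m$, as claimed.

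The only conceptual ingredient is the Hilbert-function identity $\h_{\cl_d(T)}(d,\F_q)=\h_T(d,\F_q)$; once it is noted, the theorem follows almost immediately from \cref{cor:boundsOnHilbert}, and I do not foresee a real obstacle beyond careful monotonicity/intermediate-value bookkeeping for the second equality.
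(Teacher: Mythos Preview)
Your proposal is correct and follows essentially the same approach as the paper: both arguments combine the identity $\h_{\cl_d(T)}(d,\F_q)=\h_T(d,\F_q)$ with the lower bound of \cref{cor:boundsOnHilbert}, and handle the case split via the same increment-by-$0$-or-$1$ observation for $k'\mapsto|\mathrm{M}^n_q(k')_{\leq d}|$. The only cosmetic difference is that the paper phrases the first part as a proof by contradiction, whereas you chain the inequalities directly.
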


\begin{proof}
Note that $|\mathrm{M}^n_q(q^n)_{\leq d}|=N(n,d,q-1)$ and that $|\mathrm{M}^n_q(k-1)_{\leq d}|\leq |\mathrm{M}^n_q(k)_{\leq d}|\leq |\mathrm{M}^n_q(k-1)_{\leq d}|+1$ for $1\leq k\leq q^n$.
It follows that
\begin{equation}\label{eq:max-equiv}
\max_{0\leq k\leq q^n:|\mathrm{M}^n_q(k)_{\leq d}|\leq m} k = \begin{cases}
\max_{0\leq k\leq q^n: |\mathrm{M}^n_q(k)_{\leq d}|=m} k & \text{if } m\leq N(n,d,q-1),\\
q^n & \text{otherwise.}
\end{cases}
\end{equation}
This is because if $|\mathrm{M}^n_q(k)_{\leq d}|<m$ and $k<q^n$, we may always increase $k$ by one, which increases $|\mathrm{M}^n_q(k)_{\leq d}|$ by at most one.

Let $T\subseteq\F_q^n$ be a set of size $m$. Let $U=\cl_d(T)$ and $k_U=|U|$. 
We want to prove $k_U\leq \max_{0\leq k\leq q^n:|\mathrm{M}^n_q(k)_{\leq d}|\leq m} k$.
Assume to the contrary that this is not true. Then $|\mathrm{M}^n_q(k_U)_{\leq d}|>m$.
So we have 
\begin{equation}\label{eq:hU-hT}
\h_U(d)\geq |\mathrm{M}^n_q(k_U)_{\leq d}|>m=|T|\geq \h_d(T)\;,
\end{equation}
where the first inequality holds by \cref{cor:boundsOnHilbert}.
On the other hand, the fact that $U=\cl_d(T)$ implies that $\h_U(d)=\h_T(d)$, contradicting \eqref{eq:hU-hT}. So $|\cl_d(T)|= k_U\leq \max_{0\leq k\leq q^n:|\mathrm{M}^n_q(k)_{\leq d}|\leq m} k$.
\end{proof}

The next theorem states that the bound in \cref{thm:tight-bound-cl} is tight and explicitly constructs sets that meet this bound.
%\zeyu{The tightness is now stated separately and proved constructively below.}

\begin{theorem}\label{thm:closure-tightness}
Let $\sigma_A: \mathcal{M}\to A$ and $\phi: \mathcal{M}\to F$ be the bijections \eqref{eq:sigma-A} and \eqref{eq:phi} respectively, where $A=\F_q^n$, $F=\{0,1,\dots,q-1\}^n$, and $\mathcal{M}=\left\{\prod_{i=1}^n X_i^{e_i}: 0\leq e_1,\dots,e_n\leq q-1\right\}$.
Let $m$ be any integer such that $0\leq m\leq q^n$.
Choose the maximum $k\leq q^n$ such that $|\mathrm{M}^n_q(k)_{\leq d}|\leq m$.
Let $T_0=(\sigma_A\circ\phi^{-1})(\mathrm{M}_q^n(k)_{\leq d})\subseteq A=\F_q^n$.
If $|T_0|\geq m$, let $T=T_0$. Otherwise, let $T$ be an arbitrary set obtained by adding $m-|T_0|$ elements from $\F_q^n\setminus T_0$ to $T_0$. 
Then $T$ is a set of size $m$ that attains the equality in \eqref{eq:closure-size}.
\end{theorem}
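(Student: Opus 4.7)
The plan is to verify two things: that $|T|=m$, which is routine from the choice of $k$ since $|T_0|=|\mathrm{M}^n_q(k)_{\leq d}|\leq m$, and that $|\cl_d(T)|$ equals the maximum displayed in \eqref{eq:closure-size}. Since \cref{thm:tight-bound-cl} already delivers the matching upper bound $|\cl_d(T)|\leq k$, the entire content is the lower bound: I want to exhibit $k$ distinct points of $\F_q^n$ lying inside $\cl_d(T)$.

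The natural candidate is the enveloping set $S_0:=(\sigma_A\circ\phi^{-1})(\mathrm{M}^n_q(k))\subseteq\F_q^n$, which has size $k$ and contains $T_0$. As a preliminary, I check that both $\mathrm{M}^n_q(k)$ and $\mathrm{M}^n_q(k)_{\leq d}$ are down-closed in $F=\{0,\dots,q-1\}^n$: the coordinatewise order $\leq_P$ refines lexicographic order, so a lex-initial segment is automatically down-closed, and intersecting with $F_{\leq d}$ preserves this property because the generalized Hamming weight is monotone under $\leq_P$. Applying \cref{lem:A-to-F} to the down-closed set underlying $S_0$ yields
\[
\h_{S_0}(d,\F_q)=|\mathrm{M}^n_q(k)_{\leq d}|=|T_0|,
\]
and applying the same lemma to the down-closed set underlying $T_0$ yields
\[
\h_{T_0}(d,\F_q)=|(\mathrm{M}^n_q(k)_{\leq d})_{\leq d}|=|T_0|.
\]

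The crux is then a dimension comparison on the further restriction $\Gamma_{S_0}(d)\xrightarrow{\rho}\F_q^{T_0}$: the source has dimension $|T_0|$ by the first identity, while the image of $\rho$ is $\Gamma_{T_0}(d)$, which has dimension $|T_0|$ by the second identity, so $\rho$ is a linear isomorphism. Consequently any degree-$\leq d$ polynomial vanishing on $T_0$ must already vanish on all of $S_0$, which is exactly $S_0\subseteq\cl_d(T_0)$; and since $T_0\subseteq T$ implies $\cl_d(T_0)\subseteq\cl_d(T)$ (any polynomial vanishing on $T$ vanishes on $T_0$), I obtain $|\cl_d(T)|\geq|S_0|=k$, matching the upper bound. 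The step I expect to require the most care is aligning the two applications of \cref{lem:A-to-F} on the nested pair $T_0\subseteq S_0$: one supplies the ambient dimension $|T_0|$ of $\Gamma_{S_0}(d)$, and the other upgrades surjectivity of the restriction from degree-$\leq d$ polynomials onto $\F_q^{T_0}$ into injectivity of $\rho$. Once that identification is made the conclusion is a one-line dimension count, with no further case analysis needed between the $|T_0|=m$ and $|T_0|<m$ regimes, since adding arbitrary points to $T_0$ only enlarges $\cl_d(T)$.
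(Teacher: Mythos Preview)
Your proposal is correct and follows essentially the same approach as the paper: define the enveloping set $S_0=(\sigma_A\circ\phi^{-1})(\mathrm{M}^n_q(k))$, apply \cref{lem:A-to-F} to both $S_0$ and $T_0$ to obtain $\h_{S_0}(d,\F_q)=\h_{T_0}(d,\F_q)$, and conclude $S_0\subseteq\cl_d(T_0)\subseteq\cl_d(T)$. Your write-up is arguably a bit cleaner in two places: you spell out the dimension-count reason why $\h_{S_0}(d)=\h_{T_0}(d)$ forces $S_0\subseteq\cl_d(T_0)$ via injectivity of the restriction $\Gamma_{S_0}(d)\to\Gamma_{T_0}(d)$, whereas the paper invokes this implication without comment; and you avoid the paper's closing case split on $m\gtrless N(n,d,q-1)$ by observing uniformly that $|T_0|\leq m$ and that enlarging $T_0$ to $T$ can only enlarge the closure.
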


\begin{proof}
Let $S=(\sigma_A\circ\phi^{-1})(\mathrm{M}_q^n(k))\supseteq T_0$.
Then $\phi(\sigma_A^{-1}(S))=\mathrm{M}_q^n(k)$.
By the definitions of $\mathrm{M}_q^n(k)$ and $\phi$, the set $\sigma_A^{-1}(S)=\phi^{-1}(\mathrm{M}_q^n(k))\subseteq \mathcal{M}$ is down-closed.
So by \cref{lem:A-to-F}, we have 
\[
\h_S(d)=|\phi(\sigma_A^{-1}(S))_{\leq d}|=|\mathrm{M}_q^n(k)_{\leq d}|\;.
\]
Similarly, by the definitions of $\mathrm{M}_q^n(k)_{\leq d}$ and $\phi$, the set $\sigma_A^{-1}(T_0)=\phi^{-1}(\mathrm{M}_q^n(k)_{\leq d})\subseteq \mathcal{M}$ is down-closed.
So by \cref{lem:A-to-F}, we have 
\[
\h_{T_0}(d)=|\phi(\sigma_A^{-1}(T_0))_{\leq d}|=|(\mathrm{M}_q^n(k)_{\leq d})_{\leq d}|=|\mathrm{M}_q^n(k)_{\leq d}|\;.
\]
It follows that $\h_S(d)=\h_{T_0}(d)$. So we have
$S\subseteq \cl_d(T_0)$. Therefore, 
\begin{equation}\label{eq:lower-bound-Cl}
|\cl_d(T)|\geq |\cl_d(T_0)|\geq |S|=|\mathrm{M}_q^n(k)|=k= \max_{0\leq k'\leq q^n: |\mathrm{M}^n_q(k')_{\leq d}|\leq m} k'\;,
\end{equation}
where the last equality holds by the choice of $k$.

Suppose $m>N(n,d,q-1)$. Then $k=q^n$. So $|T_0|=|\mathrm{M}_q^n(k)_{\leq d}|=N(n,d,q-1)<m$.
Then $|T|=m$ by definition. As $|\cl_d(T)|\geq k=q^n$, we must have $|\cl_d(T)|=q^n$. So $T$ attains the equality in \eqref{eq:closure-size}.

Now suppose $m\leq N(n,d,q-1)$.
Then $|T_0|=|\mathrm{M}^n_q(k)_{\leq d}|=m$ by \eqref{eq:max-equiv} and the maximality of $k$. So $T=T_0$ and $|T|=m$. Combining \eqref{eq:closure-size} and \eqref{eq:lower-bound-Cl} shows that \eqref{eq:closure-size} holds with equality.
%Choose $k_{\mathrm{max}}$ to be the maximum $k$ such that $0\leq k\leq q^n$ and $|\mathrm{M}^n_q(k)_{\leq d}|\leq m$, which exists since $\mathrm{M}^n_q(0)_{\leq d}=\emptyset$.
%Choose $U\subseteq \F_q^n$ of size $k_{\mathrm{max}}$
%such that $h_U(d, \F_q)=|\mathrm{M}^n_q(k_{\mathrm{max}})_{\leq d}|$, which exists by \cref{cor:boundsOnHilbert}.
%By \cref{lem:Small Span}, there exists $T\subseteq U$ such that $U\subseteq \cl_d(T)$ and $|T| = \h_U(d, \F_q)=|\mathrm{M}^n_q(k_{\mathrm{max}})_{\leq d}|$.
%Fix such $T$.
%
%Now assume $m\leq N(n,d,q-1)$.
%Then $|T|=|\mathrm{M}^n_q(k_{\mathrm{max}})_{\leq d}|=m$ by \eqref{eq:max-equiv}.
%As $U\subseteq \cl_d(T)$,
%\begin{equation}\label{eq:closure-k}
%|\cl_d(T)|\geq |U|=k_{\mathrm{max}}=\max_{0\leq k\leq q^n: |\mathrm{M}^n_q(k)_{\leq d}|\leq m} k.
%\end{equation}
%Note that \eqref{eq:closure-k} is actually an equality as we already know $|\cl_d(T)|\leq \max_{0\leq k\leq q^n: |\mathrm{M}^n_q(k)_{\leq d}|\leq m} k$. This completes the proof.
\end{proof}

%Our bound \cref{thm:tight-bound-cl} improves the bound \eqref{eq:Nie-Wang} proved by Nie and Wang \cite{nie2015hilbert} since it is tight. 
%\zeyu{One can do the comparison directly via induction and a recursive formula for $N(n,d,\delta)$, but I am too lazy to do it...} 

%\begin{remark}\label{rem:constructive}
%We remark that a set $T$ of size $m$ that attains the bound in \cref{thm:tight-bound-cl} can be found constructively. Let us briefly sketch how to do this: Choose the maximum $k$ such that $0\leq k\leq q^n$ and $|\mathrm{M}^n_q(k)_{\leq d}|\leq m$. Identify $\{0,1,\dots,q-1\}^n$ with $\F_q^n$ by mapping $\{0,1,\dots,q-1\}$ bijectively to $\F_q$. Then $\mathrm{M}^n_q(k), \mathrm{M}^n_q(k)_{\leq d} \subseteq \{0,1,\dots,q-1\}^n$ are identified with subsets $U$ and $T$ of $\F_q^n$ respectively. This set $T$ is our desired set. It can be shown that $|T|=h_U(d)=m$ and $U=\cl(T)$ by following the proof of \cref{cor:boundsOnHilbert} and that of \cref{thm:tight-bound-cl}.
%\end{remark}

\section{Low-Degree Dispersers}\label{sec:dispersers}
In this section, we will show how to use \cref{thm:hilbertbound} to conclude the existence of low-degree dispersers for various families of sources. In \cref{sec:dispSmall}, we will use \cref{cor:boundsOnHilbert} to show that for every family of at most $2^{O(k^d)}$ sources of min-entropy $k$, there exists a degree-$d$ disperser. In particular, this will imply dispersers for local sources and bounded-depth decision forest sources. In \cref{sec:dispCkts}, we will extend this result to large families of sources, including polynomial and circuit sources.

\subsection{Dispersers for Small Families of Sources}\label{sec:dispSmall}
In \cref{thm:pr-nonconst}, we use the bound of \cref{cor:boundsOnHilbert} on the values of Hilbert functions to bound the probability that a random polynomial takes a fixed value on an arbitrary subset of $\F_2^n$. 

\begin{theorem}\label{thm:pr-nonconst}
Let $n,d\geq1$, $S\subseteq \F_2^n$ be an arbitrary nonempty set, and $f:S\rightarrow \F_2$ be a function. Then,
\begin{align}\label{eq:disperser}
\Pr_{p\in_u \Poly{2}{n,d}}[p\vert_S \equiv f] \leq 2^{-\h_S(d,\F_2)} \leq 2^{- \binom{\lfloor \log_2 |S|\rfloor}{\leq d}}\;.
\end{align}
\end{theorem}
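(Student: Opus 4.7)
The plan is to exploit the $\F_2$-linearity of the restriction map. Consider $\pi\colon \Poly{2}{n,d}\to\F_2^S$ defined by $\pi(p)=p|_S$. By the very definition of the Hilbert function, the image of $\pi$ is $\Gamma_S(d)\subseteq \F_2^S$, a subspace of dimension $\h_S(d,\F_2)$. Since $\pi$ is a linear map between finite $\F_2$-vector spaces, every nonempty fibre has the same cardinality $|\ker\pi|$, so pushing forward the uniform distribution on $\Poly{2}{n,d}$ yields the uniform distribution on $\Gamma_S(d)$.

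Given this, I would split on whether the target function $f$ lies in the image. If $f\notin\Gamma_S(d)$, then $\Pr_{p}[p|_S\equiv f]=0$ and the bound holds trivially. If $f\in\Gamma_S(d)$, then
\[
\Pr_{p\in_u\Poly{2}{n,d}}[p|_S\equiv f]\;=\;\frac{|\ker\pi|}{|\Poly{2}{n,d}|}\;=\;\frac{1}{|\Gamma_S(d)|}\;=\;2^{-\h_S(d,\F_2)},
\]
which gives the first inequality of \eqref{eq:disperser} uniformly in $f$.

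For the second inequality, I would apply Corollary~\ref{cor:boundsOnHilbert} in the case $q=2$, which states that $\h_S(d,\F_2)\geq \binom{\lfloor\log_2 |S|\rfloor}{\leq d}$ for any $S\subseteq\F_2^n$, and then use the monotonicity of $x\mapsto 2^{-x}$. There is no genuine obstacle here: the essential content of the theorem is the Hilbert-function lower bound already established in Corollary~\ref{cor:boundsOnHilbert}, and the probabilistic statement follows from the single observation that restriction is an $\F_2$-linear map whose image has dimension $\h_S(d,\F_2)$.
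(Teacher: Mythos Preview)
Your proposal is correct and essentially identical to the paper's proof: the paper phrases the argument in terms of the coset decomposition $V_f$ of the kernel $V_0=\ker\pi$, while you phrase it via the fibres of the linear restriction map $\pi$, but these are the same observation. Both conclude by invoking Corollary~\ref{cor:boundsOnHilbert} for the second inequality.
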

\begin{proof}
Given a function $f$, let $V_f$ be the subset of $\Poly{2}{n,d}$ consisting of polynomials $p$ such that $p|_S\equiv f$. Note that $V_0$ is a subspace of $\Poly{2}{n,d}$, and for every $f:S\rightarrow \F_2$, $V_f$ is either empty or is a coset of the subspace $V_0$. Thus by definition, for every $f$, 
\[
\Pr_{p\in_u \Poly{2}{n,d}}[p\vert_S \equiv f]\leq \frac{|V_0|}{\left| \Poly{2}{n,d} \right|}=2^{-\h_S(d,\F_2)} \;. 
\]
The second inequality in \cref{eq:disperser} follows from \cref{cor:boundsOnHilbert}.
\end{proof}

We will now use \cref{thm:pr-nonconst} to prove the existence of low-degree dispersers for every small family of sources.
% \begin{corollary}\label{cor:disp}
% Let $n,d,k\geq1$, and $\mathcal{X}$ be a family of distributions of min-entropy $\geq k$ over $\{0,1\}^n$. Let $\mathcal{Y}$ be a family of distributions each of which is $\eps$-close to a convex combination of distributions from~$\mathcal{X}$.
% Then a uniformly random polynomial $p\in \Poly{2}{n,d}$ is a disperser for $\mathcal{Y}$ with probability at least 
% \[
% 1-|\mathcal{Y}|\cdot \left(\eps+2^{1-{\binom{k}{\leq d}}}\right) \;. 
% \]
% \end{corollary}
\begin{corollary}\label{cor:disp}
Let $n,d,k\geq1$, and $\mathcal{X}$ be a family of distributions of min-entropy $\geq k$ over $\{0,1\}^n$. 
%Let $\mathcal{Y}$ be a family of distributions each of which is $\eps$-close to a convex combination of distributions from~$\mathcal{X}$.
Then a uniformly random polynomial $p\in \Poly{2}{n,d}$ is a disperser for $\mathcal{X}$ with probability at least 
\[
1-|\mathcal{X}|\cdot 2^{1-{\binom{k}{\leq d}}} \;. 
\]
\end{corollary}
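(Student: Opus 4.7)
The plan is a straightforward two-level union bound built on top of \cref{thm:pr-nonconst}. First I would reduce the ``disperser'' property for a single fixed source to a statement about restrictions of the random polynomial. Recall that $p$ fails to be a disperser for $\mathbf{X}$ precisely when $p(\mathbf{X})$ has support of size one, i.e.\ when the restriction $p|_S$ is a constant function, where $S = \mathrm{support}(\mathbf{X}) \subseteq \{0,1\}^n$. Since $\ent(\mathbf{X}) \geq k$ implies $|S| \geq 2^k$, we obtain $\lfloor \log_2 |S| \rfloor \geq k$.

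Next I would invoke \cref{thm:pr-nonconst} with $f \equiv 0$ and $f \equiv 1$ separately. Each of these two events has probability at most $2^{-\binom{\lfloor \log_2 |S|\rfloor}{\leq d}} \leq 2^{-\binom{k}{\leq d}}$, where the second inequality uses monotonicity in the top index together with $|S| \geq 2^k$. A union bound over the two choices of constant then yields
\[
\Pr_{p \in_u \Poly{2}{n,d}}\bigl[p|_S \text{ is constant}\bigr] \;\leq\; 2 \cdot 2^{-\binom{k}{\leq d}} \;=\; 2^{1 - \binom{k}{\leq d}}.
\]

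Finally I would take a union bound over all sources $\mathbf{X} \in \mathcal{X}$. Since there are at most $|\mathcal{X}|$ sources and the failure probability for each is at most $2^{1-\binom{k}{\leq d}}$, the probability that $p$ fails to be a disperser for some source in $\mathcal{X}$ is at most $|\mathcal{X}| \cdot 2^{1-\binom{k}{\leq d}}$, giving the stated success probability of at least $1 - |\mathcal{X}| \cdot 2^{1-\binom{k}{\leq d}}$.

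There is no real obstacle here: the substantive content was already absorbed into the Hilbert function bound of \cref{cor:boundsOnHilbert} and its consequence \cref{thm:pr-nonconst}. The corollary is essentially a packaging step, converting ``polynomial is nonconstant on a fixed set'' into ``polynomial is a disperser for a family,'' via union bounds over the two constants and over the family.
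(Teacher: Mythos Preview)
Your proposal is correct and essentially identical to the paper's own proof: both observe that min-entropy $\geq k$ forces $|\mathrm{support}(\mathbf{X})|\geq 2^k$, apply \cref{thm:pr-nonconst} to the two constant functions to get the $2^{1-\binom{k}{\leq d}}$ bound per source, and then union-bound over $\mathcal{X}$. The paper is simply terser (it collapses the two-constants union bound into one line), but there is no substantive difference.
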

\begin{proof}
%Let $\mathbf{Y}$ be a distribution from $\mathcal{Y}$. Let $\mathbf{Z}=\sum_i \alpha_i \mathbf{X}_i$ be a convex combination of distributions from $\mathcal{X}$ such that $\Delta(\mathbf{Y},\mathbf{Z})\leq \eps$. 
Let $\mathbf{X}$ be a distribution from $\mathcal{X}$.
Since $\ent(\mathbf{X})\geq k$, we have that $|\mathrm{support}(\mathbf{X})|\geq 2^k$. 
%we have that $|\mathrm{support}(\mathbf{Z})|\geq 2^k$. 
By \cref{thm:pr-nonconst}, 
\[
\Pr_{p\in_u \Poly{2}{n,d}}[p\vert_{\mathrm{support}(\mathbf{X})} \text{ is constant}] \leq 2^{1-\binom{k}{\leq d}}\;.
\]
%From $\Delta(\mathbf{Y},\mathbf{Z})\leq \eps$, we have that \zeyu{Why do we have the following inequality? Maybe directly bound the size of the support of $\mathbf{Y}$?} \satya{Isn't this true from defintion or something. Consider the event where a random polynomial is constant or something like that?}
%\[
%\Pr_{p\in_u \Poly{2}{n,d}}[p\vert_{\mathrm{support}(\mathbf{Y})} \text{ is constant}] \leq \eps + 2^{1-\binom{k}{\leq d}}\;.
%\]
The corollary follows by applying the union bound over all $|\mathcal{X}|$ sources in $\mathcal{X}$.
\end{proof}

% \begin{corollary}\label{cor:disp}
% Let $\mathcal{X}$ be a family of distributions of min-entropy $\geq k$ over $\{0,1\}^n$. Then a uniformly random polynomial $p\in \Poly{2}{n,d}$ is a disperser for $\mathcal{X}$ with probability at least 
% $$
% 1-|\mathcal{X}|\cdot 2^{1-{\binom{k}{\leq d}}} \;. 
% $$
% \end{corollary}
% \begin{proof}
% Every fixed distribution $\mathbf{X}\in \mathcal{X}$ of min-entropy $\geq k$ has $|\mathrm{support}(\mathbf{X})|\geq 2^k$, and thus by \cref{thm:pr-nonconst}
% \[
% \Pr_{p\in_u \Poly{2}{n,d}}[p\vert_S \text{ is constant}] \leq 2^{1-\binom{k}{\leq d}}\;.
% \]
% The claim now follows by applying the union bound. 
% \end{proof}

We will demonstrate two immediate applications of \cref{cor:disp} for the families of local and decision forests sources.

\begin{corollary}[Low-degree dispersers for local sources]\label{cor:localdisperser}
Let $1\leq \ell\leq d\leq n$ be integers. There exists $p\in \Poly{2}{n,d}$ that is a disperser 
\begin{itemize}
    \item for the family of $\ell$-local sources on $\{0,1\}^n$ with min-entropy $k> d(2^\ell n + 2\ell n \log n)^{1/d}$. 
    \item for the family of depth-$\ell$ decision forest sources on $\{0,1\}^n$ with min-entropy $k> d((\ell+\log n) 2^{\ell+1} n)^{1/d}$. 
\end{itemize}
\end{corollary}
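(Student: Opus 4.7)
The plan is a direct probabilistic-method argument that combines \cref{cor:disp} with the counting bounds from \cref{prop:countingSources}. The main observation is that for both families in the statement, the cardinality $|\mathcal{X}|$ is at most singly exponential in $\poly(n)$, while the failure probability in \cref{cor:disp} is doubly exponential in $k$, so for moderately large~$k$ a random degree-$d$ polynomial is a disperser with positive probability.

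First I would apply \cref{cor:disp} to the two families $\mathcal{X}$ of interest, concluding that a uniformly random $p\in\Poly{2}{n,d}$ fails to be a disperser with probability at most $|\mathcal{X}|\cdot 2^{1-\binom{k}{\leq d}}$. For this probability to be strictly less than $1$ it suffices to establish the inequality $\binom{k}{\leq d} > 1 + \log|\mathcal{X}|$, which by the probabilistic method yields the desired polynomial~$p$.

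Next I would bound $\log|\mathcal{X}|$ using \cref{prop:countingSources}: we have $\log|\mathcal{X}|\leq 2^\ell n + 2\ell n\log n$ for the family of $\ell$-local sources, and $\log|\mathcal{X}|\leq (\ell+\log n)2^{\ell+1}n$ for the family of depth-$\ell$ decision forest sources. To lower bound $\binom{k}{\leq d}$ in terms of $k$, I would use the elementary estimate $\binom{k}{\leq d}\geq \binom{k}{d}\geq (k/d)^d$, so that the sufficient condition becomes $(k/d)^d > 1 + \log|\mathcal{X}|$, i.e., $k > d(1+\log|\mathcal{X}|)^{1/d}$. Substituting the two counting bounds (and absorbing the additive $1$ into the leading term since $k\geq 1$) yields the two min-entropy thresholds stated in the corollary.

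There is no real obstacle here: the content is entirely supplied by \cref{cor:disp} and \cref{prop:countingSources}, and the rest is the standard binomial estimate $\binom{k}{\leq d}\geq (k/d)^d$ together with bookkeeping to make sure the strict inequalities line up with the thresholds $k>d(2^\ell n+2\ell n\log n)^{1/d}$ and $k>d((\ell+\log n)2^{\ell+1}n)^{1/d}$ stated in the claim. The only mild care needed is to confirm that the hypothesis $\ell\leq d\leq n$ is enough to guarantee $d\leq k$ so that the bound $\binom{k}{d}\geq (k/d)^d$ is applicable; this holds because the stated thresholds already force $k$ to exceed $d$.
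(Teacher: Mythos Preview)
Your proposal is correct and follows essentially the same approach as the paper: apply \cref{cor:disp}, plug in the counting bounds from \cref{prop:countingSources}, and lower bound $\binom{k}{\leq d}$ via $(k/d)^d$. The one place where the paper is slightly crisper is the handling of the stray ``$+1$'': rather than absorbing it into the threshold, the paper uses $\binom{k}{\leq d}\geq \binom{k}{d}+1\geq (k/d)^d+1$, so that $k>d(\log|\mathcal{X}|)^{1/d}$ already gives $\binom{k}{\leq d}>1+\log|\mathcal{X}|$ exactly, with no asymptotic fudging needed.
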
 
\begin{proof}
Let $\mathcal{X}_\ell$ denote the family of $\ell$-local sources on $\{0,1\}^n$. By \cref{prop:countingSources},  
\[
|\mathcal{X}_\ell|\leq 2^{2^\ell n + 2\ell n\log n}\;.
\]
Thus, by \cref{cor:disp}, as long as $\binom{k}{\leq d}>2^\ell n+2\ell n\log n+1$, there exists a degree-$d$ disperser for $\ell$-local sources of min-entropy $k$. For $k> d(2^\ell n + 2\ell n \log n)^{1/d}$, we have that
\[
\binom{k}{\leq d} \geq \binom{k}{d}+1 \geq (k/d)^d+1 > 2^\ell n+2\ell n\log n+1 \;.
\]
For depth-$\ell$ decision forest sources, by \cref{prop:countingSources}, the number of such sources on $\{0,1\}^n$ is at most 
\[
2^{(\ell+\log n) 2^{\ell+1} n}\;.
\]
By \cref{cor:disp}, as long as $\binom{k}{\leq d}>(\ell+\log n) 2^{\ell+1} n+1$, there exists a degree-$d$ disperser for depth-$\ell$ decision forest sources of min-entropy $k$. For $k> d((\ell+\log n) 2^{\ell+1} n)^{1/d}$, we have that
\[
\binom{k}{\leq d} \geq \binom{k}{d}+1 \geq (k/d)^d+1 > (\ell+\log n) 2^{\ell+1} n+1 \;.\qedhere
\]
\end{proof}

The recent result of \cite{alrabiah2022low} uses further properties of local sources to prove the existence of low-degree dispersers for local sources with min-entropy $k\geq c \ell^3 d\cdot (n\log n)^{1/d}$ for a constant $c>0$. Noting that every depth-$\ell$ decision forest source is also a $(2^\ell-1)$-local source, the disperser of \cite{alrabiah2022low} for local sources implies a result similar to the above.

%The result of \cite{alrabiah2022low} has a better dependence on~$\ell$ In contrast, our result is obtained by a simple counting of the sources.

% \begin{corollary}[Low-degree dispersers for decision forests]\label{cor:forestdisperser}
% Let $1\leq \ell\leq d\leq n$ be integers. There exists $p\in \Poly{2}{n,d}$ that is a disperser for the family of depth-$\ell$ decision forest sources on $\{0,1\}^n$ with min-entropy $k> d((\ell+\log n) 2^{\ell+1} n)^{1/d}$. 
% \end{corollary} 
% \begin{proof}

% \end{proof}

\subsection{Dispersers for Polynomial and Circuit Sources}\label{sec:dispCkts}
In this section, we will extend the results of the previous section to prove the existence of low-degree dispersers for powerful families of sources including polynomial-size circuits and low-degree polynomial sources. Unlike the previous examples such as local sources, the sources considered here may non-trivially depend on an arbitrary number of inputs. For example, even a degree-$1$ (i.e. affine) source defined by an affine map $f:\F_2^m \rightarrow \F_2^n$ can depend on an arbitrary number $m\gg n$ of input bits. We get around this by restricting the map $f:\{0,1\}^m\to\{0,1\}^n$ defining the source to a low-dimensional affine subspace. Specifically, we will use the input-reduction procedure from~\cite{chattopadhyay2024extractors}, where it was used to prove that random (not necessarily bounded degree) maps extract from low-degree sources.

\begin{lemma}[{\cite[Lemma 4.5]{chattopadhyay2024extractors}}]\label{lem:inpRedForDisp}
Let $m,n,k\in\N$, $k>1$, and $f\colon\F_2^m\to\F_2^n$ be a function. If $\ent(f(\mathbf{U}_m))\geq k$, then there exists an affine map $L\colon\F_2^{11k}\to\F_2^m$ such that 
\[
\ent\left(f\left(L(\mathbf{U}_{11k})\right)\right)\geq k-1\;.
\]
\end{lemma}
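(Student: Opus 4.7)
The plan is to apply the probabilistic method: sample a uniformly random affine map $L\colon\F_2^{11k} \to \F_2^m$ by writing $L(u) = Au + b$ with $A \in \F_2^{m \times 11k}$ and $b \in \F_2^m$ drawn uniformly and independently, and show that with positive probability $f(L(\mathbf{U}_{11k}))$ has min-entropy at least $k - 1$. The structural property that drives everything is \emph{pairwise independence}: for any two distinct $u, u' \in \F_2^{11k}$, the pair $(L(u), L(u'))$ is uniform on $\F_2^{m} \times \F_2^{m}$. This is immediate from writing $L(u) = L(u') + A(u-u')$ and observing that $A(u-u')$ is uniform in $\F_2^m$ (since $u - u' \neq 0$) and independent of $L(u')$.

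For each $y \in \F_2^n$, set $p_y := \Pr[f(\mathbf{U}_m) = y] \leq 2^{-k}$ and introduce the random variable $q_y(L) := \Pr_{u \sim \mathbf{U}_{11k}}[f(L(u)) = y]$. Marginal uniformity of $L(u)$ gives $\E_L[q_y] = p_y$, and pairwise independence makes the off-diagonal covariances in the variance expansion vanish, yielding $\mathrm{Var}(q_y) \leq p_y/2^{11k}$. We aim to find a single $L$ for which $q_y < 2^{-(k-1)}$ for \emph{every} $y \in \F_2^n$ simultaneously, since that is equivalent to $\ent(f(L(\mathbf{U}_{11k}))) \geq k-1$.

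Next I would apply Chebyshev's inequality to each $y$ with $p_y > 0$: since $2^{-(k-1)} - p_y \geq 2^{-k}$, Chebyshev yields $\Pr[q_y \geq 2^{-(k-1)}] \leq p_y \cdot 2^{-9k}$. The crucial move is then to \emph{sum}---rather than union bound---these estimates over $y$: because $\sum_y p_y = 1$, the expected number of ``bad'' $y$'s is at most $2^{-9k} < 1$, so by a first-moment argument there exists an $L$ admitting no bad $y$, proving the lemma.

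The main obstacle is the potential $2^n$ factor that a naive union bound over $y$ would incur; the lemma places no bound on $n$, so such a bound would not close. The workaround is that the Chebyshev estimate above already carries a factor of $p_y$, so averaging against the $p_y$ (which sum to $1$) completely eliminates the $n$-dependence. This is exactly where pairwise independence is indispensable: it produces the factor $2^{-11k}$ in the variance, which after Chebyshev provides enough slack to absorb the factor-of-$2$ gap between the min-entropy threshold $2^{-k}$ and the target $2^{-(k-1)}$.
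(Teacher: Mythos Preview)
The paper does not contain its own proof of this lemma: it is quoted verbatim as \cite[Lemma~4.5]{chattopadhyay2024extractors} and used as a black box in \cref{thm:dispPolys,thm:dispCkts}. So there is no in-paper argument to compare against.

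Your proposal is correct on its own terms. The pairwise-independence computation is right: for distinct $u,u'$ the pair $(Au+b,\,Au'+b)$ is uniform on $\F_2^m\times\F_2^m$, giving $\Var_L(q_y)\le p_y\cdot 2^{-11k}$. The Chebyshev step is also right: since $p_y\le 2^{-k}$ we have $2^{-(k-1)}-p_y\ge 2^{-k}$, hence $\Pr_L[q_y\ge 2^{-(k-1)}]\le p_y\cdot 2^{2k-11k}=p_y\cdot 2^{-9k}$. Summing over $y$ (and noting that $y$ with $p_y=0$ can never be bad, since then $f^{-1}(y)=\emptyset$ and $q_y\equiv 0$) gives an expected number of bad $y$'s at most $2^{-9k}<1$, so some $L$ has none, i.e., $\max_y q_y(L)<2^{-(k-1)}$ and $\ent(f(L(\mathbf{U}_{11k})))\ge k-1$. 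Your observation that the $p_y$ weight in the tail bound is what sidesteps an $n$-dependent union bound is exactly the point. This second-moment argument is the standard route for such input-reduction lemmas and is almost certainly the proof in the cited source as well; the constant $11$ is comfortably more than needed for your bound to close.
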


Equipped with \cref{lem:inpRedForDisp}, we are ready to construct dispersers for low-degree sources. 

\begin{theorem}[Low-degree disperser for lower-degree polynomial sources]\label{thm:dispPolys}
Let $1\leq \ell< d\leq n$ be integers. There exists $p\in \Poly{2}{n,d}$ that is a disperser for the family of degree-$\ell$ sources on $\{0,1\}^n$ with min-entropy $k\geq (12^\ell\cdot d^d\cdot n)^{\frac{1}{d-\ell}}+1$. 

In particular, for every $\ell\in\N$, there is a degree-$(\ell+2)$ disperser for degree-$\ell$ sources on $\{0,1\}^n$ with min-entropy $\Omega\left(\sqrt{n}\right)$. 
\end{theorem}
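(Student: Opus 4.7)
The plan is to combine the input-reduction lemma \cref{lem:inpRedForDisp} with the small-family disperser bound of \cref{cor:disp}. The key point is that while an arbitrary degree-$\ell$ source $f\colon\F_2^m\to\F_2^n$ may depend on an unbounded number of input bits $m$, one can restrict $f$ to a low-dimensional affine subspace of its domain without losing more than one bit of min-entropy. This replaces the (infinite) family of all degree-$\ell$ sources with a finite subfamily whose size can be counted.

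First, given any degree-$\ell$ source $\mathbf{X}=f(\mathbf{U}_m)$ of min-entropy at least $k$, I apply \cref{lem:inpRedForDisp} to obtain an affine map $L\colon\F_2^{11k}\to\F_2^m$ such that $\mathbf{X}':=f(L(\mathbf{U}_{11k}))$ has min-entropy at least $k-1$. Because the composition of an affine map with a degree-$\ell$ polynomial map is again a degree-$\ell$ polynomial map, the function $f\circ L\colon\F_2^{11k}\to\F_2^n$ belongs to a finite family of degree-$\ell$ maps. Moreover $\mathrm{support}(\mathbf{X}')\subseteq\mathrm{support}(\mathbf{X})$, so any polynomial $p$ that is nonconstant on $\mathrm{support}(\mathbf{X}')$ is automatically nonconstant on $\mathrm{support}(\mathbf{X})$. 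Hence it suffices to construct a disperser for the family $\mathcal{Y}$ of degree-$\ell$ sources $g\colon\F_2^{11k}\to\F_2^n$ of min-entropy at least $k-1$.

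Next, I bound $|\mathcal{Y}|$ using \cref{prop:countPolys}: the number of degree-$\ell$ polynomial maps from $\F_2^{11k}$ to $\F_2^n$ is at most $2^{n\binom{11k}{\leq\ell}}$. Applying \cref{cor:disp} to $\mathcal{Y}$ at entropy $k-1$, a uniformly random polynomial in $\Poly{2}{n,d}$ is a disperser for $\mathcal{Y}$ with probability at least
\[
1 \;-\; 2^{n\binom{11k}{\leq\ell}}\cdot 2^{\,1-\binom{k-1}{\leq d}},
\]
which is strictly positive as soon as $\binom{k-1}{\leq d} > n\binom{11k}{\leq\ell}+1$.

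The remaining step is a routine arithmetic computation showing that the hypothesis $k\geq(12^\ell d^d n)^{1/(d-\ell)}+1$ implies the above inequality. Using the lower estimate $\binom{k-1}{\leq d}\geq\binom{k-1}{d}\geq((k-1)/d)^d$ and an upper estimate of the form $\binom{11k}{\leq\ell}\leq (12k)^\ell$ (or a suitable variant thereof), the required inequality reduces to $(k-1)^{d-\ell}\geq 12^\ell d^d n$, which is precisely the hypothesis on $k$. The ``in particular'' statement then follows by specializing to $d=\ell+2$, which yields $k=\Omega(\sqrt{n})$ for fixed $\ell$. The only nontrivial obstacle is bookkeeping: matching the constant $12^\ell$ in the hypothesis requires a careful choice of the binomial estimates so that the slack introduced by the input-reduction expansion factor $11$ is absorbed.
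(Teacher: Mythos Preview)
Your proposal is correct and follows essentially the same approach as the paper: apply \cref{lem:inpRedForDisp} to reduce to degree-$\ell$ sources on $11k$ inputs, count these via \cref{prop:countPolys}, and invoke \cref{cor:disp} at entropy $k-1$, reducing the existence claim to the numerical inequality $\binom{k-1}{\leq d}-1 > n\binom{11k}{\leq\ell}$. The paper's arithmetic uses the specific chain $\binom{k-1}{\leq d}-1>\binom{k-1}{d}\geq((k-1)/d)^d=(k-1)^\ell(k-1)^{d-\ell}/d^d\geq n(12(k-1))^\ell\geq n((11k)^\ell+1)\geq n\binom{11k}{\leq\ell}$, which is exactly the ``suitable variant'' of your binomial estimates; your identification of the core inequality $(k-1)^{d-\ell}\geq 12^\ell d^d n$ is the right one.
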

\begin{proof}
Suppose $\mathbf{X}=f(\mathbf{U}_m)$ is a source of min-entropy $\ent(\mathbf{X})\geq k$ defined by a degree-$\ell$ map $f:\F_2^m \rightarrow \F_2^n$.

We first reduce the number of inputs of the polynomial~$f$. Specifically, we construct another degree-$\ell$ polynomial~$g$ with $r=11k$ inputs such that every disperser for~$g$ is also a disperser for~$f$. We will then conclude the proof by showing that there exists a degree-$d$ disperser for the class of sources defined by polynomials with $r$ inputs.

Let $L\colon\F_2^r\to\F_2^m$ be the affine map from \cref{lem:inpRedForDisp}. 
Consider the new source $\mathbf{Y}=f(L(U_r))$ over $\{0,1\}^n$. By \cref{lem:inpRedForDisp}, $\ent(\mathbf{Y})\geq k-1$, and the map $f\circ L$ defining $\mathbf{Y}$ is also a degree-$\ell$ polynomial with only $r$ input variables. 
Since $\mathrm{Support}(\mathbf{Y})\subseteq\mathrm{Support}(\mathbf{X})$, a disperser for $\mathbf{Y}$ is also a disperser for $\mathbf{X}$.

Therefore, it is now sufficient to prove that there exists a degree-$d$ polynomial $p$ that is a disperser for all degree-$\ell$ sources $\mathbf{Y}$ with $\ent(\mathbf{Y})\geq k-1$ defined by polynomials with $r$~inputs. By \cref{prop:countPolys}, the number of such sources is bounded from above by $2^{n\cdot\binom{r}{\leq \ell}}$. Now, \cref{cor:disp} guarantees the existence of a degree-$d$ disperser as long as
\[
{\binom{k-1}{\leq d}} -1 > n\cdot {\binom{r}{\leq \ell}} = n\cdot {\binom{11k}{\leq \ell}}\;.
\]
For $k\geq (12^\ell\cdot d^d\cdot n)^{\frac{1}{d-\ell}}+1$, we have that
\begin{align*}
    \binom{k-1}{\leq d} -1
    &> \binom{k-1}{d}\\
    &\geq \left((k-1)/d\right)^d\\
    &\geq (k-1)^\ell \cdot (k-1)^{d-\ell}/d^d\\
    &\geq (k-1)^\ell \cdot (12^\ell\cdot d^d\cdot n) /d^d\\
    &=n\left(12(k-1)\right)^\ell\\
    &\geq n\left((11k)^\ell+1\right)\\
    &\geq n \cdot \binom{11k}{\leq \ell} \;,
\end{align*}
which finishes the proof of the theorem.
\end{proof}

\begin{theorem}[Low-degree disperser for circuit sources]\label{thm:dispCkts}
Let $\ell\geq1$ and $n\geq d\geq 2\ell+2$ be integers. There exists $p\in \Poly{2}{n,d}$ that is a disperser for the family of $n^\ell$-size circuit sources on $\{0,1\}^n$ with min-entropy $k\geq (30^2\cdot d^d\cdot n^{2\ell})^{\frac{1}{d-2}}+1$.
\end{theorem}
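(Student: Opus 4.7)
The plan is to follow the proof template of \cref{thm:dispPolys}: apply the input-reduction lemma of \cite{chattopadhyay2024extractors} (\cref{lem:inpRedForDisp}) to trade unbounded input length for the loss of one bit of min-entropy, bound the number of resulting reduced sources via \cref{prop:countPolys}, and conclude using \cref{cor:disp}. Concretely, starting from any $n^\ell$-size circuit source $\mathbf{X}=C(\mathbf{U}_m)$ with $\ent(\mathbf{X})\ge k$, \cref{lem:inpRedForDisp} supplies an affine map $L\colon\F_2^{11k}\to\F_2^m$ so that $\mathbf{Y}:=C(L(\mathbf{U}_{11k}))$ has min-entropy at least $k-1$. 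Since $\mathrm{support}(\mathbf{Y})\subseteq\mathrm{support}(\mathbf{X})$, any disperser for~$\mathbf{Y}$ is also a disperser for~$\mathbf{X}$, so it suffices to handle the family of such reduced sources.

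The crux is to realize $C\circ L$ as an $\ACp$ circuit $C'$ with $11k$ inputs and size only polynomial in $n^\ell$ and $k$, so that \cref{prop:countPolys} yields a useful bound. The plan is to process the gates of~$C$ one at a time. For an XOR gate of~$C$, substituting each input bit $x_j$ by the affine function $L_j(y)$ keeps the gate as a single XOR of $y$-bits and other wires (an XOR of affine functions is itself affine), so no new gates are needed. For an AND or OR gate $g$ whose input bits become $L_{j_1}(y),\dots,L_{j_s}(y)$, the key algebraic observation is that $\bigwedge_{t}L_{j_t}(y)$ is the indicator of an affine subspace of $\F_2^{11k}$: it is either identically constant, or can be cut out by a minimal system of at most $11k$ linearly independent affine equations (the dimension bound of the ambient space). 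Hence each AND/OR gate contributes at most $11k$ auxiliary XOR gates in the reduced circuit (and NOT gates are handled identically), yielding $s':=|C'|\le n^\ell+11k\cdot n^\ell\le 12\,n^\ell k$.

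With $s'$ bounded this way and $m=11k$, \cref{prop:countPolys} bounds the number of reduced sources by $2^{4s'(s'+11k)}=2^{O(n^{2\ell}k^2)}$. Applying \cref{cor:disp} shows that a uniformly random $p\in\Poly{2}{n,d}$ is a disperser for the family of reduced sources (and hence for the original $n^\ell$-size circuit sources) with positive probability provided $\binom{k-1}{\le d}$ exceeds this $O(n^{2\ell}k^2)$ count by at least~$1$. The hypothesis $k\ge(30^2 d^d n^{2\ell})^{1/(d-2)}+1$ rearranges to $(k-1)^{d-2}\ge 900\, d^d n^{2\ell}$, and hence
\[
\binom{k-1}{\le d}\ge\binom{k-1}{d}\ge\Big(\tfrac{k-1}{d}\Big)^d=\frac{(k-1)^{d-2}}{d^d}\cdot(k-1)^2\ge 900\,n^{2\ell}(k-1)^2,
\]
in which the spare factor of $(k-1)^2$ absorbs the $k^2$ from the counting bound; the constant $30^2=900$ is chosen precisely so that the final numerics close out (in the regime dictated by the hypothesis one always has $k\gg 1$, so $(k-1)^2$ and $k^2$ differ only by a harmless constant).

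The main technical obstacle is the circuit-realization step. Because an $\ACp$ gate may have unbounded fan-in, naively inserting one XOR gate per input-bit wire of~$C$ yields a circuit whose size depends on the (potentially huge) fan-in of~$C$, rather than on $n^\ell$ and $k$. The affine-subspace reformulation of AND/OR over affine inputs is what caps the overhead at $O(k)$ extra gates per gate of~$C$; this is the one ingredient not already present in the argument for \cref{thm:dispPolys}, and is tailored to the fact that $\ACp$ gates (unlike low-degree polynomials) are not closed under composition with affine maps in a size-preserving way.
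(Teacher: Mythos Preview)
Your proposal is correct and follows essentially the same approach as the paper: input reduction via \cref{lem:inpRedForDisp}, re-realizing $C\circ L$ as an $\ACp$ circuit of size $O(kn^\ell)$ on $11k$ inputs by exploiting that an AND/OR of affine forms is (the negation of) an affine-subspace indicator describable by at most $11k$ independent affine constraints, then counting via \cref{prop:countPolys} and concluding with \cref{cor:disp}. The paper phrases the AND/OR step as ``take at most $r$ linearly independent forms among the $L_{i_j}$'' rather than your affine-subspace language, and uses the size bound $rn^\ell$ where you use $12kn^\ell$, but these are the same argument and the final numerics close in both versions once one checks that the hypothesis forces $k$ large enough.
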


\begin{proof}
    Let $\mathbf{X}=C(\mathbf{U}_m)$ be a source of min-entropy $\ent(\mathbf{X})\geq k$ defined by an $\ACp$ circuit, $C:\F_2^m \rightarrow \F_2^n$ of size $n^\ell$. 

Similarly to the proof of \cref{thm:dispPolys}, we first reduce the number of inputs of the circuit~$C$. Let $L\colon\F_2^r\to\F_2^m$ be the affine map from \cref{lem:inpRedForDisp} for $r=11k$. Consider the circuit source $\mathbf{Y}=C(L(\mathbf{U}_r))$.

By \cref{lem:inpRedForDisp}, $\ent(\mathbf{Y})\geq k-1$, and since $\mathrm{Support}(\mathbf{Y})\subseteq\mathrm{Support}(\mathbf{X})$, a disperser for $\mathbf{Y}$ is also a disperser for $\mathbf{X}$.

We will now show that there is an $\ACp$ circuit $C'\colon\F_2^r\to\F_2^n$ of $\size(C')\leq rn^\ell$ computing $C'(y)=C(L(y))$ for every $y=(y_1,\ldots,y_r)\in\F_2^r$. The main difference between the circuits $C$ and $C'$ is that we need to replace each input $x_i$ for $i\in[m]$ of the circuit $C$ by an affine form $L_i(y_1,\ldots,y_r)$ of the inputs of~$C'$ given by~$L$. The naive way of implementing this modification would require $m\gg n^\ell$ additional gates which we cannot afford. Instead, we will simulate this modification by updating the input wires of each gate in~$C'$ accordingly (and introducing at most $r$ additional gates for each gate in~$C$). 

Every $\XOR$ gate of~$C$ fed by inputs $x_{i_1},\ldots,x_{i_t}$ will now be replaced by an $\XOR$ gate in~$C'$ computing $L_{i_1}(y_1,\ldots,y_r)\oplus\ldots\oplus L_{i_t}(y_1,\ldots,y_r)$. Note that this operation only modifies the wires in the circuit and does not introduce new gates. Every $\AND$ gate of~$C$ fed by inputs $x_{i_1},\ldots,x_{i_t}$ will now be replaced by a gate computing $\AND$ of at most $r$ linearly independent forms from $L_{i_1},\ldots,L_{i_t}$. This operation introduces at most $r$ additional $\XOR$ gates for each gate of the original circuit~$C$. We handle $\OR$ gates in a similar fashion. Finally, the wires between non-input gates of the circuit~$C$ stay unchanged in the circuit~$C'$. We constructed a circuit of size $rn^\ell$ with $r$ inputs that defines the source~$\mathbf{Y}$.

It remains to show the existence of a degree-$d$ polynomial $p$ that is a disperser for all sources $\mathbf{Y}$ with $\ent(\mathbf{Y})\geq k-1$ defined by $rn^\ell$-size circuits with $r$~inputs. By \cref{prop:countPolys}, the number of such sources is bounded from above by 
\[
2^{4rn^\ell(rn^\ell+r)}
=
2^{4r^2n^\ell(n^\ell+1)}
= 
2^{4(11k)^2 n^\ell(n^\ell+1)}
\leq
2^{(27k)^2 n^{2\ell}}
\;.
\]
%where we used $r=n+k+1\leq3n$.
Now, \cref{cor:disp} guarantees the existence of a degree-$d$ disperser as long as
\[
{\binom{k-1}{\leq d}} -1 > (27k)^2 n^{2\ell}\;.
\]
For $k\geq (30^2\cdot d^d\cdot n^{2\ell})^{\frac{1}{d-2}}+1$, we have that
\begin{align*}
\binom{k-1}{\leq d} -1 
&> \binom{k-1}{d} \\
&\geq \left((k-1)/d\right)^d\\
&\geq (k-1)^2\cdot (k-1)^{d-2} / d^d\\
&\geq (k-1)^2 \cdot (30^2\cdot d^d\cdot n^{2\ell}) / d^d\\
&\geq (30(k-1))^2 n^{2\ell}\\
&\geq (27k)^2 n^{2\ell} \;,
\end{align*}
which finishes the proof.
\end{proof}

\cref{thm:dispPolys,thm:dispCkts} construct low-degree dispersers for sources generated by constant-degree polynomials and polynomial-size $\ACp$ circuits. These two classes of sources are incomparable. Indeed, $\ACp$ computes $\AND(x_1,\ldots,x_m)$ which is not a constant-degree polynomial, while constant-degree polynomials compute polynomials in~$m$ inputs which do not admit circuits of size polynomial in~$n$. We remark that the techniques of \cref{thm:dispPolys,thm:dispCkts} can be used to conclude the same result for a class of sources that generalizes both $\ACp$ and constant-degree polynomials. This is the class of polynomial-size circuits which extends $\ACp$ with gates computing arbitrary polynomials in~$m$ inputs of a fixed constant degree. For ease of exposition, we present only the results for more natural sources in \cref{thm:dispPolys,thm:dispCkts}.

\section{Random Low-Degree Polynomials Extract from Fixed Sources}
In this section, we use our bounds on the values of Hilbert functions to prove the existence of a low-degree extractor for a fixed high min-entropy source. Specifically, in \cref{thm:main-extractor} we show that for every source~$\mathbf{X}$ of high min-entropy, a random low-degree polynomial~$p$ has bias $\leq\eps$, i.e., $\Pr_{x\in_u X}[f(x)=1]\in 1/2\pm \eps$ with high probability. One special case of interest is the case of $k$-\emph{flat} sources $\mathbf{X}$ which are uniform distributions over sets of size $2^k$. In \cref{sec:extractors}, we will use \cref{thm:main-extractor} to prove the existence of low-degree extractors for various expressive families of sources.

We start this section by using our bounds on the degree-$d$ closure of sets in order to lower-bound the probability that a random somewhat large subset $T$ of a set $S$ has ``full Hilbert dimension'', i.e., $\h_T(d, \F_2)=|T|$. We then use this to prove \cref{clm:probability-x-in-good-subset} which states that for a large enough set~$S\subseteq\{0,1\}^n$, a random subset $T\subseteq S$ of full Hilbert dimension will contain each element $x\in S$ with almost the same probability. Finally, we present a proof of \cref{thm:main-extractor} which crucially relies on \cref{clm:probability-x-in-good-subset}.

\begin{claim}\label{clm:probability-good subset}
Let $1\leq d\leq n$, $d\leq \ell$, and $S\subseteq\{0,1\}^n$. Let $T$ be a uniformly random subset of $S$ of size $\binom{\ell}{\leq d}$. Then 
\[
\Pr_{T}\left[\h_T(d, \F_2)=|T|=\binom{\ell}{\leq d}\right] \geq 1-\binom{\ell}{\leq d}\cdot2^{\ell}/|S|\;.
\]
\end{claim}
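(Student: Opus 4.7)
The plan is to sample the elements of $T$ sequentially from $S$ and, using \cref{cor:bound-span}, bound at each step the probability that the newly drawn element lies in the degree-$d$ closure of the elements already chosen. Writing $m = \binom{\ell}{\leq d}$, the starting observation is that a set $T \subseteq \F_2^n$ of size $m$ satisfies $\h_T(d,\F_2) = m$ if and only if, for \emph{any} ordering $x_1,\dots,x_m$ of its elements, one has $x_i \notin \cl_d(\{x_1,\dots,x_{i-1}\})$ for every $i \in [m]$. The nontrivial direction uses the fact that if $\h_T(d,\F_2) = |T|$ then every subset $T' \subseteq T$ also has $\h_{T'}(d,\F_2) = |T'|$, since the restriction map from degree-$d$ polynomials to $\F_2^T$ is already surjective and therefore remains surjective after any coordinate projection.

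It is cleanest to work in the iid sampling model. Let $x_1,\dots,x_m$ be drawn independently and uniformly from $S$, and let $A_i$ be the event $x_i \in \cl_d(\{x_1,\dots,x_{i-1}\})$. Since the set $\{x_1,\dots,x_{i-1}\}$ always has cardinality at most $i-1 < m = \binom{\ell}{\leq d}$, \cref{cor:bound-span} gives $|\cl_d(\{x_1,\dots,x_{i-1}\})| < 2^\ell$ pointwise, and hence $\Pr[A_i \mid x_1,\dots,x_{i-1}] \leq 2^\ell/|S|$. A union bound over $i \in [m]$ then yields $\Pr[\bigcup_{i=1}^{m} A_i] \leq m \cdot 2^\ell/|S|$.

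The last step is to translate this estimate back to uniform sampling from $\binom{S}{m}$. Whenever $\bigcap_i A_i^c$ holds, the $x_i$'s are necessarily distinct (a repeat would put a later element in an earlier prefix, and hence in that prefix's closure) and $\{x_1,\dots,x_m\}$ has full Hilbert dimension by the opening observation; conversely, each ``good'' $m$-subset contributes all $m!$ of its orderings to $\bigcap_i A_i^c$. Letting $G$ denote the number of good $m$-subsets of $S$, we therefore get $m!\,G/|S|^m \geq 1 - m \cdot 2^\ell/|S|$, while the target probability equals $G/\binom{|S|}{m} = m!\,G / \prod_{j=0}^{m-1}(|S|-j) \geq m!\,G/|S|^m$, from which the claimed bound follows. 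I do not anticipate any real obstacle; the only subtlety is this passage between sampling models, which is resolved by the inequality $\prod_{j=0}^{m-1}(|S|-j)\leq |S|^m$.
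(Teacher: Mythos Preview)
Your proof is correct and follows essentially the same approach as the paper: reveal the elements of $T$ sequentially, invoke \cref{cor:bound-span} to bound the probability at each step that the new element falls into the degree-$d$ closure of the previous ones, and union bound. The only difference is that the paper samples the sequence directly without replacement, whereas you sample iid with replacement and then convert via the inequality $\prod_{j=0}^{m-1}(|S|-j)\leq |S|^m$; this detour is harmless but also unnecessary, since sampling without replacement gives $\Pr[a_{j+1}\in\cl_d(T^j)\mid T^j]=\bigl(|\cl_d(T^j)\cap S|-j\bigr)/(|S|-j)\leq |\cl_d(T^j)|/|S|$ directly.
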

\begin{proof}
    For a random subset $T=\left\{a_1,a_2,\dots,a_{\binom{\ell}{\leq d}}\right\}$ and $j\in\left[\binom{\ell}{\leq d}\right]$, let $T^j=\left\{a_1,\dots,a_j\right\}$. We~prove that with high probability, 
    $\h_{T^{j+1}}(d, \F_2)=\h_{T^{j}}(d, \F_2)+1$ holds for every~$j$. Indeed, for every $j<\binom{\ell}{\leq d}$, we have that
\begin{align*}
\Pr\left[\h_{T^{j+1}}(d, \F_2)=\h_{T^{j}}(d,\F_2)+1\right]
        &=\Pr\left[a_{j+1}\notin \cl_d(T^j)\right]\\
        &=1-|\cl_d(T^j)|/|S|\\
        &\geq 1-2^{\ell}/|S| \;,
\end{align*}
where the last inequality uses \cref{cor:bound-span}. The claim now follows by the union bound over all~$j$. 
\end{proof}

\begin{lemma}\label{clm:probability-x-in-good-subset}
Let $1\leq d\leq n$, $d\leq \ell$, and $S\subseteq\{0,1\}^n$. Let $T$ be a uniformly random subset of $S$ of size $\binom{\ell}{\leq d}$. Then for every $x\in S$,
\[
(1-\delta)\cdot\frac{\binom{\ell}{\leq d}}{|S|}
\leq
\Pr_T\left[x\in T \mid \h_T(d,\F_2)=|T|\right]
\leq
\frac{1}{(1-\delta)}\cdot\frac{\binom{\ell}{\leq d}}{|S|}\;,
\]
where $\delta=\binom{\ell}{\leq d}\cdot2^{\ell}/|S|$.
\end{lemma}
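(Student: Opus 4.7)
The plan is to apply Bayes' rule to reduce the conditional probability to quantities that are either trivial or handled by the preceding Claim. Let $m=\binom{\ell}{\leq d}$ and let $G$ denote the event $\{\h_T(d,\F_2)=|T|\}$. Since $T$ is a uniformly random $m$-subset of $S$, we have $\Pr[x\in T]=m/|S|$, and the preceding claim already gives $\Pr[G]\ge 1-\delta$. Bayes' rule says
\[
\Pr[x\in T\mid G]\;=\;\frac{\Pr[G\mid x\in T]\cdot\Pr[x\in T]}{\Pr[G]}.
\]

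For the upper bound, I would just discard $\Pr[G\mid x\in T]\le 1$, combine with $\Pr[G]\ge 1-\delta$, and get the stated $\tfrac{1}{1-\delta}\cdot\tfrac{m}{|S|}$. For the lower bound I need to show $\Pr[G\mid x\in T]\ge 1-\delta$, and then use $\Pr[G]\le 1$. The idea is to repeat the argument behind the previous claim in a \emph{conditional} form. Conditioned on $x\in T$, the set $T$ is distributed as $\{x\}\cup T''$ with $T''$ a uniformly random $(m-1)$-subset of $S\setminus\{x\}$. Expose the elements of $T$ as an ordered sample $a_1=x,a_2,\dots,a_m$ drawn without replacement from $S$, and let $T^j=\{a_1,\dots,a_j\}$. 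Then $\h_T(d,\F_2)=m$ fails only if some $a_{j+1}\in\cl_d(T^j)$ for $0\le j<m$.

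Since $|T^j|<m=\binom{\ell}{\leq d}$, \cref{cor:bound-span} gives $|\cl_d(T^j)|<2^{\ell}$. Conditioned on $a_1,\dots,a_j$, the next element $a_{j+1}$ is uniform over the remaining $|S|-j$ elements of $S$, so
\[
\Pr\bigl[a_{j+1}\in\cl_d(T^j)\,\big|\,a_1,\dots,a_j\bigr]\;\le\;\frac{2^{\ell}-j}{|S|-j}\;\le\;\frac{2^{\ell}}{|S|},
\]
the second inequality holding whenever $|S|\ge 2^{\ell}$ (the only nontrivial regime, since otherwise $\delta\ge 1$ and the claim is vacuous). A union bound over $j=1,\dots,m-1$ yields $\Pr[G^c\mid x\in T]\le (m-1)\cdot 2^{\ell}/|S|<\delta$, so $\Pr[G\mid x\in T]\ge 1-\delta$, and plugging back into Bayes finishes the lower bound.

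There is no real obstacle; the only subtlety to handle cleanly is that when we condition on $x\in T$ and expose elements in the order $a_1=x,a_2,\dots,a_m$, the denominator at step $j+1$ is $|S\setminus\{x\}|-(j-1)=|S|-j$, which is exactly what is needed to recover the same per-step bound $2^{\ell}/|S|$ as in the unconditional claim and therefore the same $\delta$.
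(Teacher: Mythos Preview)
Your proof is correct and follows essentially the same approach as the paper: apply Bayes' rule, use \cref{clm:probability-good subset} to bound $\Pr[G]\ge 1-\delta$ for the upper bound, and rerun the same union-bound argument with $x$ inserted first to obtain $\Pr[G\mid x\in T]\ge 1-\delta$ for the lower bound. You are in fact slightly more careful than the paper in tracking the conditional sampling denominator $|S|-j$ and noting why the regime $|S|<2^\ell$ (i.e.\ $\delta\ge 1$) is vacuous, but the overall structure is identical.
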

\begin{proof}
By the Bayes rule, we have 
\begin{align*}\Pr_T[x\in T \mid \h_T(d,\F_2)=|T|]
&=\frac{\Pr[\h_T(d,\F_2)=|T| \mid x\in T]\cdot\Pr[x\in T]}{\Pr[\h_T(d,\F_2)=|T|]}\\
&=\frac{\Pr[\h_T(d,\F_2)=|T| \mid x\in T]}{\Pr[\h_T(d,\F_2)=|T|]}\cdot
\frac{\binom{\ell}{\leq d}}{|S|}
\;.
\end{align*}

For the upper bound, by \cref{clm:probability-good subset}, $\Pr[\h_T(d,\F_2)=|T|]\geq 1- \binom{\ell}{\leq d}\cdot2^{\ell}/|S|=1-\delta$. Thus, 
\[
\Pr[x\in T\ |\ \h_T(d,\F_2)=|T|]\leq \frac{1}{(1-\delta)}\cdot\frac{\binom{\ell}{\leq d}}{|S|}\;.
\]

For the lower bound, first include $x$ in~$T$, and then randomly pick the remaining $\binom{\ell}{\leq d}-1$ elements in~$T$. Analogous to the analysis in \cref{clm:probability-good subset}, we have 
$\Pr_T[\h_T(d,\F_2)=|T|\ |\ x\in T]\geq 1- \binom{\ell}{\leq d}\cdot2^{\ell}/|S|=1-\delta$.
Thus, 
\[
\Pr_T[x\in T\ |\ \h_T(d,\F_2)=|T|] \geq (1-\delta)\cdot\frac{\binom{\ell}{\leq d}}{|S|} \;,
\]
which concludes the proof.
\end{proof}

Equipped with \cref{clm:probability-x-in-good-subset}, we are ready to present the proof of \cref{thm:main-extractor}.
% \begin{theorem}\label{thm:main-extractor}
% There is a constant $C>0$ such that the following holds. Let $\eps>0$ be a real and let $d\geq 0$ and $n\geq k\geq \ell \geq 0$ be integers such that $\binom{\ell}{\leq d}2^\ell /2^{k/2} <\eps/20n$. Then for every distribution $X$ over $\{0,1\}^n$ with min-entropy $\geq k$,  with probability $1-4(n+\log(4/\eps)2^n \cdot e^{-C\eps^2\binom{\ell}{\leq d}/n^2}$ a uniform-random degree-$d$ polynomial $f$ is an $\epsilon$-extractor for $X$, i.e. 
%     \[ 
% \Pr_{x\sim X}[f(x)=1] = \frac{1}{2}\pm \eps\;. 
%     \]
% \end{theorem}
\begin{theorem}\label{thm:main-extractor}
Let $n,d,k\geq1$, and $\eps>0$ be a real. Then for every distribution $\mathbf{X}$ over $\{0,1\}^n$ with $\ent(\mathbf{X})\geq k$,  a uniformly random degree-$d$ polynomial $f$ is an $\eps$-extractor for $\mathbf{X}$,
    \[ 
\Pr_{x\sim \mathbf{X}}[f(x)=1] = \frac{1}{2}\pm \eps\;
    \]
with probability at least $1-e^{3n-\eps^2\binom{\ell}{\leq d}/(Cn^2)}$
where $\ell=k/2-\log(32n/\eps)$ and $C=7\cdot(32)^2$.
\end{theorem}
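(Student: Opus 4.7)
My plan is to decompose $\mathbf{X}$ into flat $k$-sources, construct for each of them an almost-uniform covering by size-$m$ subsets of full Hilbert dimension, apply Hoeffding's inequality on each covering piece, and combine to bound the overall bias of $f$ on the flat source. For the reduction, a standard Carath\'eodory argument on the polytope of min-entropy-$k$ distributions over $\{0,1\}^n$ (whose vertices are exactly the uniform distributions on $2^k$-sized sets) writes $\mathbf{X}$ as a convex combination of at most $2^n$ flat $k$-sources. Since $\Pr[f(\mathbf{X})=1]-1/2$ is the corresponding convex combination of biases on the flat pieces, a union bound over $\leq 2^n\leq e^n$ flat sources reduces the theorem to the following: for any fixed $S\subseteq\{0,1\}^n$ with $|S|=2^k$, a uniformly random $f\in\Poly{2}{n,d}$ is $\eps$-unbiased on $S$ with probability at least $1-e^{2n-\eps^2\binom{\ell}{\leq d}/(C'n^2)}$ for a suitable constant $C'$.

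For the covering, fix such $S$, set $m=\binom{\ell}{\leq d}$ with the stated $\ell$, and draw $t$ independent uniform random subsets $T_1,\dots,T_t\subseteq S$ of size $m$, each conditioned on having full Hilbert dimension. By \cref{clm:probability-good subset}, this conditioning succeeds with probability at least $1-\delta$ for $\delta=m\cdot 2^\ell/|S|\leq(\eps/(32n))^2$ by the specific choice of $\ell$; \cref{clm:probability-x-in-good-subset} then gives $\Pr[x\in T_i]\in (1\pm O(\delta))\cdot m/|S|$ for every $x\in S$. I would choose $t$ so that $\tau:=tm/|S|$ is a sufficiently large multiple of $n/\eps^2$; a Chernoff bound (\cref{chernoff}) combined with a union bound over the $\leq 2^n$ elements of $S$ then guarantees the existence of a covering in which every $x\in S$ is contained in $c_x\in(1\pm\eps/8)\tau$ of the $T_i$'s. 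Fix any such covering.

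Only now does the randomness in $f$ enter. Because each $T_i$ has full Hilbert dimension, $f|_{T_i}$ is uniform on $\F_2^{T_i}$ as $f$ varies over $\Poly{2}{n,d}$, so $B_i:=\sum_{x\in T_i}f(x)\sim \mathrm{Bin}(m,1/2)$. Hoeffding's inequality (\cref{hoeffding}) gives $|B_i-m/2|\leq(\eps/8)m$ per $T_i$ except with probability $\leq 2e^{-\eps^2m/32}$, and a union bound over $i\in[t]$ then controls all $B_i$ simultaneously except with probability $2t\cdot e^{-\eps^2m/32}$. On this joint event, writing $\sum_i B_i=\sum_{x\in S}c_x f(x)$ with $c_x\in(1\pm\eps/8)\tau$ and comparing to $\sum_i B_i=tm/2\pm(\eps/8)tm$, a short computation yields $|\Pr_{x\in_u S}[f(x)=1]-1/2|\leq\eps$. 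The main obstacle is arithmetic: one has to verify that the four failure probabilities (conditional sampling, coverage uniformity, per-$T_i$ Hoeffding, and the flat-source reduction) compose into the stated $1-e^{3n-\eps^2\binom{\ell}{\leq d}/(Cn^2)}$ with $C=7\cdot 32^2$. Since $t\leq\exp(O(n))$ contributes only $O(n)$ to $\log t$ inside the Hoeffding exponent and fits within the $e^{3n}$ slack, I expect no conceptual difficulty beyond this bookkeeping.
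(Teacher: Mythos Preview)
Your proposal is correct and actually takes a cleaner route than the paper. The paper does \emph{not} reduce to flat sources; instead it bucketizes the support of $\mathbf{X}$ into layers $S_i=\{x:2^{-i-1}<p_x\le 2^{-i}\}$ for $k\le i\le 2n$, bounds the contribution of layers with very small total mass or very small $|S_i|$ trivially, and then applies the covering-by-full-Hilbert-dimension-subsets argument to each remaining $S_i$ with accuracy parameter $\eps'=\eps/(32n)$ (so that the $\le 2n$ layer contributions sum to $\le\eps$). Your approach replaces this layer decomposition by a Carath\'eodory reduction to at most $2^n$ flat $k$-sources, then runs the identical covering/Hoeffding argument once per flat source with the full $\eps$ budget.

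The covering step, the use of \cref{clm:probability-good subset} and \cref{clm:probability-x-in-good-subset}, and the Hoeffding-on-full-dimension-pieces argument are essentially identical in both proofs. What your route buys is that, since you never split $\eps$ across layers, your Hoeffding exponent is $\Theta(\eps^2\binom{\ell}{\leq d})$ rather than $\Theta((\eps/n)^2\binom{\ell}{\leq d})$; your bound therefore has no $n^2$ in the denominator and strictly dominates the stated one (the $e^n$ cost of the Carath\'eodory union bound and the $\log t=O(n)$ from the covering are both absorbed in the $3n$ slack). What the paper's route buys is self-containment: it avoids invoking the vertex structure of the min-entropy polytope, at the price of the layer bookkeeping and the quantitatively weaker exponent.
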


\begin{proof}
We will assume that  $\ell\coloneqq k/2-\log(32n/\eps)>0$ (and, in particular, that $\eps\geq2^{-n}$) as otherwise the theorem statement holds trivially. Similarly, we will assume that $3n-\eps^2\binom{\ell}{\leq d}/(Cn^2)<0$, and, in particular, that $2^{n+1} \cdot e^{-(\eps/(32n))^2 \cdot{\binom{\ell}{\leq d}}/7}<1$.

Let $\mathbf{X}$ be a distribution over $\{0,1\}^n$ with $\ent(\mathbf{X})\geq k$, and let $S=\mathrm{support}(\mathbf{X})\subseteq \{0,1\}^n$. For $x\in S$, let $p_x=\Pr[\mathbf{X}=x]$, so that $p_x\leq 2^{-k}$ and $\sum_{x\in S}p_x = 1$. For every $i\geq k$, define 
\[
S_i\coloneqq \{x\in S : 2^{-i-1}< p_x\leq 2^{-i}\}\;,
\]
and note that $|S_i|\leq 2^{i+1}$. 
We will show that with high probability a random degree-$d$ polynomial~$f$ satisfies 
\begin{align}\label{eq:mainEqExtr}
\sum_{i\geq k}\left|\sum_{x\in S_i} p_x f(x)-\frac{1}{2}\sum_{x\in S_i} p_x\right| \leq \eps \;.
\end{align}
Given \cref{eq:mainEqExtr}, we finish the proof of the theorem as follows.
\begin{align*}
\left|\Pr_{x\sim X}[f(x)=1]-1/2\right| 
&= 
\left|\sum_{x\in S} p_x f(x)-1/2\right|\\
&=\left|\sum_{x\in S} p_x f(x)-\frac{1}{2}\sum_{x\in S} p_x\right|\\
&\leq \sum_{i\geq k}\left|\sum_{x\in S_i} p_x f(x)-\frac{1}{2}\sum_{x\in S_i} p_x\right| \\
&\leq \eps
\;.
\end{align*}
It remains to prove \cref{eq:mainEqExtr}, and we will do this by analyzing the terms for different values of~$i$ separately. 

\paragraph{Large $i$.} Let $S_L$ be the union of all $S_i$ with $i\geq 2n+1$. Note that 
\[
\sum_{x\in S_L} p_x \leq 2^n \cdot 2^{-2n-1} \leq \eps/2 \;, 
\]
where we used that $\eps\geq 2^{-n}$.
Thus, in the rest of the proof we will only deal with $k\leq i\leq 2n$. Specifically, we will show that for each $i$ in this interval,
$\left|\sum_{x\in S_i} p_x f(x)-\frac{1}{2}\sum_{x\in S_i} p_x\right|\leq 8\eps'$ for 
\[
\eps'\coloneqq \frac{\eps}{32n}\;.
\]
We will assume that $\eps\leq 1/2$ (as the theorem statement is trivial otherwise), and, thus, $\eps'\leq 1/64$.

\paragraph{Small $S_i$.} 
%Let us define $\ell\coloneqq k/2-\log(1/\eps')$.
If $|S_i|\leq 2^{2\ell+1}/\eps'$, then 
\[
\sum_{x\in S_i}p_x \leq 2^{2\ell+1-i}/\eps' \leq 2^{2\ell+1-k}/\eps'\leq \eps'\;,
\]
where we used that $i\geq k$ and $\ell= k/2-\log(32n/\eps)=k/2-\log(1/\eps')$.
%\snote{make sure $\ell> 0$ or explain what happens otherwise}

\paragraph{Small~$i$ and large~$S_i$.} Let $I$ be the set of $i\leq 2n$ such that $|S_i|> 2^{2\ell+1}/\eps'$. 

Fix an $i\in I$ and let $t=|S_i|$. We independently and uniformly at random pick subsets $T_1,\dots, T_t$ of $S_i$ of size $|T_j|=\binom{\ell}{\leq d}$ satisfying $\h_{T_j}(d,\F_2)=|T_j|$. For any $x\in S_i$, let $n_x$ denote the number of sets $T_j$ containing $x$: $n_x=|\{j\in[t]\colon x\in T_j\}|$. Note that by \cref{clm:probability-x-in-good-subset}, 
\[
(1-\delta)\cdot \binom{\ell}{\leq d} \leq \E[n_x] \leq \frac{1}{(1-\delta)}\cdot \binom{\ell}{\leq d}
\;,\]
where $\delta=\binom{\ell}{\leq d}\cdot2^{\ell}/|S_i|$. Since $|S_i|> 2^{2\ell+1}/\eps'$, we have that $\delta \leq\eps'/2\leq 1/2$, and
\[\E[n_x] = (1\pm\eps')\cdot\binom{\ell}{\leq d}\;.\]

Applying the Chernoff bound for the concentration of $n_x$ and the union bound over all $t$ choices of $x\in S_i$, we have that
\begin{align*}
\Pr_{T_1,\dots, T_t} \left[ \exists x\in S_i, \ n_x\notin (1\pm 2\eps')\cdot\binom{\ell}{\leq d}\right]\leq  2 t \cdot e^{-\eps'^2 \cdot{\binom{\ell}{\leq d}}/7} \leq 2^{n+1} \cdot e^{-(\eps/(32n))^2 \cdot{\binom{\ell}{\leq d}}/7} <1\;.
\end{align*}
In particular, there exists a choice of $T_1,\ldots,T_t$ such that $\h_{T_j}(d,\F_2)=|T_j|$ for all $j\in[t]$, and each $n_x=(1\pm 2\eps')\cdot\binom{\ell}{\leq d}$. Let us fix this choice of sets $T_1,\ldots,T_t$.

For a fixed set $T_j$ such that $\h_{T_j}(d,\F_2)=|T_j|$, a random degree-$d$ polynomial~$f$ satisfies
\begin{align}\label{eq:fullHDim}
\Pr_{f\in_u \Poly{2}{n,d}}\left[\left| \sum_{x\in T_j} p_x f(x) - \frac{1}{2}\sum_{x\in T_j}  p_x \right|\geq \frac{\eps' \binom{\ell}{\leq d}}{2^i}\right]\leq 2e^{-2\eps'^2 \binom{\ell}{\leq d}} \;.
\end{align}
Indeed, $\h_{T_j}(d,\F_2)=|T_j|$ implies that a random degree-$d$ polynomial~$f$ induces a random map  $f|_{T_j}\colon T_j\rightarrow \{0,1\}$ (see, e.g., the proof of \cref{thm:pr-nonconst}). \cref{eq:fullHDim} now follows from the Hoeffding bound, as the random variables $2^i p_xf(x)$ are $\binom{\ell}{\leq d}$ independent $[0,1]$-valued random variables with mean $2^{i-1} p_x$. 

Taking the union bound over all $j\in[t]$, we have that 
\begin{align*}%\label{eq:prob2}
\Pr_{f\in_u \Poly{2}{n,d}}\left[\forall j\in[t], \ \left| \sum_{x\in T_j} p_x f(x) - \frac{1}{2}\sum_{x\in T_j}  p_x \right|\leq \frac{\eps' \binom{\ell}{\leq d}}{2^i}\right]\geq 1-2te^{-2\eps'^2 \binom{\ell}{\leq d}} 
%\geq 1-2t \cdot e^{-\eps'^2 \cdot{\binom{\ell}{\leq d}}/7}
\;.
\end{align*}
%\snote{Perhaps we don't need to do the union bound over \cref{eq:prob1,eq:prob2} as one of them is over the choices of~$T$, while the other one is over the choices of~$f$. Maybe we just say there exists a good choice of~$T$s, fix it, and then just use the bound from \cref{eq:prob2}}

Therefore, using $t\leq 2^n$, with probability at least 
\[
%1-4t \cdot e^{-\eps'^2 \cdot{\binom{\ell}{\leq d}}/7}\geq 
1-2^{n+1}\cdot e^{-2\eps'^2 \cdot{\binom{\ell}{\leq d}}}\;,
\] we simultaneously have that 
(i) for every $x\in S_i$, $n_x=(1\pm 2\eps')\cdot\binom{\ell}{\leq d}$, and (ii) for every $j\in[t]$,
\[
\left| \sum_{x\in T_j} p_x f(x) - \frac{1}{2}\sum_{x\in T_j}  p_x \right|\leq \frac{\eps' \binom{\ell}{\leq d}}{2^i}\;.
\]
Conditioning on this good event, we get

\begin{align*}
\sum_{x\in S_i} p_x f(x) &= \sum_{x\in S_i} \frac{n_x}{n_x} p_x f(x)\\ &=
\frac{1}{(1\pm 2\eps')\cdot\binom{\ell}{\leq d}}\cdot \sum_{j\in [t]}\sum_{x\in T_j} p_x f(x)\\ &=
\frac{1}{(1\pm 2\eps')\cdot\binom{\ell}{\leq d}}\cdot \sum_{j\in [t]}\left(\left(\frac12\sum_{x\in T_j} p_x\right) \pm \frac{\eps' \binom{\ell}{\leq d}}{2^i}\right)
\\ &=
\frac{1}{(1\pm 2\eps')\cdot\binom{\ell}{\leq d}}\cdot \left(\frac12\sum_{x\in S_i}n_xp_x\right)\pm \frac{\eps' t}{(1\pm 2\eps') \cdot2^i}
\\ &=
\frac{(1\pm 2\eps')}{(1\pm 2\eps')}\cdot \left(\frac12\sum_{x\in S_i}n_xp_x\right)\pm \frac{2\eps'}{(1\pm 2\eps')}
\\ &=
\left(\frac{1}{2}\sum_{x\in S_i}p_x \right)\pm 8\eps',
\end{align*}
where the penultimate equality uses $t=|S_i|\leq 2^{i+1}$.

Finally, we apply the union bound over all $i\in I$. Since $|I|\leq 2n$, combining the contributions from small $S_i$'s and large $i$'s, we conclude  that 
\[
\left| \frac{1}{2}\sum_{x\in S_i}p_x - \sum_{x\in S_i} p_x f(x)\right| \leq \frac{\eps}{2}+ 8\eps'\cdot|I|\leq \eps\;
\]
with probability at least
\[
1-n\cdot 2^{n+2} \cdot e^{-2\eps'^2 \cdot{\binom{\ell}{\leq d}}}
\geq 1 - e^{3n-\eps^2 \cdot{\binom{\ell}{\leq d}}/(512n^2)}
\geq 1 - e^{3n-\eps^2 \cdot{\binom{\ell}{\leq d}}/(Cn^2)}
\;.\qedhere
\]
%\satya{Do we want to use frac here instead. So that the fraction in the exponent makes more sense?}
\end{proof}

\section{Low-Degree Extractors}\label{sec:extractors}
In this section, we extend the results of \cref{sec:dispersers} to the setting of extractors. We start with the extractors version of \cref{cor:disp} in \cref{thm:extr}, where we show that low-degree polynomials extract from small families of sources. Then, in \cref{thm:extrForSources}, we use \cref{thm:extr} to prove the existence of low-degree extractors for a number of families of sources. Finally, in \cref{sec:multBits}, we prove the existence of low-degree extractors with multi-bit outputs.

% \begin{theorem}\label{thm:extr}
% Let $t\geq1$ be a constant, $n$ be large enough, and $\mathcal{X}$ be a family of distributions of min-entropy $\geq k=n^{1/t}$ over $\{0,1\}^n$. Let $\mathcal{Y}$ be a family of distributions each of which is $\eps'$-close to a convex combination of distributions from~$\mathcal{X}$.
% Then for every $6t\leq d \leq n^{1/(2t)}$, a uniformly random polynomial $p\in \Poly{2}{n,d}$ is an $\eps$-extractor for $\mathcal{Y}$ with probability at least 
% \[
% 1-|\mathcal{X}|\cdot e^{-10n^{d/(2t)-2}}\;
% \]
% for $\eps=\left(2d/n^{1/(4t)}\right)^d+\eps'$.
% \end{theorem}
\begin{theorem}\label{thm:extr}
Let $\mathcal{X}$ be a family of distributions of min-entropy $k\geq5\log{n}$ over $\{0,1\}^n$ for large enough~$n$. Let $\mathcal{Y}$ be a family of distributions each of which is $\eps'$-close to a convex combination of distributions from~$\mathcal{X}$.
Then for every $d\geq6$, a uniformly random polynomial $p\in \Poly{2}{n,d}$ is an $\eps$-extractor for $\mathcal{Y}$ with probability at least 
\[
1-|\mathcal{X}|\cdot e^{3n-30k^{d/2}/n^2}\;
\]
for $\eps=\left(2d/k^{1/4}\right)^d+\eps'$.
\end{theorem}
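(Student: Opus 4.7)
The plan is to reduce to the fixed-source extractor guarantee (\cref{thm:main-extractor}), take a union bound over $\mathcal{X}$, and then lift the conclusion from $\mathcal{X}$ to $\mathcal{Y}$ by standard arguments about convex combinations and statistical distance.

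First, set $\eps_0 \coloneqq (2d/k^{1/4})^d$, so that $\eps = \eps_0 + \eps'$. For each fixed $\mathbf{X} \in \mathcal{X}$ with $\ent(\mathbf{X}) \geq k$, \cref{thm:main-extractor} applied with $\eps_0$ tells us that a uniformly random $p \in \Poly{2}{n,d}$ is an $\eps_0$-extractor for $\mathbf{X}$ except with probability at most $e^{3n - \eps_0^2 \binom{\ell}{\leq d}/(Cn^2)}$, where $\ell = k/2 - \log(32n/\eps_0)$ and $C = 7 \cdot 32^2$. The key technical step is to show that this exponent is at most $e^{3n - 30 k^{d/2}/n^2}$, i.e.\ that $\eps_0^2\binom{\ell}{\leq d}/C \geq 30 k^{d/2}$. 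Using the elementary bound $\binom{\ell}{\leq d} \geq (\ell/d)^d$, the hypothesis $k \geq 5\log n$ (which absorbs the $\log(32n)$ term), and the fact that $\log(32n/\eps_0) = \log(32n) + \tfrac{d}{4}\log k - d\log(2d)$ is bounded by a small constant fraction of $k$ for $d\geq 6$ and large enough $n$, one obtains $\ell \geq k/c$ for an absolute constant $c$. Plugging in $\eps_0^2 = (2d)^{2d}/k^{d/2}$ then gives
\[
\frac{\eps_0^2\binom{\ell}{\leq d}}{C} \;\geq\; \frac{(2d)^{2d}}{k^{d/2}} \cdot \frac{(k/(cd))^d}{C} \;=\; \frac{(4d/c)^d}{C}\, k^{d/2},
\]
and the constant $(4d/c)^d/C$ exceeds $30$ once $d\geq 6$. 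This is the main calculation and, aside from fiddling with the constants, the rest of the argument is routine.

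Second, a union bound over the at most $|\mathcal{X}|$ sources in $\mathcal{X}$ shows that, with probability at least $1 - |\mathcal{X}|\cdot e^{3n - 30 k^{d/2}/n^2}$, a random $p \in \Poly{2}{n,d}$ is simultaneously an $\eps_0$-extractor for every $\mathbf{X} \in \mathcal{X}$. Fix any such polynomial $p$ and any $\mathbf{Y} \in \mathcal{Y}$. By hypothesis, there is a distribution $\mathbf{Y}^\ast = \sum_i \lambda_i \mathbf{X}_i$ (with $\mathbf{X}_i \in \mathcal{X}$, $\lambda_i \geq 0$, $\sum_i \lambda_i = 1$) such that $\Delta(\mathbf{Y}, \mathbf{Y}^\ast) \leq \eps'$. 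Since $p$ outputs a single bit, statistical distance controls the bias gap, giving $|\Pr[p(\mathbf{Y}){=}1] - \Pr[p(\mathbf{Y}^\ast){=}1]| \leq \eps'$. Linearity and the triangle inequality then yield
\[
\Bigl|\Pr[p(\mathbf{Y}){=}1] - \tfrac{1}{2}\Bigr| \;\leq\; \eps' + \sum_i \lambda_i \Bigl|\Pr[p(\mathbf{X}_i){=}1] - \tfrac{1}{2}\Bigr| \;\leq\; \eps' + \eps_0 \;=\; \eps,
\]
so $p$ is an $\eps$-extractor for $\mathcal{Y}$, which completes the proof.

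The conceptually nontrivial input is \cref{thm:main-extractor}; everything here is a wrapper around it. The main obstacle is purely bookkeeping: choosing the constants so that the exponent produced by \cref{thm:main-extractor} cleanly absorbs into $30 k^{d/2}/n^2$ under the modest hypotheses $d \geq 6$ and $k \geq 5\log n$. In particular, the verification that $\log(32n/\eps_0)$ is small compared to $k/2$ relies on both hypotheses acting together, and the choice $d \geq 6$ is exactly what makes $(4d/c)^d/C \geq 30$ hold for the constants inherited from \cref{thm:main-extractor}.
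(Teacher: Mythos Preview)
Your proposal is correct and follows essentially the same approach as the paper: invoke \cref{thm:main-extractor} with error $\eps_0=(2d/k^{1/4})^d$, simplify the exponent using $\binom{\ell}{\leq d}\geq(\ell/d)^d$ together with $\ell\geq k/c$ (the paper takes $c=3$), union-bound over $\mathcal{X}$, and then pass to $\mathcal{Y}$ via the convex-combination/statistical-distance argument. The paper's write-up is terser on the last step (it simply asserts that a $\delta$-extractor for $\mathcal{X}$ is an $\eps$-extractor for $\mathcal{Y}$), but your unpacking of that inference is exactly what is meant.
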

\begin{proof}
Let $\delta=\eps-\eps'=\left(2d/k^{1/4}\right)^d$. We first bound the probability that a random polynomial is a $\delta$-extractor for a fixed source $\mathbf{X} \in \mathcal{X}$, and then apply the union bound over all sources from $\mathcal{X}$.  

Let $\ell=k/2-\log(32n/\delta)\geq k/3$ for all large enough~$n$. Let $C=7\cdot(32)^2$.
By \cref{thm:main-extractor}, we have that the probability that a random degree-$d$ polynomial is \emph{not} a $\delta$-extractor for a fixed source $\mathbf{X}\in\mathcal{X}$ is at most 
\begin{align*}
e^{3n-\delta^2\binom{\ell}{\leq d}/(Cn^2)}
&\leq e^{3n - (2d)^{2d}\cdot k^{-d/2}\cdot(\ell/d)^d/(Cn^2)}\\
& \leq e^{3n-30k^{d/2}/n^2}\;.
%& \leq e^{-10k^{d/2}/n^2} \;.
%& \leq e^{-10n^{-d/(6t)}} \;.
\end{align*}
Now the union bound over all sources from $\mathcal{X}$ gives us that a random degree-$d$ polynomial is a $\delta$-extractor for every source in $\mathcal{X}$ with probability at least
\[
1-|\mathcal{X}|\cdot e^{3n-30k^{d/2}/n^2}\;.
\] 
Finally, each polynomial that $\delta$-extracts from $\mathcal{X}$ is also an $\eps$-extractor for all sources in $\mathcal{Y}$.
\end{proof}

We will use the following input-reduction result from~\cite{chattopadhyay2024extractors}.

\begin{theorem}[{\cite[Theorem 4.1]{chattopadhyay2024extractors}}]\label{thm:inputRedExtr}
Let $m,n,k\in\N$, $k>1$, and $f\colon\F_2^m\to\F_2^n$ be a function. If $\ent(f(\mathbf{U}_m))\geq k$, then there exist affine maps $L_1,\ldots,L_t\colon\F_2^{11k}\to\F_2^m$ such that the distribution  $f(\mathbf{U}_m)$ is $2^{-k}$-close to a convex combination of distributions $f\left(L_i(\mathbf{U}_{11k})\right)$. Moreover, for each $i\in[t]$,
\[
\ent\left(f\left(L_i(\mathbf{U}_{11k})\right)\right)\geq k-1\;.
\]
\end{theorem}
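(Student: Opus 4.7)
\textbf{Proof plan for \cref{thm:inputRedExtr}.} The plan is to produce the $L_i$'s as the parameterizations of the cosets of a \emph{random} linear map $M:\F_2^m\to\F_2^{m-11k}$. Concretely, pick a uniformly random surjective linear map $M$ (or, equivalently, a uniformly random linear subspace $W\subseteq\F_2^m$ of dimension $11k$, together with a random shift), and for each $y\in\F_2^{m-11k}$ let $V_y=M^{-1}(y)$, which is an affine subspace of dimension $11k$; choose any affine bijection $L_y:\F_2^{11k}\to V_y$. Sampling $\mathbf{U}_m$ is exactly the same as first sampling $y\sim\mathbf{U}_{m-11k}$ (namely $y=M(\mathbf{U}_m)$) and then sampling $L_y(\mathbf{U}_{11k})$ uniformly on $V_y$, so $f(\mathbf{U}_m)$ is \emph{literally} the uniform convex combination $\E_y\,f(L_y(\mathbf{U}_{11k}))$. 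This takes care of the convex combination part; the work is in the min-entropy guarantee.

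The key step is to show that for a good choice of $M$ the set $B\subseteq\F_2^{m-11k}$ of ``bad'' $y$'s (those for which $\ent(f(L_y(\mathbf{U}_{11k})))<k-1$) has measure at most $2^{-k}$. A $y$ is bad iff some $x\in\F_2^n$ satisfies $|f^{-1}(x)\cap V_y|>2^{11k}/2^{k-1}=2^{10k+1}$. The min-entropy hypothesis $\ent(f(\mathbf{U}_m))\geq k$ gives $|f^{-1}(x)|\leq 2^{m-k}$ for every $x$, and for a random coset $\E_y|f^{-1}(x)\cap V_y|=|f^{-1}(x)|\cdot2^{11k-m}\leq 2^{10k}$, exactly a factor of $2$ below the threshold. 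I will use the fact that a random linear hash produces a pairwise (in fact uniformly) independent coset partition, combine Chebyshev (or better, a Chernoff-style bound for the indicator variables $\mathbf{1}[z\in V_y]$ as $z$ ranges over $f^{-1}(x)$), and then union bound over the set of $x$'s that actually contribute. The crucial observation that keeps the union bound under control is that we only need to worry about $x$'s with $|f^{-1}(x)|$ not too small: $x$'s with $|f^{-1}(x)|\leq 2^{m-11k-O(k)}$ cannot even have $2^{10k+1}$ preimages in any $V_y$, and the number of $x$'s with $|f^{-1}(x)|$ above this threshold is at most $2^{O(k)}$ since their total probability mass is at most $1$.

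Assembling these pieces, the $M$-averaged probability that any $y$ is bad is at most $2^{-2k}$ (with plenty of slack), hence the expected mass of $B$ is at most $2^{-2k}$, so by Markov there exists $M$ for which $|B|/2^{m-11k}\leq 2^{-k}$. Fixing such an $M$ and discarding the $y\in B$ gives a convex combination of at most $t=2^{m-11k}$ distributions $f(L_i(\mathbf{U}_{11k}))$, each with min-entropy at least $k-1$ by construction, which is $2^{-k}$-close to $f(\mathbf{U}_m)$ in statistical distance.

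The main obstacle is obtaining the right concentration: a straightforward second-moment / Chebyshev argument gives the statement with coset dimension roughly $3k$, and the extra slack in the constant $11$ is what allows the argument to be carried out with only pairwise independence and a clean union bound. The somewhat wasteful $11k$ should therefore be regarded as convenient rather than tight; the proof does not need to optimize the constant.
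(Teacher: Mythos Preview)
This theorem is not proved in the paper; it is quoted from \cite[Theorem~4.1]{chattopadhyay2024extractors} and used as a black box, so there is no in-paper proof to compare your proposal against.

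That said, your plan is correct and is the standard random-linear-hashing argument underlying such input-reduction lemmas. One small simplification: you do not need the bucketing step that separates ``heavy'' $x$'s from light ones. With pairwise independence, Chebyshev gives
\[
\Pr_V\bigl[|f^{-1}(x)\cap V|>2^{10k+1}\bigr]\;\le\;\frac{s_x\cdot 2^{11k-m}}{2^{20k}},\qquad s_x:=|f^{-1}(x)|,
\]
and summing this over \emph{all} $x$ uses only $\sum_x s_x=2^m$ to get a bound of $2^{-9k}$ on the expected fraction of bad cosets. So the second-moment route already succeeds with plenty of room at dimension $11k$ (and, as you correctly note, essentially at $3k$), without any truncation of light preimages. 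The discarding/renormalizing step is also fine: by coupling on the event that the sampled coset is good, the statistical distance is at most the mass of bad cosets.
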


We are now ready to prove that low-degree polynomials extract from many sources of interest.
\begin{theorem}\label{thm:extrForSources}
For all $\ell,d\geq1$, and all large enough~$n$, there exists $p\in \Poly{2}{n,d}$ that is an $\eps$-extractor for the following families of sources over $\{0,1\}^n$ of min-entropy $k\geq5\log{n}$ for ${\eps=2\left(2d/k^{1/4}\right)^d}$.
\begin{itemize}
\item $\ell$-local sources for $k\geq (2^\ell n^3\log{n})^{2/d}$.
\item depth-$\ell$ decision forest sources for $k\geq (2^\ell n^3(\log{n}+\ell))^{2/d}$.
\item degree-$\ell$ sources for $k\geq(3^\ell n)^{\frac{6}{d-2\ell}}$.
\item $n^\ell$-size circuit sources for $k\geq 3n^{\frac{4(\ell+1)}{d-4}}$.
\end{itemize}
\end{theorem}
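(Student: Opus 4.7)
The plan is to invoke \cref{thm:extr} separately for each of the four source families, choosing the auxiliary family $\mathcal{X}$ appropriately so that $|\mathcal{X}|\cdot e^{3n-30k^{d/2}/n^2} < 1$ under the stated entropy assumption, and so that the closeness parameter $\eps'$ is dominated by $(2d/k^{1/4})^d$. Since the error $\eps$ from \cref{thm:extr} is $(2d/k^{1/4})^d+\eps'$, any choice with $\eps' \leq (2d/k^{1/4})^d$ yields the claimed $\eps=2(2d/k^{1/4})^d$.

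For the first two families ($\ell$-local sources and depth-$\ell$ decision forest sources), the number of sources is already small enough to plug directly into \cref{thm:extr} with $\mathcal{Y}=\mathcal{X}$ and $\eps'=0$. Specifically, I would take the bounds $|\mathcal{X}|\leq 2^{2^\ell n+2\ell n\log n}$ and $|\mathcal{X}|\leq 2^{(\ell+\log n)2^{\ell+1}n}$ from \cref{prop:countingSources}, so that the condition $\log|\mathcal{X}|+3n < 30k^{d/2}/(n^2\ln 2)$ (needed to make the failure probability less than $1$) reduces after elementary manipulation to $k^{d/2}\gtrsim 2^\ell n^3\log n$ and $k^{d/2}\gtrsim 2^\ell n^3(\ell+\log n)$ respectively, which is exactly what the hypotheses on $k$ guarantee.

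For degree-$\ell$ sources and $n^\ell$-size $\ACp$ circuit sources, the input length $m$ can be arbitrarily large, so I first invoke the input-reduction result \cref{thm:inputRedExtr}: every source $f(\mathbf{U}_m)$ in the family is $2^{-k}$-close to a convex combination of ``reduced'' sources of the form $f(L_i(\mathbf{U}_{11k}))$, each of min-entropy at least $k-1$. Let $\mathcal{X}$ be the family of these reduced sources, so the given family $\mathcal{Y}$ is $\eps'=2^{-k}$-close to convex combinations of $\mathcal{X}$; under our entropy assumption we have $2^{-k}\leq (2d/k^{1/4})^d$, so the error budget is respected. For degree-$\ell$ sources, the reduced source $f\circ L_i$ is still a degree-$\ell$ map, but now on $11k$ inputs, so by \cref{prop:countPolys} we have $|\mathcal{X}|\leq 2^{n\binom{11k}{\leq\ell}}\leq 2^{n(11k)^\ell}$; the condition on $k$ then becomes $k^{d/2-\ell}\gtrsim 11^\ell n^3$, which after comparing constants ($3^6\geq 11^2$) is implied by $k\geq (3^\ell n)^{6/(d-2\ell)}$. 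For circuit sources, the additional technical step (borrowed from the proof of \cref{thm:dispCkts}) is to replace $C\circ L_i$ by a genuine $\ACp$ circuit $C'$ of size at most $11k\cdot n^\ell$ on $11k$ inputs, by folding each affine form $L_i$ into the fan-in of each gate of $C$; then \cref{prop:countPolys} gives $|\mathcal{X}|\leq 2^{O(k^2 n^{2\ell})}$, and the condition on $k$ becomes $k^{d/2-2}\gtrsim n^{2\ell+2}$, which is exactly $k\geq 3n^{4(\ell+1)/(d-4)}$ up to absorbed constants.

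The main nuisance (rather than a genuine obstacle) will be carefully bookkeeping the circuit-composition step and verifying that the $\eps'=2^{-k}$ slack from input reduction is harmless relative to $(2d/k^{1/4})^d$ in all four cases; once that is established, the four bullet points reduce to plugging in numerical estimates and solving the resulting inequality $k^{d/2}/n^2\gtrsim \log|\mathcal{X}|+3n$ for $k$.
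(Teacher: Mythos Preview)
Your proposal is correct and follows essentially the same approach as the paper: direct application of \cref{thm:extr} with the counting bounds from \cref{prop:countingSources} for the first two families, and input reduction via \cref{thm:inputRedExtr} followed by \cref{prop:countPolys} (together with the circuit-composition trick from \cref{thm:dispCkts}) for the last two. The paper's proof is exactly this outline, with the same bookkeeping of $\eps'=2^{-k}$ against $(2d/k^{1/4})^d$ that you flag.
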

\begin{proof}
By \cref{prop:countingSources}, the number of $\ell$-local sources over $\{0,1\}^n$ is at most $2^{2^\ell n + 2\ell n\log n}$, and the number of depth-$\ell$ decision forest sources is at most $2^{(\ell+\log n)2^{\ell+1} n}$. 
Now \cref{thm:extr} implies the result for these classes of sources.

Let $\mathcal{Y}$ be the family of degree-$\ell$ sources. We first apply \cref{thm:inputRedExtr}, and obtain the family $\mathcal{X}$ of degree-$\ell$ sources such that each source in $\mathcal{Y}$ is $2^{-k}$-close to a convex combination of sources in $\mathcal{X}$. Moreover, each source in $\mathcal{X}$ is a degree-$\ell$ polynomial in $11k$ variables and has entropy $\geq k-1$.
By \cref{prop:countPolys}, the number of such sources is bounded from above by $|\mathcal{X}|\leq 2^{n\cdot\binom{11k}{\leq \ell}}$. Now, \cref{thm:extr} guarantees the existence of a degree-$d$ $\eps'$-extractor for $\mathcal{Y}$ for $k\geq(3^\ell n)^{\frac{6}{d-2\ell}}$ and $\eps'=\left(2d/k^{1/4}\right)^d+2^{-k}\leq\eps$, where the last inequality uses $d\leq k^{1/4}$.
%Since each source in $\mathcal{X}$ is $2^{-n}$-close to a source in $\mathcal{Y}$, we have an $\eps$-extractor for $\mathcal{X}$ for $\eps=\eps'+2^{-n}\leq \eps' + e^{-n^{1/4}/(2e)} \leq \eps'+\left(2d/n^{1/4}\right)^d\leq 2\eps'$.

When $\mathcal{Y}$ is the family of $n^\ell$-size circuit sources, using \cref{thm:inputRedExtr} and the argument from \cref{thm:dispCkts}, we obtain the family $\mathcal{X}$ of $(11k n^\ell)$-size circuits with $11k$ inputs and entropy $k-1$. Moreover, each source from $\mathcal{Y}$ is $2^{-k}$-close to a convex combination of sources from $\mathcal{X}$. By \cref{prop:countPolys}, $|\mathcal{X}|\leq 2^{(27k)^2 n^{2\ell}}$. Now \cref{thm:extr} guarantees the existence of a degree-$d$ $\eps'$-extractor for $\mathcal{Y}$ for $k\geq 3n^{\frac{4(\ell+1)}{d-4}}$ and $\eps'=\left(2d/k^{1/4}\right)^d+2^{-k}\leq\eps$. 
%Finally, this implies a degree-$d$ $\eps$-extractor for $\mathcal{X}$ for $\eps=2\eps'$.
\end{proof}
% We remark that our results for polynomial and circuit sources utilize the simple input reduction procedure from \cref{lem:subspaceSampling}. A more advanced version of such a reduction procedure is proven in \cite[Theorem~4.1]{chattopadhyay2024extractors}: instead of obtaining a family of sources that depend on $O(n)$ input, the result of they obtain sources depending on $O(k)$ inputs. Plugging in the reduction of \cite{chattopadhyay2024extractors} in the proof of \cref{thm:extrForSources} for the case of polynomial sources will improve the min-entropy requirement from $k\geq n^{\Omega(\ell/d)}$ to $k\geq n^{\Omega(1/d)}$ for $d>\Omega(\ell)$ and a wide range of values of $\eps$. The improvement for the circuit sources will not be significant.

\subsection{Extractors Outputting Multiple Bits}\label{sec:multBits}
In \cref{thm:extr-multi-output}, we show how to extend our single-bit extractors for small families of sources to the multi-bit setting, which combined with input-reduction lemma, will extend all our single-bit extractors from \cref{thm:extrForSources} to $O(k)$-bit extractors.

\begin{theorem}\label{thm:extr-multi-output}
Let $\mathcal{X}$ be a family of distributions of min-entropy $k\geq5\log{n}$ over $\{0,1\}^n$ for large enough~$n$. Let $\mathcal{Y}$ be a family of distributions each of which is $\eps'$-close to a convex combination of distributions from~$\mathcal{X}$.
Then for every $d\geq 6$ and $t<k$, let $p_1,\dots, p_t \in \Poly{2}{n,d}$ be independent and uniformly random polynomials. Then $p=(p_1,\ldots, p_t)$ is a $t\eps$-extractor for $\mathcal{Y}$ with probability at least 
\[
1-|\mathcal{X}|\cdot e^{3n+t+1-30(k-2t)^{d/2}/n^2}\;
\]
for $\eps=\left(2d/k^{1/4}\right)^d+\eps'$, assuming $\eps\leq 1/4$. 
\end{theorem}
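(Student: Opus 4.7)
The plan is a standard hybrid argument combined with Theorem~\ref{thm:main-extractor} applied to appropriate conditional sources. First I would reduce to the case $\mathcal{Y} = \mathcal{X}$: if $\mathbf{Y} \in \mathcal{Y}$ is $\eps'$-close to a convex combination $\mathbf{Z} = \sum_j \lambda_j \mathbf{X}_j$ of sources from $\mathcal{X}$, then the triangle inequality and convexity of statistical distance give $\Delta(p(\mathbf{Y}), \mathbf{U}_t) \leq \eps' + \sum_j \lambda_j \Delta(p(\mathbf{X}_j), \mathbf{U}_t)$. Setting $\delta := (2d/k^{1/4})^d$, it therefore suffices to show $\Delta(p(\mathbf{X}), \mathbf{U}_t) \leq t\delta$ for every $\mathbf{X} \in \mathcal{X}$, since then the total error against $\mathbf{Y}$ is at most $\eps' + t\delta \leq t\eps' + t\delta = t\eps$.

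Next, fix $\mathbf{X} \in \mathcal{X}$ and introduce the hybrids $H_i := (p_1(\mathbf{X}), \ldots, p_i(\mathbf{X}), \mathbf{U}_{t-i})$ with independent uniform $\mathbf{U}_{t-i}$, so that $H_0 = \mathbf{U}_t$, $H_t = p(\mathbf{X})$, and by the triangle inequality
\[
\Delta(H_t, H_0) \leq \sum_{i=1}^t \Delta(H_i, H_{i-1}) = \sum_{i=1}^t \mathbb{E}_v\bigl|\Pr[p_i(\mathbf{X}) = 1 \mid (p_1, \ldots, p_{i-1})(\mathbf{X}) = v] - 1/2\bigr|.
\]
Conditioning on a realization of $p_1, \ldots, p_{i-1}$, I would call a value $v$ \emph{good} if $p_v := \Pr[(p_1, \ldots, p_{i-1})(\mathbf{X}) = v] \geq 2^{-2t}$ and \emph{bad} otherwise. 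There are at most $2^{i-1} \leq 2^{t-1}$ values of $v$ in total, so the bad $v$'s have combined mass at most $2^{-t-1}$ and contribute at most $2^{-t-2}$ to the $i$-th summand. For each good $v$, the conditional source $\mathbf{X}_v := \mathbf{X} \mid (p_1, \ldots, p_{i-1})(\mathbf{X}) = v$ has min-entropy at least $k - 2t$, since $\Pr[\mathbf{X}_v = x] \leq 2^{-k}/p_v \leq 2^{-(k-2t)}$. Applying Theorem~\ref{thm:main-extractor} to $\mathbf{X}_v$ with entropy parameter $k-2t$ and error $\delta$, a uniformly random $p_i$ fails to be a $\delta$-extractor for $\mathbf{X}_v$ with probability at most $\exp(3n - \delta^2 \binom{\ell}{\leq d}/(Cn^2))$, where $\ell = (k-2t)/2 - \log(32n/\delta)$ and $C = 7 \cdot 32^2$.

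Finally, I would union bound over at most $2^{t-1}$ good $v$'s, the $t$ hybrid steps, and $|\mathcal{X}|$ sources. The total failure probability is at most
\[
|\mathcal{X}| \cdot t \cdot 2^{t-1} \cdot \exp\bigl(3n - \delta^2 \tbinom{\ell}{\leq d}/(Cn^2)\bigr) \leq |\mathcal{X}| \cdot \exp\bigl(3n + t + 1 - \delta^2 \tbinom{\ell}{\leq d}/(Cn^2)\bigr),
\]
using $t \cdot 2^{t-1} \leq e^{t+1}$ (valid for all $t \geq 1$). As in the proof of Theorem~\ref{thm:extr}, for $n$ large enough one has $\ell \geq (k-2t)/3$ and hence $\delta^2 \binom{\ell}{\leq d}/(Cn^2) \geq 30(k-2t)^{d/2}/n^2$, matching the exponent in the theorem statement. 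Conditional on the good event, every $\Delta(H_i, H_{i-1}) \leq \delta + 2^{-t-2}$, so $\Delta(p(\mathbf{X}), \mathbf{U}_t) \leq t\delta + t \cdot 2^{-t-2}$ for every $\mathbf{X} \in \mathcal{X}$. The main delicate point is disposing of the additive $t \cdot 2^{-t-2}$ slack contributed by the bad $v$'s across the $t$ hybrids; this is absorbed either into the $\eps'$ gap in the definition of $\eps = \delta + \eps'$, or (in the $\eps' = 0$ regime) by tightening the per-step single-source target from $\delta$ to $\delta - 2^{-t-2}$, which only perturbs the numerical constants hidden in the exponent.
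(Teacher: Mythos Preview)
Your overall strategy—reducing to $\mathcal{X}$, running a hybrid argument, and union-bounding Theorem~\ref{thm:main-extractor} over conditional sources—matches the paper's. The one substantive difference is how you lower-bound the min-entropy of the conditional source $\mathbf{X}_v$. You use a good/bad split on $p_v$, which forces you to carry the additive slack $t\cdot 2^{-t-2}$ from the rare conditionings. Your proposed absorption of this slack is not quite sound: tightening the per-step target to $\delta - 2^{-t-2}$ fails whenever $\delta < 2^{-t-2}$, which occurs for small $t$ as soon as $k$ is large enough to make $(2d/k^{1/4})^d$ small—and the exponent $30(k-2t)^{d/2}/n^2$ in the statement is explicit, so there are no hidden constants to absorb a constant-factor loss in $\delta$. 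Absorbing into $\eps'$ likewise requires $(t-1)\eps'\ge t\cdot 2^{-t-2}$, which is not guaranteed.

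The paper sidesteps this entirely by an inductive argument that exploits the hypothesis $\eps\le 1/4$. Letting $E_i$ be the event that $p_i$ is an $\eps$-extractor for \emph{every} conditional source obtained from some $\mathbf{X}\in\mathcal{X}$ by fixing $p_1,\dots,p_{i-1}$, one shows that on $E_1\wedge\cdots\wedge E_{i-1}$ each conditioning event $p_j(\mathbf{X})=b_j$ has probability at least $1/2-\eps\ge 1/4$; Bayes' rule then gives $\ent(\mathbf{X}_v)\ge k-2(i-1)$ for \emph{every} $v$ of positive mass, so there are no bad $v$'s at all. One bounds $\Pr[E_{i+1}\mid E_{\le i}]$ via Theorem~\ref{thm:extr} applied to the family of size $2^{i+1}|\mathcal{X}|$ and multiplies via the chain rule, yielding exactly the stated failure probability and error $t\eps$ with no leftover slack. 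Your good/bad split can be patched (e.g., by making the ``bad'' threshold scale with $\delta$ and tracking the resulting entropy $k-t-\log(1/\delta)$), but the paper's inductive route is the clean way to close the gap and is precisely why the assumption $\eps\le 1/4$ appears in the statement.
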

\begin{proof}
Define $\mathcal{X}_i$ to be the family of sources resulting by conditioning sources $\mathbf{X}\in \mathcal{X}$ on $(p_1,\ldots, p_i)= (b_1,\ldots, b_i)\in \F_2^n$ for any $(b_1,\ldots, b_i)\in \{0,1\}^i$, so that $\mathcal{X}_0=\mathcal{X}$, and $|\mathcal{X}_i|=2^i |\mathcal{X}|$. Let $E_i$ be the event that $p_i$ is an $\eps$-extractor for $\mathcal{X}_i$, and let $E_{\leq i}=E_1\wedge \dots \wedge E_i$. 

Note that, conditioned on $E_{\leq i}$, every source in $\mathcal{X}_{i+1}$ consists of $\leq 2^{i+1}|\mathcal{X}|$ sources of min-entropy $\geq k-2i$. This can be shown by an induction. The base case is true since $\mathcal{X}_1=\mathcal{X}$, and $\mathcal{X}$ has min-entropy $\geq k$. For the inductive step, let $\mathbf{X}'\in \mathcal{X}_{i+1}$ be obtained by conditioning $\mathbf{X}\in \mathcal{X}_i$ on $p_i=b$ for some $b\in \{0,1\}$. Now by the Bayes rule, we have for every $x$ in the support of $\mathbf{X}'$
\[
\Pr[\mathbf{X}'=x] = \Pr[\mathbf{X}=x \vert p_i(X)=b] = \frac{\Pr[p_i(\mathbf{X})=b\vert \mathbf{X}=x]\cdot \Pr[\mathbf{X}=x]}{\Pr[p_i(\mathbf{X})=b]}\leq \frac{2^{-(k-2(i-1))}}{\frac{1}{2}-\epsilon} \leq 2^{-(k - 2i)},
\]
as long as $\eps\leq \frac{1}{4}$, which shows that $\mathbf{X}'$ has min-entropy at least $k-2i$ as desired.

Thus, by \cref{thm:extr}, 
\[
\Pr[E_{i+1} \vert E_{\leq i}] \geq  1-|\mathcal{X}_{i+1}|\cdot e^{3n-30(k-2t)^{d/2}/n^2} \geq 1-2^{i+1}|\mathcal{X}|\cdot e^{3n-30(k-2t)^{d/2}/n^2}\;.
\]
Thus by the chain rule, we have
\begin{align*}
\Pr[E_1\wedge \cdots \wedge E_t] &= \prod_{i=0}^{t-1} \Pr[E_{i+1} \vert E_{\leq{i}}]\\ 
&\geq \prod_{i=0}^{t-1} \left(1-2^{i+1}|\mathcal{X}|\cdot e^{3n-30(k-2t)^{d/2}/n^2} \right)\\
&\geq 1- \sum_{i=0}^{t-1} \left(2^{i+1}|\mathcal{X}|\cdot e^{3n-30(k-2t)^{d/2}/n^2} \right)\\
&\geq 1- |\mathcal{X}|\cdot e^{t+1+3n-30(k-2t)^{d/2}/n^2}\;.
\end{align*}
Next, we analyze the error of the extractor conditioned on the above event $E_{\leq t}$. Let $\mathbf{X}\in \mathcal{X}$ be a fixed source. Let $(u_1,u_2,\ldots, u_t)$ be uniformly distributed over $\{0,1\}^t$ independent of $\mathbf{X}$ and $p_1,\ldots, p_t$, and define random variables $D_0, \ldots, D_t$ as follows. For every $i$, define 
\[
D_i=\left(p_1(\mathbf{X}), p_2(\mathbf{X}),\dots, p_i(\mathbf{X}), u_{i+1}, \ldots, u_t\right). 
\]
Note that $D_0$ is the uniform distribution and $D_t=p(\mathbf{X})$ is the output of the random degree-$d$ polynomial conditioned on $E_{\leq t}$. By the triangle inequality, we have
\[
\Delta(D_0, D_t) \leq \sum_{i=0}^{t-1} \Delta(D_i, D_{i+1}).
\]
Thus it suffices to bound each $\Delta(D_i, D_{i+1})$ by $\epsilon$, which can be obtained by noting that conditioning $\mathbf{X}$ on any value of $f_1,\ldots, f_{i}$ results in a source that belongs to $\mathbf{X}_{i}$ for which $f_{i+1}$ is an $\epsilon$-extractor. Since $\Delta(D_i, D_{i+1})$ is a convex combination of the bias of $f_{i+1}$ for such fixings, we obtain $\Delta(D_i, D_{i+1})\leq \epsilon$ as desired. 
\end{proof}

\bibliographystyle{alpha}
\bibliography{main}

\end{document}